\let\epsilon=\varepsilon
\newcommand{\mat}[1]{\begin{pmatrix} #1 \end{pmatrix}}
\newcommand{\set}[1]{ \left \{ #1 \right \}}
\newtheorem*{thm*}{Theorem}
\newtheorem{defi}[prop]{Definition}
\numberwithin{equation}{section}
\newcommand*{\ddd}{\mathop{}\!\mathrm{d}}
\title{Particle Trajectories for Quantum Maps}
\author{Yonah Borns-Weil}
\email{yonah\_borns-weil@berkeley.edu}
\address{Department of Mathematics, University of California, Berkeley,
CA 94720}
\author{Izak Oltman}
\email{ioltman@berkeley.edu}
\address{Department of Mathematics, University of California, Berkeley,
CA 94720}
\keywords{}
\subjclass[]{}
\begin{document}
\begin{abstract}
We study the trajectories of a semiclassical quantum particle under repeated indirect measurement by Kraus operators, in the setting of the quantized torus. In between measurements, the system evolves via either Hamiltonian propagators or metaplectic operators. We show in both cases the convergence in total variation of the quantum trajectory to its corresponding classical trajectory, as defined by the propagation of a semiclassical defect measure. This convergence holds up to the Ehrenfest time of the classical system, which is larger when the system is ``less chaotic". In addition, we present numerical simulations of these effects.

In proving this result, we provide a characterization of a type of semi-classical defect measure we call uniform defect measures. We also prove derivative estimates of a function composed with a flow on the torus.
\end{abstract}
\maketitle


\section{Introduction}\label{s:intro}

\begin{figure}
    \centering
    \includegraphics[width = \textwidth]{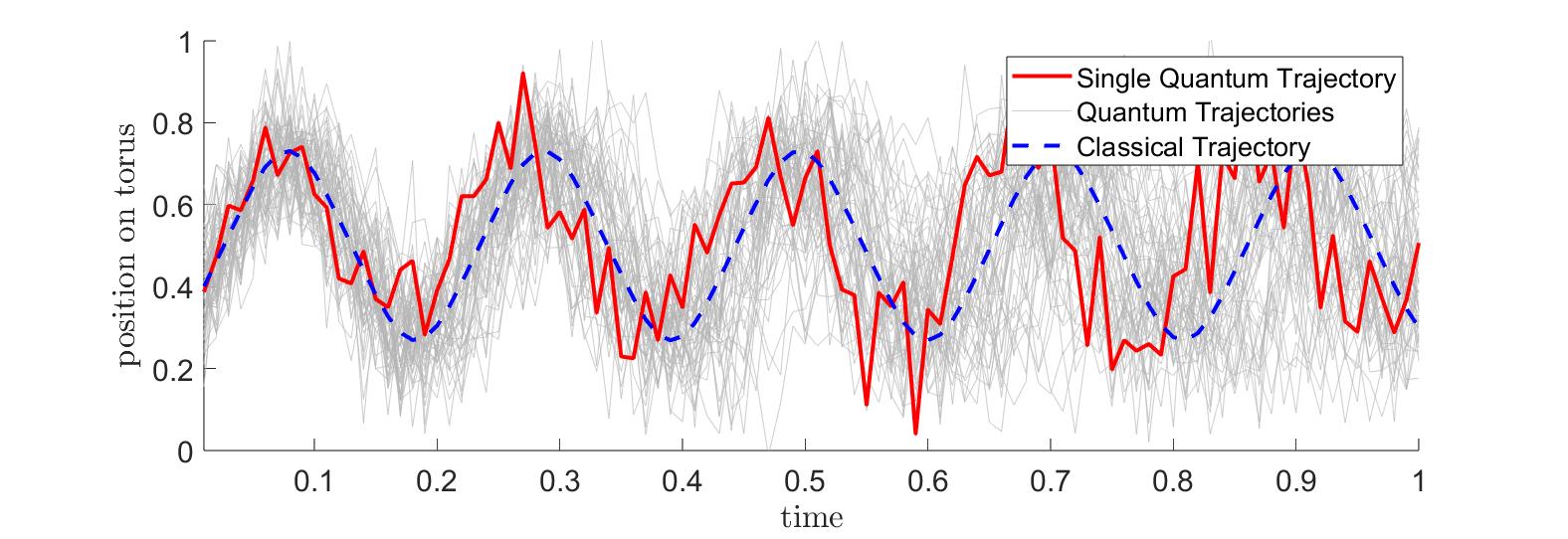}
    \caption{A numerical simulation demonstrating this paper's main result. Each grey curve represents the trajectory of a quantum particle (under the evolution procedure discussed in this paper) on a torus with identical initial conditions and a fixed Hamiltonian. A single trajectory is plotted in red and the corresponding classical trajectory is plotted as a blue dotted line. Up until time $0.4$, most of the positions of the quantum particles are within $0.1$ of the classical trajectory, but shortly after become decoherent. More numerics are provided in \S \ref{s:numerics}. This simulation is discussed at the end of \S \ref{ss:model} and \S \ref{s:numerics}.}
    \label{fig:fig_first_page}
\end{figure}

The framework of ``quantum trajectories" has gotten increasing attention in recent years. Instead of treating measurement as a one-time event, researchers consider many measurements obtained over time and study their effects on the particle being measured, described mathematically via quantum instruments. The sequence of measured values is called the \emph{quantum trajectory}, and is a random variable worthy of study.

In this paper, we take quantum trajectories into the setting of the semiclassical quantized torus, as studied in Bouzouina–De Bi\'evre \cite{BD}, Schenck \cite{Schenck}, Dyatlov–Jezequel \cite{DJ}, and others. The spaces involved are all finite-dimensional, which makes numerical simulations easier than in PDE-based models. Our result is inspired by the recent work of Benoist, Fraas, and Fr\"ohlich \cite{BFF}, who prove in the PDE setting that the trajectory of a repeatedly observed quantum particle in the semiclassical limit approaches the natural notion of a classical trajectory. See Figure \ref{fig:fig_first_page} for a simulation of this quantum-classical correspondence of particle trajectories on the quantized torus and \S \ref{s:numerics} for more numerics. We strengthen the result in \cite{BFF} in our setting by allowing the time to depend on the semiclassical parameter, and we relate the rate of convergence to the ``chaotic behavior" of the underlying classical system. This matches the intuition that a less chaotic transformation should be well-approximated for longer by its classical counterpart. Additionally, we illustrate the results numerically to demonstrate in the chaotic case that our time restriction is essentially optimal.

\subsection{Model}\label{ss:model}

In our setting, a quantum particle is modeled by a density operator (see Definition \ref{def:desnity operator}) $ \rho_N$ on the quantized $2d-$dimensional torus $H_N$ (see \eqref{eq:quantizedTorus}) where $N \in \mathbb {N}$ is proportional to the reciprocal of the semi-classical parameter $h$. This particle is ``indirectly measured'' by conjugating by Kraus operators (see Definition \ref{def:Kraus operators}) $\Op_N (f_q)$ where $\{f _q \} _{q \in \Omega}$ are fixed elements of $C^\infty (\mathbb {T}^{2d})$ with uniformly bounded derivatives (such that $\Op_N(f_q)$ provide a resolution of the identity), $\Omega$ is a compact metric space with finite Borel measure $\nu$, and $\Op_N$ is the quantization of functions on the torus as described in \S \ref{ss:quantizedtorus}. The measurement procedure is further described in \S \ref{s:measurement procedure}.

After observing the particle, we evolve it for one unit of time by conjugating $\rho_N$ by a unitary operator $U$, which is either a quantization of a fixed symplectic matrix $M$ (defined in \S \ref{s:Fourier integral operators}) or Schr\"odinger evolution $e^{-\frac{i}{h} P}$ for a fixed Hamiltonian $P$. If we repeat this process $n$ times, we get a resulting measure on the space of trajectories. Suppose $F \subset \Omega ^{n+1}$, then the probability of obtaining the set of trajectories $F$ is given by:
\begin{align*}
    \int_F\tr_{H_N}\left(\Phi_{N,{q_n}}^*\circ\dots\circ\Phi_{N,{q_0}}^*(\rho_N)\right)\ddd \nu(q_0)\cdots\ddd \nu(q_n)
\end{align*}
where
\begin{align*}
    \Phi_{N,{q_i}}^*(\rho):=U\Op_N(f_{q_i})\rho_N\Op_N(f_{q_i})^*U^*.
\end{align*}
Therefore, we have a ``quantum probability measure" given by \begin{equation}\label{eq:quantummeasure}
\bP^{(n)}_{N,\rho}:=\tr_{H_N}\left(\Phi_{N,{q_n}}^*\circ\dots\circ\Phi_{N,{q_0}}^*\rho\right)\ddd \nu(q_0)\cdots\ddd \nu(q_n)
\end{equation}

We now describe the trajectory of a corresponding classical particle. Let $\mu$ be the defect measure (defined in Definition \ref{def:defect measure}) of $\rho_N$. We interpret $\mu$ as the probability distribution of the initial position and momentum of a classical particle. We apply an ``approximate measurement," which computes an observable quantity $q$ to be in $E_0\subseteq \Omega$ with probability $\int_{E_0}|f_{q_0}(\zeta)|^2\ddd \nu(q_0).$ The particle is then allowed to classically evolve for one unit of time by a flow $\phi^t $ on $\mathbb T^{2d}$ which is either multiplication by the symplectic matrix $M$ or else the nonlinear Hamiltonian evolution $\text{exp}(H_p)$. We again repeat the process of alternating measurement and evolution, though we must remark that unlike in the quantum case, the measurement does not affect the location of the particle. Conditioned on initially having position/momentum $\zeta$, we have the probability of classically measuring $(q_0,\dots q_n)$ in a measurable set $F\subset \Omega^{n+1}$ to be $$\int_F|f_{q_n}(\phi^n(\zeta))|^2\cdots|f_{q_0}(\zeta)|^2\ddd \nu(q_0)\cdots\ddd \nu(q_n).$$ One can easily check that $(q_n)_n$ is a process of independent random variables (although not identically distributed). As the initial value of $\zeta$ was taken to be randomly chosen from the distribution $\mu$, we take the classical probability of measuring $(q_0,\dots ,q_n)\in F $ to be $$\int_{\T^{2d}}\int_F|f_{q_n}(\phi^n(\zeta))|^2\cdots|f_{q_0}(\zeta)|^2\ddd \nu(q_0)\cdots\ddd \nu(q_n)\ddd \mu(\zeta).$$

We therefore have a ``classical probability measure" given by \begin{equation}\label{eq:classicalmeasure}
P^{(n)}_{\mu}:=\int_{\T^{2d}}|f_{q_n}(\phi^n(\zeta))|^2\cdots|f_{q_0}(\zeta)|^2\ddd \mu(\zeta)\ddd \nu(q_0)\cdots\ddd \nu(q_n).\end{equation}
Note that the functions $f_q$ are allowed to overlap so that $P^{(n)}_{\mu}$ is a probability of trajectories in $\Omega^{n+1}$ while $\phi^t$ is the unique trajectory on the torus.

The goal of this paper is to describe in what sense, and at what quantitative rate, the measure $\bP^{(n)}_{N,\rho}$ approaches $P^{(n)}$ as $N$ grows large.

We can state our main result.

\begin{thm*}[Main result]\label{t:abridged}
Suppose $\rho=\rho_N:H_N\to H_N$ is a density operator with semiclassical defect measure $\mu$. Fix $n\in \N$, and let $(q_{N,0},q_{N,1}, \dots , q_{N,n}) $ be random variables with law $\bP_{N,\rho}^{(n)}$ given by \eqref{eq:quantummeasure} and let $(q_0, q_1,\dots , q_n)$ be random variables with law $P_{\mu}^{(n)}$ given by \eqref{eq:classicalmeasure}. Then as $N\to \infty$, $(q_{N,0},q_{N,1}, \dots , q_{N,n}) $ converges weakly to $(q_0, q_1,\dots , q_n)$.
\end{thm*}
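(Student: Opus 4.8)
Since $\Omega$ is a compact metric space, so is $\Omega^{n+1}$, and weak convergence of $(q_{N,0},\dots,q_{N,n})$ to $(q_0,\dots,q_n)$ is exactly the statement that $\int_{\Omega^{n+1}} g\,\ddd\bP_{N,\rho}^{(n)}\to\int_{\Omega^{n+1}}g\,\ddd P_{\mu}^{(n)}$ for every $g\in C(\Omega^{n+1})$. The span of product functions $g(q_0,\dots,q_n)=g_0(q_0)\cdots g_n(q_n)$ with $g_i\in C(\Omega)$ is a subalgebra of $C(\Omega^{n+1})$ that contains constants and separates points, hence is dense for the uniform norm by Stone--Weierstrass; since $\bP_{N,\rho}^{(n)}$ and $P_{\mu}^{(n)}$ are probability measures (the former because $U$ is unitary and the $\Op_N(f_q)$ resolve the identity, the latter because $\mu$ is a probability measure and $\int_\Omega|f_q|^2\,\ddd\nu(q)\equiv 1$), it suffices to prove convergence for a single product test function $g$, which I now fix.

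On the classical side, set $G_i(\zeta):=\int_\Omega g_i(q)\,|f_q(\zeta)|^2\,\ddd\nu(q)$. Because the $f_q$ have uniformly bounded derivatives, $\nu$ is finite, and $g_i$ is bounded, each $G_i\in C^\infty(\T^{2d})$ with seminorms controlled independently of everything, and Fubini gives $\int_{\Omega^{n+1}}g\,\ddd P_{\mu}^{(n)}=\int_{\T^{2d}}\prod_{i=0}^n(G_i\circ\phi^i)\,\ddd\mu$. On the quantum side, write $\Psi_{N,g}^{\dagger}(A):=\int_\Omega g(q)\,\Op_N(f_q)^*\,U^*AU\,\Op_N(f_q)\,\ddd\nu(q)$ for the adjoint of the weighted instrument in the trace pairing; iterating cyclicity of the trace converts \eqref{eq:quantummeasure} into
\begin{equation*}
\int_{\Omega^{n+1}}g\,\ddd\bP_{N,\rho}^{(n)}=\tr_{H_N}\!\big(\rho\,B_0\big),\qquad B_{n+1}=\mathrm{Id},\quad B_j:=\Psi_{N,g_j}^{\dagger}(B_{j+1})\ \ (j=n,\dots,0).
\end{equation*}

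The crux is the claim that $B_0=\Op_N\!\big(\prod_{i=0}^n(G_i\circ\phi^i)\big)+O(h)$ in operator norm. I would prove this by downward induction on $j$: assuming $B_{j+1}=\Op_N\!\big(\prod_{i=j+1}^n(G_i\circ\phi^{\,i-j-1})\big)+O(h)$ (the base case $B_{n+1}=\mathrm{Id}=\Op_N(1)$ being the empty product), apply Egorov's theorem --- exact in the metaplectic case, with an $O(h)$ remainder for Schr\"odinger evolution --- to get $U^*B_{j+1}U=\Op_N\!\big(\prod_{i=j+1}^n(G_i\circ\phi^{\,i-j})\big)+O(h)$, then the sandwiching/product rule $\Op_N(f_q)^*\Op_N(a)\Op_N(f_q)=\Op_N(|f_q|^2a)+O(h)$, and finally integrate over $q$ against $g_j\,\ddd\nu$; the factor with $i=j$ is $G_j=G_j\circ\phi^0$, produced precisely by this last integration, which advances the induction. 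Granting the claim, $\|\rho\|_{\mathrm{tr}}=1$ annihilates the $O(h)$, and the defining property of the semiclassical defect measure $\mu$ applied to the fixed symbol $\prod_{i=0}^n(G_i\circ\phi^i)\in C^\infty(\T^{2d})$ yields $\tr_{H_N}(\rho B_0)\to\int_{\T^{2d}}\prod_{i=0}^n(G_i\circ\phi^i)\,\ddd\mu$, which is exactly $\int g\,\ddd P_{\mu}^{(n)}$. Combined with the first paragraph, this finishes the proof.

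The step I expect to require the most care is the uniformity and stability of the $O(h)$ errors through the induction. One needs the remainder in the product rule to be uniform in $q\in\Omega$ so that it survives integration against the finite measure $\nu$ (this is where the uniformly bounded derivatives of $\{f_q\}$ are used), and one needs each intermediate symbol $\prod_i(G_i\circ\phi^{\,i-j})$ to lie in a fixed symbol class with bounds independent of $h$ --- which uses that $n$ is fixed, that $\phi^t$ is a smooth flow so that composition with it for a fixed time preserves the relevant symbol estimates (the derivative-of-composition bounds advertised in the abstract), and that $\Op_N$ is $L^2$-bounded (Calder\'on--Vaillancourt) with $\|U\|=1$, so composing errors with $U$ and with $\Op_N(f_q)$ does not amplify them. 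As only finitely many steps occur, the accumulated error remains $O(h)\to 0$.
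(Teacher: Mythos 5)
Your proof is correct, but it takes a genuinely different route from the paper's.

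The paper works pointwise in $(q_0,\dots,q_n)$: using cyclicity, exact or quantitative Egorov, and the $n$-fold product rule, it shows the Radon--Nikodym density of $\bP^{(n)}_{N,\rho}$ with respect to $\nu^{\otimes(n+1)}$ converges \emph{uniformly} in $q$ to that of $P^{(n)}_\mu$. This establishes convergence in total variation, which is strictly stronger than weak convergence and is what the paper states at the start of Theorem~\ref{t:torusquant}. The price is that the relevant symbol $|f_{q_0}|^2\cdots|f_{q_n}\circ\phi^n|^2$ depends on $q$, so the paper needs the defect measure to be uniform in the sense of Proposition~\ref{p:UDF_Equivalence}; on the torus this is automatic (Proposition~\ref{p:TorusDMUniform}), and the paper leans on it to get the final error estimate uniform in $q$.

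You instead test against a dense subalgebra of product functions, fold the $q_i$-integrations into the symbols $G_i = \int g_i(q)\,|f_q|^2\,\ddd\nu(q)$, and run a downward operator-valued induction with a single fixed, $h$-independent target symbol $\prod_i(G_i\circ\phi^i)$. This has a tidy payoff: because the final symbol does not depend on $q$, you only invoke the \emph{basic} defect-measure property \eqref{eq:DM}, so the uniformity machinery is entirely bypassed. The cost is that you only get weak convergence, not the total variation convergence the paper establishes --- though for the statement as posed this is exactly what is asked. Your auxiliary checks are also sound: $\bP^{(n)}_{N,\rho}$ is a probability measure by the resolution-of-identity property of the $f_q$ and unitarity of $U$; $P^{(n)}_\mu$ is a probability measure because $\mu$ is (Proposition~\ref{p:TorusDMUniform}); the errors stay $O(h)$ because $n$ is fixed, each $G_i$ lies in a fixed symbol class (finiteness of $\nu$, uniform bounds on the $f_q$, boundedness of $g_i$), composition with $\phi^j$ for $j\le n$ preserves these classes when $n$ is $h$-independent, and conjugation by the unitary $U$ and by the uniformly $L^2$-bounded $\Op_N(f_q)$ does not amplify remainders. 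One small remark: Proposition~\ref{p:Evolved_Symbol_Bound} in the paper quantifies the derivative growth of $a\circ\phi^t$ in terms of the Lyapunov exponent, but you do not need its full $h$-dependent strength here --- for fixed $n$ the crude bound suffices, as you correctly note.
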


In proving this, we show uniform convergence of $\bP_{N,\rho}^{(n)}$ to $P_{\mu}^{(n)}$ and hence $$\bP_{N,\rho}^{(n)}\xrightarrow{N\to\infty}P_{\mu}^{(n)}$$ in total variation. For a stronger version of Theorem \ref{t:abridged}, see Theorem \ref{t:torusquant}, in which we provide a quantitative rate of convergence in the case where the number of steps $n$ approaches the Ehrenfest time (defined in \ref{eq:Ehrenfest_time}). 

Figure \ref{fig:fig_first_page} numerically samples 60 quantum trajectories from such an evolution procedure (see the end of \S \ref{s:numerics} for more details) for $100$ time steps of $0.01$ units of time. Observe that up until $t =0.4$, most trajectories are within $0.1$ of the classical trajectory\footnote{Our result shows that before the Ehrenfest time, at each fixed time, the distribution of quantum positions is a Gaussian centered at the classical trajectory position with variance given by the precision of the indirect measurement.}, while shortly after this time, they disagree. This paper's main result estimates how long they agree, and to what distribution they agree with.
\subsection{Previous Work}

The understanding that a measurement affects a quantum state goes back to the early days of quantum mechanics. In his paper on the uncertainty principle, Heisenberg \cite{Heisenberg} (see \cite{Heisenberg_English} for an English translation) discussed the so-called ``collapse" of the wave function, which was later rephrased in mathematical terms by von Neumann \cite{VN}. When a system is only partially or indirectly measured, the necessary framework is that of positive operator-valued measures, which were first studied by Naimark \cite{Naimark}. The analog of wave-function collapse was introduced by Davies and Lewis \cite{DL} with the development of quantum instruments.

Further research focused on multiple measurements as a probabilistic process, which could be modeled in either discrete or continuous time. The first analysis of continuous-time measurements was due to Davies \cite{Davies}, and required extensive machinery from stochastic calculus (See \cite{BG} or \cite{Holevo} for a brief introduction). Meanwhile, the discrete-time model consisting of repeated applications of a quantum instrument, while requiring substantially fewer technicalities, has nonetheless demonstrated a rich variety of behavior and is a subject of current study. K\"ummerer and Maassen \cite{KM_ergodic} demonstrated ergodicity of a repeated measurement process and showed that the states approach a pure state provided the measurement operators do not have a ``dark" subspace \cite{KM_purification}. Ballesteros et al.\ \cite{BCFFS} show that when the dynamic is trivial, the state of the system localizes in space.

If the state is allowed to evolve in-between measurements, the situation becomes more complex. The mathematical analysis of such a quantum trajectory, which approximates a classical one, was done by Ballesteros et al.\ \cite{BBFF} following extensive physical evidence of the phenomenon, see the exposition by Figari \cite{Figari}. That paper studied particles, initially in the set of normal states, evolving under a quadratic Hamiltonian. The motivation for our work came from the study of the semiclassical case under a more general Hamiltonian by Benoist, Fraas, and Fr\"ohlich in \cite{BFF}. We adapt their framework to the setting of quantum maps.
This has two advantages: we can improve on the number of measurements exhibiting classical/quantum correspondence and we can provide (more easily than in the PDE setting) numerical simulations. In particular, the numerics indicate the accuracy of the dynamical bound on the number of measurements.

\section{Background}\label{s:background}

In this section, we provide a background on the measurement procedure as well as the semiclassical analysis tools required to prove our main result. Throughout this paper, we consider $h  = (2\pi N)^{-1} >0$ as the semiclassical parameter where $N \in \mathbb N$.

\subsection{Semiclassical Analysis on the Real Line}\label{ss:semiclassical}

Physically, we are often concerned with how our quantum mechanical world approximately produces classical mechanics. This limit is achieved by taking a semiclassical parameter $h\in(0,1)$ to be very small, and the math involved is called \emph{semiclassical analysis}.

\begin{defi}[symbol class]\label{def:symbol class}
For each $\delta \in [0,1/2)$, the \emph{symbol class} $S_\delta$ is defined as $$S_{\delta}:=\{a\in\cinf(T^*\R^d):\forall \alpha,\beta\in\N^d,\, |\dd_x^{\alpha}\dd_{\xi}^{\beta}a(x,\xi)|\le C_{\alpha,\beta}h^{-\gd|\ga+\gb|}\}.$$
\end{defi}

 Of most importance is the case where $\gd=0$, though technical steps of the proof of Theorem \ref{t:torusquant} will require us to consider general $\gd$. Nevertheless, if $\gd$ is omitted it will be taken to be $0$. Elements of $S_{\gd}$ (called \emph{symbols}) may depend on $h$, though we require that the constants $C_{\alpha,\beta}$ in the definition be uniform in $h$. Real-valued symbols correspond to classical observables, which can be ``quantized" to get quantum observables as follows.

\begin{defi}[Weyl quantization]
For $\delta \in [0,1/2)$, the \emph{Weyl quantization} of a symbol $a\in S_\delta$ is an operator $\Op_ha:\Sc(\R^d)\to \Sc'(\R^d)$ (sometimes written $a(x,hD)$) given by
\begin{equation*}(\Op_h(a)u)(x):=\frac{1}{(2\pi h)^d}\iint e^{\frac{i}{h}\la x-y,\xi\ra}a\left(\frac{x+y}{2},\xi\right)u(y) \ddd y \ddd \xi,\end{equation*}
for $u \in \Sc(\R^d)$.
\end{defi}

A theorem of Calder\'on and Vaillancourt states that if $\delta \in [0,1/2)$ and $a\in S_{\gd}$, then $\Op_h(a)$ is a bounded operator on $L^2(\R^d)$, with bound \begin{equation}\label{eq:CV}\|\Op_h(a)\|_{L^2(\R^d)\to L^2(\R^d)}\le C\|a\|_{L^{\infty}(\R^d)}+C\sum_{|\alpha| + |\beta|\le K}h^{\frac{|\ga|+|\beta|}{2}}\| \partial_x^\alpha \partial_\xi^\beta a\|_{L^{\infty}(\R^d)}\end{equation} where $C>0$ and $K\in\N$ depends only on $d$. (In fact, we remark that (\ref{eq:CV}) even holds for the threshold scaling $\delta=\frac{1}{2}$, though in this case, all derivatives of $a$ in the sum will contribute equal orders of magnitude.)

If $a\in S$ is real-valued, $\Op_h(a)$ is a self-adjoint operator on $L^2(\R^d)$, and more generally we have $$\Op_h(a)^*=\Op_h(\overline{a}).$$

Weyl quantizations enjoy good properties under composition. In particular, if $\gd\in[0,\frac{1}{2})$ and $a,b \in S_{\gd}$, then $$\Op_h(a)\Op_h(b)=\Op_h(a\#b)$$ where $a\#b\in S_{\gd}$ is the \emph{Moyal product} given explicitly by \begin{equation*}
 a\#b(x,\xi):= e^{ih\sigma(D_x,D_{\xi},D_y,D_{\eta})}(a(x,\xi)b(y,\eta))\Big|_{\substack{y=x\\\eta=\xi}}
\end{equation*}
from which one may derive the asymptotic summation formula
\begin{equation}\label{eq:comp_rule}
 a\#b(x,\xi)\sim\sum_{k=0}^{\infty}\frac{i^kh^k}{k!}\sigma(D_x,D_{\xi},D_y,D_{\eta})^k(a(x,\xi)b(y,\eta))\Big|_{\substack{y=x\\\eta=\xi}}
\end{equation}
where in these formulas $\gs(x,\xi,y,\eta):=\la\xi,y\ra-\la x,\eta\ra$, and $a\sim\sum_{k=0}^{\infty}h^ka_k$ means that for each $K \in \N$, $(a-\sum_{k=0}^{K-1}h^ka_k )\in h^K S_\delta .$

In particular, one has that if $a,b\in S_{\delta}$, then $(\Op_ha)(\Op_hb)-\Op_h(ab)=\Op_h(r)$ where $r$ has $S_{\delta}$ seminorms at most $O(h^{1-2\gd})$, with constants depending only on finitely many derivatives of $a$ and $b$. More generally, given $a_1,\dots, a_n\in S_{\gd}$, we have by an easy induction together with $(\ref{eq:CV})$ that
\begin{equation}\label{eq:n_comp_rule}
 \left\|\Op_h(a_1)\cdots\Op_h(a_n)-\Op_h(a_1\cdots a_n)\right\|_{L^2(\R^d)\to L^2(\R^d)}=\poly(n)O(h^{1-2\gd}).
\end{equation}

Finally, we will need a theorem of Egorov, which gives a sense in which quantum operators evolve in time analogously to their classical counterparts. We provide the quantitative version due to Bouzouina and Robert \cite{BR}, see also Zworski \cite[Chapter 11.4]{Zworski}. We first require the following definition.

\begin{defi}[Lyapunov exponent]\label{def:Lyapunov exponent}
    For any dynamical system given by a flow $\phi_t$ on $T^*\mathbb R^d$ or $\mathbb T^{2d}$ (smooth in time), define the \emph{Lyapunov exponent} $\Gamma$ to be $$\Gamma=\lim_{t\to\infty}\sup_y\frac{1}{t}\log|\dd_y\phi_t(y)|$$ and analogously for iterative maps. 
\end{defi}

Given $p\in S$, let $\phi_t$ be evolution by the classical Hamiltonian flow of $p$, induced by the Hamiltonian vector field $X_p=\dd_{\xi}p\dd_x-\dd_xp\dd_{\xi}$. On the quantum side, a particle evolves by the Schr\"odinger equation $$hD_t\psi=P\psi,$$ so $\psi(t)=e^{-\frac{i}{h}tP}\psi(0)$ and observables $A$ evolve in the Heisenberg picture by $A(t)=e^{\frac{i}{h}tP}Ae^{-\frac{i}{h}tP}$. Egorov's theorem states that this evolved operator is equal, up to order $h$, to the quantization of the classically observed symbol of $A$, for $t$ less than \emph{Ehrenfest time} $t_E\sim\log\frac{1}{h}$. Specifically, we have that if $\Gamma$ is the Lyapunov exponent of $\phi_t$ and \begin{equation}\label{eq:Ehrenfest_time}|t|\le T+\frac{\gd}{\Gamma+\eps}\log\frac{1}{h}\end{equation} for some $\delta<\frac{1}{2}$ and $\eps>0$, then
\begin{equation}\label{eq:QuantEgorov}
 e^{\frac{i}{h}tP}\Op_h(a)e^{-\frac{i}{h}tP}-\Op_h(\phi_t^*a) = \Op_h(r)
\end{equation}
for $a\in S$, and $r\in S_{\gd}$ has all seminorms of order $O(h^{2-3\gd})$.
In cases where $\Gamma=0$, such as a completely integrable classical system, some stronger results are available, see \cite{BR}.

\subsection{Fourier Integral Operators} \label{s:Fourier integral operators}

As a generalization of the operators $e^{-\frac{i}{h}tP}$ in Egorov's theorem, we briefly discuss Fourier integral operators. One may view the operator $e^{-\frac{i}{h}P}$ as quantizing a transformation (or equivalently, a ``change of variables") on phase space; namely the symplectomorphism on $T^*\R^d$ given by $(x,\xi)\mapsto\phi_t(x,\xi)$. It is then natural to generalize the quantization to arbitrary symplectomorphisms. This abstraction will be rewarded when we work on the quantized torus in \S \ref{ss:quantizedtorus}, on which the most natural classical transformations do not arise from a flow at all. Let $\phi:T^*\R^d\to T^*\R^d$ be any symplectomorphism. We consider only unitary Fourier integral operators here, and only consider their action in the case $\gd=0$. For the purposes of this paper, define a \emph{Fourier integral operator quantizing $\phi$} in the following way.

\begin{defi}[quantized flow]
For $\phi : T^* \mathbb {R}^d \to T^* \mathbb R^d$, the Fourier integral operator quantizing $\phi$ is the operator $U:L^2(\R^d)\to L^2(\R^d)$ such that the Egorov-type relation
\begin{equation*}\label{eq:egorovtype}
 U^{-1}\Op_h(a)U=\Op_h(b)
\end{equation*}
holds for any $a\in S$ with $b=\phi^*a+O_{S}(h)$. 
\end{defi}
In particular (using these definitions), Egorov's theorem is precisely the statement that $e^{\frac{i}{h}tP}$ is a one-parameter family of Fourier integral operators, which quantize the classical Hamiltonian flow $\phi_t$.

Perhaps the most important Fourier integral operators are the \emph{metaplectic operators} \cite[Chapter 11.3]{Zworski}, which is the special case when $\phi$ is a linear symplectomorphism. In this case for $M\in\Sp(2d,\R)$, we write the corresponding metaplectic operator as $\wh{M}$. The metaplectic operators are unitary and benefit from an exact version of (\ref{eq:egorovtype}), ie they satisfy $$\wh{M}^{-1}\Op_h(a)\wh{M}=\Op_h(M^*a)$$ for all $a\in S$, with no error. We may remark that while the metaplectic operators define a projective unitary representation of $\Sp(2d,\R)$, they are in fact a unitary representation of its double cover, which is known as the \emph{metaplectic group}. This is analogous to projective representations of rotation operators on spinor spaces and leads similarly to a non-canonical choice of phase for a metaplectic operator. As we will only encounter metaplectic operators when we are conjugating by them, this phase vanishes and we need not concern ourselves with it.

\subsection{Semiclassical Defect Measures}

We now discuss semiclassical defect measures, which give a quantitative answer to where in phase space a quantum particle ``is" in the semiclassical limit as $h\to0$.

\begin{defi}[defect measure] \label{def:defect measure}
We say an $h$-dependent set of density operators $\rho_h$ on $L^2(\R^d)$, has \emph{defect measure} $\mu$ if for all $a\in\cinf_0(T^*\R^d)$,
\begin{equation}\label{eq:DM}
 \lim_{h\to0}\tr\left(\rho_h\Op_h(a)\right)=\int_{T^*\R^d}a\ddd \mu.
\end{equation} If $\psi_h\in L^2(\R^d)$, we say it has defect measure $\mu$ if the density operator $\ket{\psi_h}\bra{\psi_h}$ has defect measure $\mu$.
\end{defi}

Having a semiclassical defect measure is a very special property of a sequence, corresponding to being ``essentially classical" as $h\to0$. Nevertheless, such a limit necessarily exists along a subsequence of $h$'s, as the following proposition states.
\begin{prop}\label{p:DMExistence}
Let $\rho_h$ be an $h$-dependent family of density operators on $L^2(\R^d)$. Then there is a nonnegative Radon measure $\mu$ on $T^*\R^d$ and a sequence $h_j\to0$ such that
\begin{equation*}
 \lim_{j\to\infty}\tr\left(\rho_{h_j}\Op_{h_j}(a)\right)=\int_{T^*\R^d}a\ddd \mu
\end{equation*}
for all $a \in C_0^\infty (T^* \R^d )$.
\end{prop}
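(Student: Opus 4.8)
The plan is to apply a standard weak-* compactness argument to the family of positive functionals $a \mapsto \tr(\rho_h \Op_h(a))$ on a suitable separable space of symbols, combined with a diagonal extraction, and then to identify the limit functional as integration against a nonnegative Radon measure. First I would observe that the Calderón–Vaillancourt bound \eqref{eq:CV} gives $\|\Op_h(a)\|_{L^2 \to L^2} \le C\|a\|_{C^K}$ for some fixed $K = K(d)$, uniformly in $h$; since each $\rho_h$ is a density operator (trace one, positive), it follows that $|\tr(\rho_h \Op_h(a))| \le \|\Op_h(a)\|_{L^2\to L^2} \le C\|a\|_{C^K}$ for all $a \in C_0^\infty(T^*\mathbb{R}^d)$. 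Thus $\ell_h(a) := \tr(\rho_h\Op_h(a))$ is a family of linear functionals on $C_0^\infty(T^*\mathbb{R}^d)$ that is equibounded with respect to the $C^K$ (hence also the $C^0$) norm on each fixed compact set.

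Next I would extract a convergent subsequence. Fix an exhaustion of $T^*\mathbb{R}^d$ by compact sets $K_1 \subset K_2 \subset \cdots$, and on the separable Banach space $C_0(K_m)$ choose a countable dense subset; running over all $m$ gives a countable set $\{a_j\} \subset C_0^\infty(T^*\mathbb{R}^d)$ whose span is dense (in $C^0$ on compacta). By the equiboundedness $|\ell_h(a_j)| \le C_j$ and a diagonal argument, there is a sequence $h_i \to 0$ along which $\ell_{h_i}(a_j)$ converges for every $j$; by the uniform $C^0$-on-compacta bound and density, $\ell_{h_i}(a)$ then converges for every $a \in C_0^\infty(T^*\mathbb{R}^d)$. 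Call the limit $L(a)$. Linearity of $L$ is immediate, and the bound $|L(a)| \le C\|a\|_{C^0(K)}$ for $a$ supported in $K$ persists in the limit, so $L$ extends to a bounded linear functional on $C_0(T^*\mathbb{R}^d)$.

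It remains to check positivity and invoke Riesz representation. For positivity, if $a \ge 0$ then $a = b^2$ with $b = \sqrt{a} \in C_0^\infty$ (after a routine smoothing, or more simply take $a \ge 0$ and write $\Op_h(a) = \Op_h(b)^*\Op_h(b) + \Op_h(r_h)$ with $\|\Op_h(r_h)\|_{L^2\to L^2} = O(h^{1-2\gd}) = O(h)$ by the composition rule following \eqref{eq:comp_rule}); then $\tr(\rho_h\Op_h(a)) = \tr(\Op_h(b)\rho_h\Op_h(b)^*) + O(h) \ge O(h)$ since $\rho_h \ge 0$, and letting $h = h_i \to 0$ gives $L(a) \ge 0$. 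By the Riesz–Markov representation theorem, there is a nonnegative Radon measure $\mu$ on $T^*\mathbb{R}^d$ with $L(a) = \int a \ddd\mu$ for all $a \in C_0^\infty(T^*\mathbb{R}^d)$, which is exactly the claimed conclusion. The only genuine subtlety — and the step I would be most careful about — is the positivity argument: one must ensure the error term in writing $\Op_h(a)$ as an almost-square of a pseudodifferential operator is controlled in operator norm (not merely in symbol seminorms) uniformly in $h$, which is precisely what \eqref{eq:n_comp_rule} with $n = 2$ and $\gd = 0$ provides; alternatively one can avoid squaring altogether by using the sharp Gårding inequality, but the composition-rule approach is self-contained within the tools already introduced.
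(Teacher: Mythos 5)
Your argument follows the same standard route---weak-$*$ compactness via the Calder\'on--Vaillancourt bound, a diagonal extraction over a countable dense family, positivity, and Riesz--Markov representation---that the paper references via \cite[Theorem~5.2]{Zworski}, so it is essentially the paper's proof. Two minor patches you would want to make: at finite $h$ the Calder\'on--Vaillancourt estimate controls $\ell_h(a)$ by $\|a\|_{C^K}$ rather than $\|a\|_{C^0}$, so the countable family should be chosen dense in the $C^K$-topology on compacta (or in $C^\infty$ on compacta) for the density step; and since $\sqrt{a}$ need not be smooth where $a$ vanishes, the positivity step is cleaner via the sharp G{\aa}rding inequality (which you also mention) or by applying your squaring argument to $\sqrt{a+\epsilon\chi}$ for a cutoff $\chi\equiv 1$ on $\supp a$ and then sending $\epsilon\to 0$.
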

The proof of Proposition \ref{p:DMExistence} is essentially the same as \cite[Theorem 5.2]{Zworski}, which proves it in the special case of pure states.

In many natural cases, we may take the limit in (\ref{eq:DM}) to be uniform in the symbol $a$, in which case we refer to $\mu$ as a \emph{uniform defect measure}. For a precise characterization of aforesaid measures, see Proposition \ref{p:UDF_Equivalence}. In particular, in this paper, we will work on the quantized torus $H_N$ (see \S \ref{ss:quantizedtorus}) where all defect measures are uniform.

For semiclassical pseudodifferential operators on the real line, there are standard classes of states with known defect measures \cite[Proposition 3.3]{BFF} which we restate in Proposition \ref{p:Rdcoherent}. For a full proof (in French) see \cite[\S 3]{Lions1993} and a proof for less general coherent states see \cite[Chapter 5.1]{Zworski}.

\begin{prop}\label{p:Rdcoherent}
Fix $g\in \mathcal{S}(\mathbb R^d)$ with $\| g \|_{L^2} = 1$, define $$\psi_h(x): =h^{-\frac{d\gb}{2}}g\left(\frac{x-x_0}{h^{\gb}}\right)\exp\left(i\dfrac{x\cdot\xi_0}{h}\right),$$ and let $\rho_h=\ket{\psi_h}\bra{\psi_h}.$ Then $\rho_h$ has defect measure $\mu$, given as follows: 
\begin{itemize}
 \item If $\beta=0$, $\mu=\gd_{\xi=\xi_0}|g(x-x_0)|^2\ddd x$,
 \item If $\beta=1$, $\mu=\gd_{x=x_0}|\hat{g}(\xi-\xi_0)|^2\ddd \xi$,
 \item If $\beta\in(0,1)$, $\mu=\gd_{x=x_0,\xi=\xi_0}$.
\end{itemize}
Here $\hat{g}$ is Fourier transform of $g$.
\end{prop}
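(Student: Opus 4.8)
The plan is to reduce the whole statement to a single rescaling inside the Wigner transform of $\psi_h$. Recall that for $u\in\Sc(\R^d)$ and $a\in\cinf_0(T^*\R^d)$ one has the exact identity $\langle u,\Op_h(a)u\rangle=\iint_{T^*\R^d}a(x,\xi)\,W_hu(x,\xi)\,\ddd x\,\ddd\xi$, where
\begin{equation*}
 W_hu(x,\xi):=\frac{1}{(2\pi h)^d}\int_{\R^d} e^{-\frac ih\langle y,\xi\rangle}\,u\Big(x+\tfrac y2\Big)\,\overline{u\Big(x-\tfrac y2\Big)}\,\ddd y
\end{equation*}
is the semiclassical Wigner transform; this falls out of the definition of the Weyl quantization after the change of variables $p=\tfrac{x+y}{2}$, $q=x-y$. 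Since $\rho_h=\ket{\psi_h}\bra{\psi_h}$ gives $\tr(\rho_h\Op_h(a))=\langle\psi_h,\Op_h(a)\psi_h\rangle$, it suffices to compute $W_h\psi_h$.

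Substituting the explicit $\psi_h$, the phase $e^{ix\cdot\xi_0/h}$ contributes a factor $e^{iy\cdot\xi_0/h}$, and rescaling $y=h^\beta s$ in the integral above yields the scaling relation
\begin{equation*}
 W_h\psi_h(x,\xi)=h^{-d}\,(\mathcal Wg)\!\left(\frac{x-x_0}{h^\beta},\ \frac{\xi-\xi_0}{h^{1-\beta}}\right),
\end{equation*}
where $\mathcal Wg$ is the ordinary ($h=1$) Wigner transform of $g$. Inserting this and then substituting $x=x_0+h^\beta w$, $\xi=\xi_0+h^{1-\beta}\zeta$ (Jacobian $h^{d}$) cancels the $h^{-d}$ and produces the clean formula
\begin{equation*}
 \tr\!\big(\rho_h\Op_h(a)\big)=\int_{\R^d}\!\int_{\R^d}a\big(x_0+h^\beta w,\ \xi_0+h^{1-\beta}\zeta\big)\,(\mathcal Wg)(w,\zeta)\,\ddd w\,\ddd\zeta .
\end{equation*}
I will then invoke the standard facts that $g\in\Sc(\R^d)$ implies $\mathcal Wg\in\Sc(\R^{2d})\subset L^1(\R^{2d})$, that $\iint\mathcal Wg=\|g\|_{L^2}^2=1$, and that its marginals are $\int_{\R^d}\mathcal Wg(w,\zeta)\,\ddd\zeta=|g(w)|^2$ and $\int_{\R^d}\mathcal Wg(w,\zeta)\,\ddd w=|\hat g(\zeta)|^2$, with the normalization of $\hat g$ making the Fourier transform an $L^2$-isometry.

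The three cases will then follow by dominated convergence, with dominating function $\|a\|_{L^\infty}|\mathcal Wg|\in L^1$. If $\beta\in(0,1)$, both $h^\beta\to0$ and $h^{1-\beta}\to0$, the integrand tends pointwise to $a(x_0,\xi_0)(\mathcal Wg)(w,\zeta)$, and the limit is $a(x_0,\xi_0)=\int a\,\ddd\delta_{(x_0,\xi_0)}$. If $\beta=0$ only the momentum argument degenerates and the limit is $\iint a(x_0+w,\xi_0)(\mathcal Wg)(w,\zeta)\,\ddd w\,\ddd\zeta=\int a(x,\xi_0)\,|g(x-x_0)|^2\,\ddd x$, i.e.\ $\mu=\delta_{\xi=\xi_0}|g(x-x_0)|^2\,\ddd x$. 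If $\beta=1$ only the position argument degenerates and symmetrically the limit is $\int a(x_0,\xi)\,|\hat g(\xi-\xi_0)|^2\,\ddd\xi$, i.e.\ $\mu=\delta_{x=x_0}|\hat g(\xi-\xi_0)|^2\,\ddd\xi$.

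The only genuine care needed — and where I expect the work to be bookkeeping rather than conceptual — is in justifying the oscillatory-integral manipulations: the Weyl kernel and $W_hu$ are a priori defined only by conditionally convergent integrals, so one should either insert a cutoff $\chi(\epsilon\xi)$ and let $\epsilon\to0$, or simply exploit that $\psi_h\in\Sc(\R^d)$ makes $W_h\psi_h$ a bona fide Schwartz function, which legitimizes Fubini and the change of variables. A minor additional point is matching the Fourier-transform convention in the $\beta=1$ case to the statement's ``usual Fourier transform.'' Everything else is the routine computation indicated above, and indeed the result is classical; a full account appears in Lions--Paul \cite{Lions1993} and Zworski \cite[Chapter~5.1]{Zworski}.
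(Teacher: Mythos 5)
Your proof is correct. The identity $\langle\Op_h(a)\psi_h,\psi_h\rangle=\iint a\,W_h\psi_h$ is exact, the scaling relation $W_h\psi_h(x,\xi)=h^{-d}\,\mathcal{W}g\bigl((x-x_0)/h^{\beta},(\xi-\xi_0)/h^{1-\beta}\bigr)$ follows from the substitution $y=h^{\beta}s$, and since $g\in\mathcal{S}$ makes $\mathcal{W}g$ Schwartz (hence $L^1$) the oscillatory-integral worries you flag do evaporate, so dominated convergence applies cleanly; the marginal identities for $\mathcal{W}g$ and $\iint\mathcal{W}g=\|g\|_{L^2}^2$ then yield all three cases at once. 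The paper does not give a self-contained proof of Proposition \ref{p:Rdcoherent} itself; it cites Lions--Paul and Zworski and instead proves the quantitative strengthening (Proposition \ref{p:Rdcoherent_strong}) in the appendix. That proof takes a somewhat different route: for $\beta=0$ it uses explicit semiclassical stationary phase on $\langle\Op_h(a)\psi_h,\psi_h\rangle$, deduces $\beta=1$ by applying $\mathcal{F}_h$, and for $\beta\in(0,1)$ replaces $\Op_h$ by its left quantization and arrives at a formula $\tr(\rho_h\Op_h(a))=(2\pi)^{-d}\iint\overline{g(x)}\hat g(\xi)e^{i\langle x,\xi\rangle}a(h^{\beta}x+x_0,h^{1-\beta}\xi+\xi_0)\,d\xi\,dx+O(h^{1-2\delta})$ --- structurally parallel to your Wigner-function formula, but with the non-symmetric kernel $\overline{g(x)}\hat g(\xi)e^{i\langle x,\xi\rangle}$ in place of $\mathcal{W}g$, and with explicit Taylor/mean-value estimates in place of dominated convergence. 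Your approach is more uniform across the three cases and is essentially the classical Lions--Paul argument; the trade-off is that the paper's version tracks the $h$-dependence of the error and hence delivers the $(\delta,\theta)$-rates needed later in Theorem \ref{t:torusquant}, whereas your dominated-convergence step, as written, gives convergence but no rate. If you wanted rates by your method, you could Taylor-expand $a$ in the rescaled variables and use that $\mathcal{W}g$ has finite moments of all orders (being Schwartz), which would recover an estimate of the same flavor.
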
 In the last case of Proposition \ref{p:Rdcoherent}, the states $\psi_h$ are called \emph{generalized coherent states}.

In fact, as we will be working with $S_{\delta}$ classes and must control the error of our estimates, we will need a slightly stronger definition to accommodate $h$-dependent symbols.
\begin{defi}[uniform defect measure]
    If $\rho_h$ has defect measure $\mu$, $\gd\in[0,\frac{1}{2})$, and $\gt>0$, then we say $\mu$ is a \emph{uniform $(\gd,\gt)$-defect measure} provided \begin{equation}\label{eq:delta_DM}\left|\tr(\rho_h\Op_h(a))-\int_{T^*\R^d}a \ddd \mu\right|=O(h^{\gt})\end{equation} holds for all $a=a_h\in \cinf_0(T^*\R^d)\cap S_{\delta}$ with constant depending only on the symbol norms of $a$. 
\end{defi}
We easily see that if $\mu$ is a $(\gd,\gt)$-defect measure, it is necessarily a $(\gd',\gt')$-defect measure for all $\gd'\le\gd$ and $\gt'\le\gt$. To address a potential point of confusion, note that in (\ref{eq:delta_DM}) we are enlarging the space of symbols, $a$, used in \eqref{eq:DM}. These symbols are allowed to depend on $h$ with derivatives growing in $h$, provided they do not grow too fast, and their support may not be uniformly bounded in $h$.

We can now state a refinement of Proposition \ref{p:Rdcoherent} using this new terminology.

\begin{prop}\label{p:Rdcoherent_strong}
Let $\rho_h$ be as in Proposition \ref{p:Rdcoherent}. Then the defect measures $\mu$ given in Proposition \ref{p:Rdcoherent} are uniform $(\gd,\gt)$-defect measures with $\gd,\gt$ given as follows:
\begin{itemize}
 \item If $\beta\in\{0,1\}$, then $\gd<\frac{1}{2}$ and $\gt<1-2\gd$,
 \item If $\beta\in(0,1)$, then $\gd<\min(\gb,1-\gb)$ and $\gt<\min(\gb-\gd,1-\gb-\gd)$.
\end{itemize}
\end{prop}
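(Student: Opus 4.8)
The plan is to compute $\tr(\rho_h\Op_h(a))=\langle\psi_h,\Op_h(a)\psi_h\rangle$ in each of the three regimes for $\beta$, reducing it by an \emph{exact} conjugation to an expression that can be Taylor expanded to first order in the coordinate(s) along which $\mu$ is a point mass. The step common to all cases is to use the Heisenberg--Weyl translations $V_{x_0}\colon f\mapsto f(\cdot-x_0)$ and modulations $W_{\xi_0}\colon f\mapsto e^{i\langle\cdot,\xi_0\rangle/h}f$, which are unitary and satisfy the exact identities $V_{x_0}^*\Op_h(a)V_{x_0}=\Op_h(a(\cdot+x_0,\cdot))$ and $W_{\xi_0}^*\Op_h(a)W_{\xi_0}=\Op_h(a(\cdot,\cdot+\xi_0))$, together with $\psi_h=W_{\xi_0}V_{x_0}\phi_h$ where $\phi_h(x):=h^{-d\beta/2}g(x/h^\beta)$. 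Translating the symbol leaves its $S_\delta$ seminorms unchanged and translates $\mu$ by $(x_0,\xi_0)$, so it suffices to treat $x_0=\xi_0=0$.

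For $\beta\in(0,1)$ I further apply the unitary dilation $U\colon f\mapsto h^{-d\beta/2}f(\cdot/h^\beta)$, which sends $\phi_h$ to $g$, and the elementary identity $\Op_h(c(x,\xi))=\Op_1(c(x,h\xi))$ relating the $h$- and $1$-quantizations; a change of variables then gives
\begin{equation*}
\langle\psi_h,\Op_h(a)\psi_h\rangle=\langle g,\Op_1(b)g\rangle,\qquad b(v,\xi):=a\bigl(h^\beta v,\,h^{1-\beta}\xi\bigr),
\end{equation*}
where $\Op_1$ is the standard ($h=1$) Weyl quantization. Since $\delta<\min(\beta,1-\beta)$, each derivative $\partial_v^\alpha\partial_\xi^\gamma b$ is $O\bigl(h^{(\beta-\delta)|\alpha|+(1-\beta-\delta)|\gamma|}\bigr)=O(1)$ with constant controlled by finitely many seminorms of $a$; in particular $b(0,0)=a(0,0)=\int a\ddd\mu$. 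By the fundamental theorem of calculus, $b(v,\xi)-b(0,0)=h^\beta\sum_j v_jA_j(v,\xi)+h^{1-\beta}\sum_j\xi_jB_j(v,\xi)$ with $A_j=\int_0^1(\partial_{x_j}a)(th^\beta v,th^{1-\beta}\xi)\ddd t$ and $B_j$ analogous; every derivative of $A_j$ and $B_j$ is $O(h^{-\delta})$, so $\|\Op_1(A_j)\|,\|\Op_1(B_j)\|=O(h^{-\delta})$ by the (non-semiclassical) Calder\'on--Vaillancourt bound. Because $v_j$ and $\xi_j$ are linear symbols, the symmetrized Weyl calculus gives the exact identities $\Op_1(v_jA_j)=\tfrac12\bigl(X_j\Op_1(A_j)+\Op_1(A_j)X_j\bigr)$ and $\Op_1(\xi_jB_j)=\tfrac12\bigl(D_{x_j}\Op_1(B_j)+\Op_1(B_j)D_{x_j}\bigr)$, where $X_j$ denotes multiplication by $x_j$; pairing against $g$ and using $\|X_jg\|,\|D_{x_j}g\|=O(1)$ (as $g$ is Schwartz) yields
\begin{equation*}
\Bigl|\langle g,\Op_1(b)g\rangle-\int a\ddd\mu\Bigr|=O(h^{\beta-\delta})+O(h^{1-\beta-\delta}),
\end{equation*}
which is $O(h^\tau)$ for every $\tau<\min(\beta-\delta,1-\beta-\delta)$.

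For $\beta\in\{0,1\}$ one of the two variables is already a point mass, so no dilation is needed and I retain the semiclassical quantization. When $\beta=0$, $\phi_h=g$ and $\mu=\delta_{\xi=0}|g(x)|^2\ddd x$; I split $a(x,\xi)=a(x,0)+\sum_j\xi_j\tilde A_j(x,\xi)$ with $\tilde A_j=\int_0^1(\partial_{\xi_j}a)(x,t\xi)\ddd t$. The symbol $a(x,0)$ is independent of $\xi$, so $\Op_h(a(\cdot,0))$ is multiplication by $a(x,0)$ and $\langle g,\Op_h(a(\cdot,0))g\rangle=\int a(x,0)|g(x)|^2\ddd x=\int a\ddd\mu$. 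The coefficients obey $|\partial^\alpha\tilde A_j|=O(h^{-\delta(|\alpha|+1)})$, but the half-powers of $h$ multiplying the derivatives in \eqref{eq:CV} absorb this growth (as $\delta<1/2$), so $\|\Op_h(\tilde A_j)\|=O(h^{-\delta})$; combined with $\Op_h(\xi_j\tilde A_j)=\tfrac12\bigl(hD_{x_j}\Op_h(\tilde A_j)+\Op_h(\tilde A_j)hD_{x_j}\bigr)$ and $\|hD_{x_j}g\|=O(h)$, the remainder is $O(h)\cdot O(h^{-\delta})=O(h^{1-\delta})$, which is $O(h^\tau)$ for $\tau<1-2\delta$. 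The case $\beta=1$ is the mirror image: $\phi_h(x)=h^{-d/2}g(x/h)$ satisfies $\|x_j\phi_h\|=O(h)$, one splits $a(x,\xi)=a(0,\xi)+\sum_j x_j\hat B_j(x,\xi)$, the symbol $a(0,\xi)$ quantizes to a Fourier multiplier whose expectation against $\phi_h$ equals $\int a(0,\xi)|\hat g(\xi)|^2\ddd\xi=\int a\ddd\mu$ (by Plancherel for the semiclassical Fourier transform, which carries $\phi_h$ to $\hat g$), and the remainder is again $O(h^{1-\delta})$. Alternatively, $\beta=1$ reduces to $\beta=0$ by conjugating with the semiclassical Fourier transform, which is metaplectic and hence exact in Egorov's theorem.

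I expect the main obstacle to be essentially bookkeeping: tracking the powers of $h$ through the dilation when $\beta\in(0,1)$ and checking that the implied constant at each step depends only on finitely many $S_\delta$ seminorms of $a$ and on the fixed profile $g$ --- not on the possibly $h$-dependent support of $a$, which never enters the argument. The one genuinely case-sensitive point is that in the edge cases $\beta\in\{0,1\}$ one must keep the semiclassical quantization, so that the $h^{(|\alpha|+|\beta|)/2}$ gain in Calder\'on--Vaillancourt compensates the $h^{-\delta}$-growth of the derivatives of the cut-off symbols $\tilde A_j,\hat B_j$, whereas in the middle regime the dilation already renders all symbol derivatives $O(1)$ and the plain Calder\'on--Vaillancourt bound suffices.
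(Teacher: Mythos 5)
Your proof is correct, and it takes a genuinely different route from the paper's. The paper handles $\beta=0$ by applying the quantitative stationary-phase expansion directly to the oscillatory integral defining $\langle\Op_h(a)\psi_h,\psi_h\rangle$, reduces $\beta=1$ to $\beta=0$ via the semiclassical Fourier transform, and handles $\beta\in(0,1)$ by passing to the left quantization, integrating out explicitly, and finishing with Fourier inversion plus the mean value theorem. You instead exploit exact metaplectic conjugation identities (Heisenberg--Weyl translations and modulations for $(x_0,\xi_0)$, and dilations for $\beta\in(0,1)$), a fundamental-theorem-of-calculus splitting of the symbol, and the exact symmetrized Weyl product formula $\Op(\ell\cdot b)=\tfrac12\bigl(\Op(\ell)\Op(b)+\Op(b)\Op(\ell)\bigr)$ valid for linear $\ell$, then finish with Calder\'on--Vaillancourt and the fact that $g$ is Schwartz. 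This avoids stationary phase entirely and, as a small bonus, your bound for $\beta\in\{0,1\}$ is $O(h^{1-\delta})$ rather than the paper's $O(h^{1-2\delta})$ --- strictly sharper, and still within the claimed $\tau<1-2\delta$. Two minor remarks: first, your dilation as written, $U\colon f\mapsto h^{-d\beta/2}f(\cdot/h^\beta)$, sends $g$ to $\phi_h$, not the reverse; what you use in the computation is $\langle Ug,\Op_h(a)Ug\rangle=\langle g,U^*\Op_h(a)Ug\rangle$, so the argument is fine but the sentence ``sends $\phi_h$ to $g$'' is inverted. Second, when you check that the derivatives of $A_j,B_j$ are $O(h^{-\delta})$, the exponent bookkeeping is $\|\partial_v^\alpha\partial_\xi^\gamma A_j\|_{L^\infty}=O\bigl(h^{-\delta}\,h^{(\beta-\delta)|\alpha|+(1-\beta-\delta)|\gamma|}\bigr)$, which you correctly note is $O(h^{-\delta})$ since $\delta<\min(\beta,1-\beta)$ --- worth stating explicitly so a referee sees that the estimate really does depend only on finitely many $S_\delta$ seminorms of $a$ and nothing about its ($h$-dependent) support.
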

The proof of Proposition \ref{p:Rdcoherent_strong} is straightforward and given in Appendix \ref{aa:coherent_Rd}.

\subsection{The Quantized Torus}\label{ss:quantizedtorus}

We now give an introduction to the quantized torus, and the pseudodifferential calculus on the Hilbert space of corresponding quantum states.

For each $N\in \N$, define
\begin{align}
    H_N:=\mathrm{span}\left\{\frac{1}{N^{\frac{d}{2}}}\sum_{k\in\Z^d}\delta_{x=k+\frac{n}{N}}:n\in\{0,1,\dots,N-1\}^d\right\}. \label{eq:quantizedTorus}
\end{align}
We may remark that $H_N$ consists precisely of the elements of $\Sc'(\R^d)$ which are periodic in physical and Fourier space under the semiclassical Fourier transform $\cF_h$, given by $$\cF_hu(\xi)=\frac{1}{(2\pi h)^{\frac{d}{2}}}\int e^{\frac{i}{h}x\cdot\xi}u(x)\ddd x$$ where $h=(2\pi N)^{-1}$. For simplicity define
\begin{align}
 Q_n=\frac{1}{N^{\frac{d}{2}}}\sum_{k\in\Z^d}\delta_{x=k+\frac{n}{N}}, \label{eq:Q_n def}
\end{align}
and define a Hilbert space structure on $H_N$ such that $(Q_n)$ is an orthonormal basis of $H_N$. Though it is not necessary here, we may note that this is an elementary example of the general geometric Toeplitz quantization studied in \cite{Deleporte}.

We now give an overview of the pseudodifferential calculus on $H_N$, which also may be found in \cite{CZ},
\cite{DJ}, \cite{Schenck}. Let $a\in \cinf(\T^{2d})$. Intuitively, $a$ corresponds to a classical observable, with $\T^{2d}$ playing the role of phase space. We see that $a$ lifts to a doubly-periodic function $\tilde{a}$ on $T^*\R^d$, which is in the symbol class $S$. By \cite[\S 2.3]{CZ} $\Op_Na$ maps $H_N$ (as a subset of $\Sc'(\R^d)$) to itself, so we may define \begin{equation}\label{eq:TN_Op}
\Op_Na:=\Op_h\tilde a|_{H_N}:H_N\to H_N 
\end{equation} which is also given in coordinates \cite[Lemmma 2.4]{CZ} by $$\Op_N(a)Q_j=\sum_{m=0}^{N-1}A_{mj}Q_m$$ with $$A_{mj}=\sum_{k,l\in\Z}\hat{a}(k,j-m-lN)(-1)^{kl}e^{\pi i\frac{(j+m)k}{N}},$$ where $\hat{a}$ is the (non-semiclassical) Fourier series in both variables. Furthermore, we may define an analogue of $S_{\gd}$ symbols as $$S_{\gd}(\T^{2d}):=\{a\in \cinf(\T^{2d}):\forall \alpha,\beta\in\Z^d,\,|\dd_x^{\ga}\dd_{\xi}^{\gb}|\le C_{\ga\gb}N^{\gd|\ga+\gb|}\}$$ from which the obvious generalization of (\ref{eq:TN_Op}) can be defined.

Most results from pseudodifferential calculus carry over to $H_N$, see \cite{CZ,DJ} for details. In particular, the formulas for compositions and Egorov's theorem are the same and follow immediately, and we have a version of the Calder\'on-Vaillancourt theorem \cite[Proposition 2.7]{CZ} that says $$\|\Op_N(a)\|_{H_N}\le\|a\|_{L^{\infty}(\T^{2d})}+o(1),$$ where $\|\cdot\|_{H_N}$ is the norm given by the Hilbert space structure. In fact, one can directly bound $\Op_N (a)$ by the corresponding map on $L^2(\R^d)$: \begin{equation}\label{eq:HN_L2_bound}\|\Op_N(a)\|_{H_N}\le\|\Op_h(\tilde{a})\|_{L^2(\R^d)},\end{equation} see \cite[\S 2.2.3]{DJ} for an explanation using the direct integral decomposition of $L^2(\R^d)$.

A particular class of metaplectic operators can also be defined on $H_N$. Let $M\in \Sp(2d,\Z)$ be an integer symplectic matrix, and let $N$ be even. Define $\wh M_N=\wh M|_{H_N}$, which can be shown \cite[\S 2.2.4]{DJ} to be a unitary map on $H_N$. We then have an analogous exact Egorov's theorem:
\begin{equation}\label{eq:MNEgorov}
 \Op_N(a)\wh{M}_N=\wh{M}_N\Op_N(a\circ M).
\end{equation}

The definition of a defect measure on $H_N$ is also identical. Given an $N$-dependent sequence of density operators on $H_N$, we say they have \emph{defect measure} $\mu$ if for every $a\in\cinf(\T^{2d})$,
\begin{equation*}
 \lim_{N\to\infty}\tr\left(\rho_N\Op_N(a)\right)=\int_{T^*\R^d}a\ddd \mu.
\end{equation*}

The proof of Proposition \ref{p:UDF_Equivalence} shows that every defect measure on the torus is uniform. In particular, we have the following result to be an immediate consequence.

\begin{prop}\label{p:TorusDMUniform}
If $\rho_N$ are density operators on $H_N$ with defect measure $\mu$, then $\mu$ is a probability measure, and for every $\eps>0$ there is an $h_0$ such that if $h<h_0$ and $\|a^{(\ga)}\|\le 1$ for all $|\ga|<K(d)$, then $$\left|\tr_{H_N}(\rho_N\Op_N(a))-\int_{\T^{2d}}a\ddd \mu\right|\le \eps.$$

\end{prop}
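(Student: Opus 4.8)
The plan is to establish the two assertions in turn: first that $\mu$ is a probability measure, and then the uniform convergence, which I would deduce from the Calder\'on--Vaillancourt estimate \eqref{eq:CV} together with an Arzel\`a--Ascoli compactness argument.

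For the first assertion I would use the constant symbol: the Weyl quantization of $1$ is the identity by Fourier inversion, so $\Op_N(1)=\mathrm{Id}_{H_N}$, hence $\tr_{H_N}(\rho_N\Op_N(1))=\tr_{H_N}(\rho_N)=1$ for every $N$, and letting $N\to\infty$ in the defining relation of the defect measure gives $\mu(\T^{2d})=1$. That $\mu\ge0$ is standard: e.g.\ it follows from the sharp G\aa rding inequality $\Op_N(a)\ge-Ch$ for $a\ge0$, whence $\tr_{H_N}(\rho_N\Op_N(a))\ge-Ch\to0$ (this is the analogue on $H_N$ of the nonnegativity in Proposition \ref{p:DMExistence}). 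Since $\T^{2d}$ is compact, a nonnegative Radon measure of total mass $1$ is a probability measure.

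For the uniformity, write $F_N(a):=\tr_{H_N}(\rho_N\Op_N(a))$ and $F(a):=\int_{\T^{2d}}a\,\ddd\mu$, both linear in $a\in C^\infty(\T^{2d})$. Combining \eqref{eq:HN_L2_bound} with \eqref{eq:CV} (and $h\le1$) gives constants $C,K$ depending only on $d$ with $|F_N(a)|\le\|\Op_N(a)\|_{H_N}\le C\sum_{|\gamma|\le K}\|\partial^\gamma a\|_{L^\infty(\T^{2d})}$ for every $N$; and $|F(a)|\le\|a\|_{L^\infty(\T^{2d})}$ now that $\mu$ is known to be a probability measure. So there is $C'=C'(d)$ with $|F_N(a)-F(a)|\le C'\|a\|_{C^K(\T^{2d})}$ for all $N$ and all smooth $a$, i.e.\ the family $\{F_N-F\}_N$ is equibounded on $C^K$-balls. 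I would then take $K(d):=K+2$ and set $\mathcal B:=\{a\in C^\infty(\T^{2d}):\ \|\partial^\alpha a\|_{L^\infty}\le1,\ |\alpha|<K(d)\}$. By Arzel\`a--Ascoli on the compact manifold $\T^{2d}$, $\mathcal B$ is totally bounded in $C^K(\T^{2d})$, so given $\eps>0$ one can cover it by $C^K$-balls of radius $\eps/(3C')$ about finitely many $a_1,\dots,a_m\in\mathcal B$. For arbitrary $a\in\mathcal B$, choosing $a_i$ with $\|a-a_i\|_{C^K}<\eps/(3C')$,
\begin{equation*}
|F_N(a)-F(a)|\le|F_N(a-a_i)-F(a-a_i)|+|F_N(a_i)-F(a_i)|\le\tfrac{\eps}{3}+|F_N(a_i)-F(a_i)|,
\end{equation*}
and the defining convergence of the defect measure, applied to the finitely many $a_i$, produces $h_0>0$ with $|F_N(a_i)-F(a_i)|<\eps/3$ for all $i$ whenever $h<h_0$. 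This yields the bound $\le\eps$, uniformly over $\mathcal B$.

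I do not expect a serious obstacle; the two points that need care are (i) that the Calder\'on--Vaillancourt constants on $H_N$ do not depend on $N$ — which is exactly what \eqref{eq:HN_L2_bound} provides, by reducing to $L^2(\R^d)$ — and (ii) that $K(d)$ must be chosen larger than the Calder\'on--Vaillancourt order so that the $C^{K(d)-1}$-ball is precompact in $C^K$. Alternatively, one can avoid abstract compactness and estimate by hand: approximate $a\in\mathcal B$ by the partial sum $a_R$ of its Fourier series on $\T^{2d}$; since $K(d)-1>2d$ the series converges absolutely with $\sum_k|\hat a(k)|\le C(d)$ uniformly over $\mathcal B$, which controls $\|a-a_R\|_{L^\infty}$ and each $\|\partial^\gamma(a-a_R)\|_{L^\infty}$ ($|\gamma|\le K$) uniformly in $a$ once $R$ is large, and then $F_N(a_R)-F(a_R)=\sum_{|k|\le R}\hat a(k)\big(F_N(e_k)-F(e_k)\big)\to0$ for fixed $R$ by the defect-measure convergence applied to the finitely many characters $e_k(\zeta)=e^{2\pi ik\cdot\zeta}$.
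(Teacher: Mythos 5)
Your argument is correct and takes essentially the same route as the paper. The paper derives this proposition as an immediate consequence of Proposition \ref{p:UDF_Equivalence}, whose proof of the implication $(4)\Rightarrow(3)$ (and $(1)\Rightarrow(3)$) is precisely the compactness argument you give: Arzel\`a--Ascoli on a $C^{K+1}$-bounded set inside $C^{K}$, combined with a Calder\'on--Vaillancourt bound that is uniform in $h$. The only implementation difference is cosmetic: the paper covers the compact set by an increasing sequence of open sets $X_j$ (the symbols for which the estimate already holds for all $h<h_j$) and extracts a finite subcover, while you cover by small $C^K$-balls around finitely many centers and invoke the pointwise defect-measure convergence at those centers. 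You also correctly observe that on $\T^{2d}$ one can dispense with the cutoff $\chi$ that the paper's $\R^d$-proof of Proposition \ref{p:UDF_Equivalence} must introduce, which is what makes the torus version an ``immediate consequence.'' Your treatment of the probability-measure assertion (take $a\equiv 1$ for total mass, G\aa rding for nonnegativity) matches as well.

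One remark on your alternative Fourier-series argument. The bound $\sum_k|\hat a(k)|\le C(d)$ uniformly over $\mathcal B$ controls $\|a-a_R\|_{L^\infty}$, but not $\|\partial^\gamma(a-a_R)\|_{L^\infty}$ for $|\gamma|\le K$; for the latter you need $\sum_k|k|^K|\hat a(k)|$ to converge, which forces $|\hat a(k)|\lesssim\langle k\rangle^{-(K+2d+1)}$ and hence $K(d)>K+2d+1$, not merely $K(d)-1>2d$. (Alternatively, one could use the $h$-weighted derivative terms in \eqref{eq:CV} to avoid a smallness requirement on the derivatives of $a-a_R$, at the cost of letting $h_0$ depend on $R$.) Since the statement only asserts the existence of \emph{some} $K(d)$ this is fixable by enlarging it, but as written the alternative has a gap in the derivative estimate. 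Your main Arzel\`a--Ascoli argument with $K(d)=K+2$ is sound.
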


The notion of a $(\gd,\gt)$-defect measure (\ref{eq:delta_DM}) also carries over to the quantized torus case, where the symbols in this case are in $S_{\gd}(\T^{2d})$.

Proposition \ref{p:Rdcoherent}, which gives the defect measures of generalized coherent states, has a natural analog on the quantized torus. We do not claim these values of $\gd,\gt$ to be sharp.

\begin{prop}\label{prop_cs_beta0}
Given $g \in C^\infty(\T^d ; \C)$ such that $\| g \| _{L^2 (\T^d ) } = 1$, and $\xi_0 \in \T^d$, define
\begin{align*}
 \psi_N := \sum_{k \in (\Z / N \Z )^d } N^{-\frac{d}{2}}g\left (\frac{k}{N} \right)e^{2\pi ik\cdot\xi_0} Q_k,
\end{align*}
and set $\rho_N:=C_N\ket{\psi_N}\bra{\psi_N}$ with $C_N:=\|\psi_N\|_{H_N}^{-2}$ (for $Q_k$ defined in \eqref{eq:Q_n def}). Then $\rho_N$ has the semiclassical defect measure $\mu=\gd_{\xi=\xi_0}|g(x)|^2\ddd x$. In particular, $\mu$ is a $(\gd,\gt)$-defect measure for $\gd<(d+2)^{-1}$, $\gt<1-(d+2)\gd.$
\end{prop}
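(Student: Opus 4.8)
The plan is to reduce the claim to a direct computation of $\langle\psi_N,\Op_N(a)\psi_N\rangle_{H_N}$ in the basis $(Q_k)$, using the fact that Weyl quantizations of Fourier modes act \emph{exactly} on $H_N$. First I would dispose of the normalization: since $g\in C^\infty(\T^d)$, the Riemann sum $\|\psi_N\|_{H_N}^2=N^{-d}\sum_{k\in\{0,\dots,N-1\}^d}|g(k/N)|^2$ differs from $\int_{\T^d}|g|^2=1$ by $O(N^{-\infty})$ (Poisson summation, i.e.\ the super-polynomially small error of the trapezoidal rule for smooth periodic functions), so $C_N=1+O(N^{-\infty})$ and it suffices to prove
\begin{equation*}
\langle\psi_N,\Op_N(a)\psi_N\rangle=\int_{\T^d}a(x,\xi_0)\,|g(x)|^2\,\ddd x+O\!\big(N^{-(1-(d+2)\delta)}\big)
\end{equation*}
uniformly for $a\in S_\delta(\T^{2d})$, with constant depending on finitely many seminorms of $a$ (and on $g$).

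Expand $a(x,\xi)=\sum_{r\in\Z^d}a_r(x)e^{2\pi ir\cdot\xi}$ in its $\xi$-Fourier series and, for each $r$, expand $a_r$ in its $x$-Fourier series. The Weyl quantization of a mode $e^{2\pi i(\nu\cdot x+r\cdot\xi)}$ acts on $H_N$ as a unimodular phase times the composition of the modulation $Q_k\mapsto e^{2\pi i\nu\cdot k/N}Q_k$ with the lattice shift $Q_k\mapsto Q_{(k\mp r)\bmod N}$; carrying this out and resumming in $\nu$ by Fourier inversion yields, for each $r$, a closed form of the shape
\begin{equation*}
\langle\psi_N,\Op_N(a_r(\cdot)e^{2\pi ir\cdot\xi})\psi_N\rangle=N^{-d}e^{2\pi ir\cdot\xi_0}\sum_{m\in\{0,\dots,N-1\}^d}\overline{g(m/N)}\,g\!\left(\tfrac{m+r}{N}\right)a_r\!\left(\tfrac{m+r/2}{N}\right)e^{-2\pi iN\tau(m,r)\cdot\xi_0},
\end{equation*}
where $\tau(m,r)\in\{0,1\}^d$ records which coordinates of $m+r$ leave the fundamental domain (the ``wrap-around''). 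Setting $\tau=0$ leaves a Riemann sum, which after the substitution $y=(m+r/2)/N$ approximates $e^{2\pi ir\cdot\xi_0}\int_{\T^d}a_r(y)\,\overline{g(y-\tfrac{r}{2N})}\,g(y+\tfrac{r}{2N})\,\ddd y$; using the \emph{symmetric} expansion $\overline{g(y-s)}g(y+s)=|g(y)|^2+O(|s|\,\|g\|_{C^1}^2)$ this equals $e^{2\pi ir\cdot\xi_0}\int a_r(y)|g(y)|^2\,\ddd y+O(|r|_1 h\,|a_r|_{C^0})$, and summing over $r$ recovers the main term $\int_{\T^d}a(y,\xi_0)|g(y)|^2\,\ddd y$.

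The remaining work is to bound the errors. (i) The wrap-around terms total at most $\sum_r N^{-d}\,\#\{m:\tau(m,r)\neq0\}\,\|g\|_\infty^2\,|a_r|_{C^0}=O\!\big(h\sum_r|r|_1|a_r|_{C^0}\big)$, since at most $O(|r|_1 N^{d-1})$ values of $m$ wrap around; (ii) the $O(|s|)$ errors above are similarly $O\!\big(h\sum_r|r|_1|a_r|_{C^0}\big)$; (iii) the Riemann-sum errors are $O(N^{-\infty})$ since the integrands are smooth and periodic in $x$ — the $N^{\delta(\cdot)}$ growth of their $x$-derivatives is beaten by the $N^{-M(1-\delta)}$ trapezoidal decay for any $M$, using $\delta<1$. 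Everything thus hinges on
\begin{equation*}
\sum_{r\in\Z^d}|r|_1\,|a_r(x)|\le C\,N^{(d+2)\delta},
\end{equation*}
which follows from $|r_i|^M|a_r(x)|=(2\pi)^{-M}\big|(\partial_{\xi_i}^M a)_r(x)\big|\le(2\pi)^{-M}\|\partial_{\xi_i}^M a\|_{L^\infty}\le C N^{\delta M}$ — hence $|a_r(x)|\le C_M N^{\delta M}(1+|r|_\infty)^{-M}$ — upon taking $M=d+2$, for which $\sum_{r\ne0}|r|_1(1+|r|_\infty)^{-(d+2)}<\infty$. Combining, the total error is $O(hN^{(d+2)\delta})=O\big(N^{-(1-(d+2)\delta)}\big)$ with constant depending only on the $S_\delta$-seminorms of $a$ of order $\le d+2$ and on $g$; imposing $\delta<(d+2)^{-1}$ keeps the exponent positive, which gives precisely $\delta<(d+2)^{-1}$, $\theta<1-(d+2)\delta$. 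The main obstacle is exactly this error bookkeeping: the re-centering $y=(m+r/2)/N$ is what keeps the principal error proportional to $|a_r|_{C^0}$ rather than $|a_r|_{C^1}$ (the latter would cost an extra $N^\delta$ and spoil the exponent), and one must check that the term-by-term resummation over Fourier modes is legitimate with control uniform in $a\in S_\delta$, and that no auxiliary error exceeds $hN^{(d+2)\delta}$.
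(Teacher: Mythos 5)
Your proof is correct and reaches exactly the paper's stated exponent, but is organized somewhat differently; a brief comparison is worth recording. The paper's proof also computes $\bra{\psi_N}\Op_N(a)\ket{\psi_N}$ in the $(Q_k)$ basis, but does so via the matrix elements $A_{mj}$ directly: it introduces an auxiliary cutoff $|j-m|\le M=\lfloor N^{\gamma}\rfloor$ with $\gamma\in(\delta,1)$ to truncate the off-diagonal tail as $O(N^{-\infty})$, approximates $A_{mj}$ inside the cutoff by $\sF_2 a\left(\tfrac{j+m}{2N},j-m\right)$, shifts both the midpoint $\tfrac{j+m}{2N}\to\tfrac{m}{N}$ and $g(j/N)\to g(m/N)$ to extract the Riemann sum $\sum_m|g(m/N)|^2a(m/N,\xi_0)$ with per-term error $O(MN^{-(1-\delta)})$, and finally sends $\gamma\to\delta^{+}$. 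You reindex the same double sum by the $\xi$-Fourier mode $r=j-m$, compute the exact action of each mode on $H_N$ (modulation, lattice shift, and midpoint phase), and never need $\gamma$: the estimate $\sum_r|r|_1|a_r|_{C^0}\le CN^{(d+2)\delta}$ does all the cutoff's work, which is cleaner. Your re-centering at $y=(m+r/2)/N$ is the analogue of the paper's midpoint evaluation, and your presentation makes explicit why the $N^{\delta}$ in the paper's per-term error comes only from shifting $a$, not $g$. One minor remark: your bound on $\sum_r|r|_1|a_r|_{C^0}$ is not tight — splitting the sum at $|r|_\infty\approx N^{\delta}$ and using $|a_r|\le C$ on the near range (rather than the global tail bound $|a_r|\le CN^{(d+2)\delta}(1+|r|_\infty)^{-(d+2)}$) gives $O(N^{(d+1)\delta})$, hence $\delta<(d+1)^{-1}$, $\theta<1-(d+1)\delta$; the paper's argument admits the same refinement, and neither claims optimality.
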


To prepare for the next proposition, we define the map $\Pi_N:\Sc(\R^d)\to H_N$ by
\begin{equation}
 (\Pi_Nu)_n=\frac{1}{N^{\frac{d}{2}}}\sum_{k\in\Z^d}u\left(k+\frac{n}{N}\right)Q_n. \label{eq:PiDefinition}
\end{equation}
\begin{prop}\label{prop:cs_beta_int}
Given $\beta \in (0,1)$ and $g\in \Sc(\R^d)$, define
\begin{align*}
 g_N(x):=N^{\frac{d\gb}{2}}g(N^{\beta}(x-x_0))e^{2\pi iNx\cdot\xi_0}.
\end{align*}
Next, define $\psi_N:=\Pi_Ng_N$, and $\rho_N:=C_N'\ket{\psi_N}\bra{\psi_N}$, where $C_N':=\|\psi_N\|_{H_N}^{-2}$. Then $\rho_N$ has the semiclassical defect measure $\mu=\gd_{x=x_0,\xi=\xi_0}$, which is a $(\gd,\gt)$-defect measure for $\gd<\min(\gb,\frac{1-\gb}{d+1})$, $\gt<\min(\gb-\gd,1-\gb-(d+1)\gd)$.
\end{prop}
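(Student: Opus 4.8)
The plan is to transfer the real-line estimate of Proposition~\ref{p:Rdcoherent_strong} to the torus. Up to the harmless rescaling $g(z)\mapsto(2\pi)^{-d\gb/2}g((2\pi)^{-\gb}z)$, which preserves both $\|g\|_{L^2}$ and the defect measure, $g_N$ is exactly the real-line generalized coherent state $x\mapsto h^{-d\gb/2}g(h^{-\gb}(x-x_0))e^{ix\cdot\xi_0/h}$ of Proposition~\ref{p:Rdcoherent} with $h=(2\pi N)^{-1}$, and $\psi_N=\Pi_Ng_N$ is its periodized sampling. The heart of the argument is the identity
\begin{equation*}
 \la\psi_N,\Op_N(a)\psi_N\ra_{H_N}=\la g_N,\Op_h(\tilde a)g_N\ra_{L^2(\R^d)}+O(N^{-\infty}),\qquad a\in S_\gd(\T^{2d}),
\end{equation*}
uniform over $a$ with finitely many seminorms bounded, where $\tilde a$ denotes the $\Z^{2d}$-periodization of $a$; combined with a cutoff reducing $\tilde a$ to a compactly supported symbol, Proposition~\ref{p:Rdcoherent_strong} then finishes the proof.

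To prove the identity, write $\psi_N=\sum_nc_nQ_n$ with $c_n=N^{-d/2}\sum_kg_N(k+n/N)$. Working in a fundamental cell of $\T^d$ centered at $x_0$ (permissible, since translating $x_0$ by a lattice vector only multiplies $\psi_N$ by a global phase), the rapid decay of $g$ and the concentration of $g_N$ at $x_0$ at scale $h^\gb\to0$ give $c_n=N^{-d/2}g_N(n/N)+O(N^{-\infty})$ uniformly in $n$, and likewise $\|\psi_N\|_{H_N}^2=N^{-d}\sum_n|g_N(n/N)|^2+O(N^{-\infty})$, which is, to $O(N^{-\infty})$, a Riemann sum at mesh $N^{-1}$ over one period of a $1$-periodic function whose Fourier coefficients are a fixed Schwartz profile dilated to frequency scale $h^{-\gb}$, with $h^{-\gb}\ll N$; Poisson summation gives $\|\psi_N\|_{H_N}^2=\|g\|_{L^2}^2+O(N^{-\infty})$ and hence $C_N'=\|g\|_{L^2}^{-2}+O(N^{-\infty})$. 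Now expand $\la\psi_N,\Op_N(a)\psi_N\ra_{H_N}=\sum_{m,j}\overline{c_m}A_{mj}c_j$ via the matrix-element formula $\Op_N(a)Q_j=\sum_mA_{mj}Q_m$ of \S\ref{ss:quantizedtorus}. Since the $c$'s concentrate near $x_0N$ and the Fourier coefficients $\hat a(k,k')$ decay rapidly once $|k|$ exceeds $N^\gd$, with $N^\gd\ll N$, the lattice wrap-around in $A_{mj}$ costs only $O(N^{-\infty})$, and after reindexing and substituting $c_n=N^{-d/2}g_N(n/N)$ (Cauchy--Schwarz, $O(N^{-\infty})$ again) the pairing equals
\begin{equation*}
 \sum_{k,k'}\hat a(k,k')\;N^{-d}\sum_m\overline{g_N(m/N)}\,g_N\big((m+k')/N\big)\,e^{2\pi ik(m/N+k'/(2N))}.
\end{equation*}
A direct computation of the Schwartz kernel of $\Op_h(\tilde a)$ on $L^2(\R^d)$ --- which is supported on $x-y\in\tfrac{1}{N}\Z^d$ precisely because $\tilde a$ is $\Z^d$-periodic in $\xi$ --- gives $\la g_N,\Op_h(\tilde a)g_N\ra_{L^2}=\sum_{k,k'}\hat a(k,k')\int\overline{g_N(x)}\,g_N(x+k'/N)\,e^{2\pi ik(x+k'/(2N))}\,\ddd x$, the same sum with the $x$-integral in place of the Riemann sum. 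In each $(k,k')$-term the integrand is $e^{2\pi ikx}$ times $N^{d\gb}$ times a fixed Schwartz profile (the modulations of $\overline{g_N}$ and of $g_N(\,\cdot\,+k'/N)$ cancel to a constant phase), so its Fourier transform is a Schwartz function centered at frequency $k$ with scale $h^{-\gb}$; as $N^\gd\ll N$, Poisson summation bounds that term's Riemann error by $O(N^{-M(1-\gb)})$ for every $M$, uniformly in $(k,k')$, and since $\sum_{k,k'}|\hat a(k,k')|$ is bounded by a power of $N$ the total error is $O(N^{-\infty})$, proving the identity.

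Finally, fix $\chi\in\cinf_0(T^*\R^d)$ with $\chi\equiv1$ near $(x_0,\xi_0)$; then $\chi\tilde a\in\cinf_0(T^*\R^d)\cap S_\gd$ with seminorms controlled by those of $a$, so Proposition~\ref{p:Rdcoherent_strong} (case $\gb\in(0,1)$) gives $\la g_N,\Op_h(\chi\tilde a)g_N\ra=\|g\|_{L^2}^2(a(x_0,\xi_0)+O(h^{\gt_0}))$ for every $\gt_0<\min(\gb-\gd,1-\gb-\gd)$. A second, fixed, compactly supported cutoff writes $(1-\chi)\tilde a$ as a symbol in $\cinf_0(T^*\R^d)\cap S_\gd$ vanishing near $(x_0,\xi_0)$ --- contributing $\|g\|_{L^2}^2O(h^{\gt_0})$ by the same proposition --- plus a symbol supported outside a fixed ball about $(x_0,\xi_0)$, which contributes $O(h^\infty)$ by the standard microlocalization of the coherent state $g_N$ at $(x_0,\xi_0)$ (in position and, via $\cF_h$, in momentum). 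Combined with $C_N'=\|g\|_{L^2}^{-2}+O(N^{-\infty})$ this gives $\tr_{H_N}(\rho_N\Op_N(a))=a(x_0,\xi_0)+O(h^{\gt_0})$ uniformly in $a$, identifying $\mu=\gd_{x=x_0,\xi=\xi_0}$ and establishing the $(\gd,\gt)$-property --- in fact for all $\gd<\min(\gb,1-\gb)$ and $\gt<\min(\gb-\gd,1-\gb-\gd)$, so the stated range follows a fortiori.

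The main obstacle is the identity: carrying the passage $H_N\to L^2(\R^d)$ through uniformly over $h$-dependent symbols $a\in S_\gd(\T^{2d})$ whose derivatives grow like $N^\gd$, so that the symbol's frequency content $N^\gd$, the discretization mesh $N^{-1}$, and the concentration scale $h^\gb$ of $g_N$ are correctly balanced in the Poisson-summation estimates. The normalization and the cutoff steps are routine, modulo invoking coherent-state microlocalization.
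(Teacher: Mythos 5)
Your proof is correct in outline and takes a genuinely different route from the paper's. The paper works entirely on the torus lattice: after truncating the matrix $A_{mj}$ to $|m-j|\le M=\lfloor N^\gamma\rfloor$ and restricting to the $O(R_\epsilon^d)$ indices $m$ near $Nx_0$, it replaces each summand by its value at $m/N$ (losing $O(MN^{\gd-1})$ per term) and then estimates the Riemann sum by triangle inequality. Accumulating over $O(M^dR_\epsilon^d)$ pairs of indices forces the $(d+1)\gd$ in the stated exponents. You instead prove a single transfer identity $\la\psi_N,\Op_N(a)\psi_N\ra_{H_N}=\la g_N,\Op_h(\tilde a)g_N\ra_{L^2(\R^d)}+O(N^{-\infty})$ by diagonalizing $\Op_h(\tilde a)$ in the Fourier series of $\tilde a$ (its Schwartz kernel is a comb on $x-y\in\frac1N\Z^d$), recognizing the resulting sum over $m$ as a Riemann sum, and using Poisson summation term-by-term in $(k,k')$ --- which exploits the cancellation that the paper's termwise estimate throws away. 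This cleanly absorbs all dimensional factors into the $O(N^{-\infty})$ error, so after the fixed cutoff and the already-proved Proposition~\ref{p:Rdcoherent\_strong}, you get $\gd<\min(\gb,1-\gb)$ and $\gt<\min(\gb-\gd,1-\gb-\gd)$: a strictly larger range than what is stated, with the $(d+1)$ absent (consistent with the paper's explicit disclaimer that its parameters are not claimed to be optimal). What the paper's approach buys is that it stays entirely on the torus, avoiding the non-compactly-supported periodization $\tilde a$; your approach requires the extra microlocalization step to kill $(1-\chi_2)\tilde a$. That step is correct but stated tersely: to make it rigorous you should (i) verify that $g_N$ is $O(h^\infty)$-concentrated at $(x_0,\xi_0)$ in both position (direct from Schwartz decay at scale $h^\gb$) and semiclassical momentum (via $\cF_h$, scale $h^{1-\gb}$), and (ii) run a disjoint-support Moyal-product argument for $S_\gd$ symbols with $\gd<1/2$, where the remainder at order $K$ has seminorms $O(h^{K(1-2\gd)})$ with constants depending only on finitely many seminorms of the two symbols, giving $O(h^\infty)$ uniformly over $a$ with bounded $S_\gd$ seminorms. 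With that spelled out, the proof stands, and it would be worth recording that it sharpens the exponents.
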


\begin{rem}\label{r:normalization}
We can easily see that $C_N, C_N'=1+o(1)$. Indeed, by Riemann sums:
\begin{equation*}
 C_N^{-1}=N^{-d}\sum_{k\in(\Z/N\Z)^d}\left|g\left(\frac{k}{N}\right)\right|^2=1+O(N^{-1}),
\end{equation*}
so $C_N=1+O(N^{-1})$. Similarly,
\begin{equation*}
\begin{gathered}
 (C_N')^{-1}=N^{-d(\gb-1)}\sum_{m\in(\Z/N\Z)^d}\left|g\left(N^{\gb}\left(\frac{m}{N}-x_0\right)\right)\right|^2\\=N^{-d}\sum_{m\in(\Z/N\Z)^d}N^{d\gb}\left|g\left(N^{\gb}\left(\frac{m}{N}-x_0\right)\right)\right|^2=1+O(N^{-(1-\gb)}),
 \end{gathered}
\end{equation*}
so $C_N'=1+O(N^{-(1-\gb)})$.

In particular, this shows that the constants $C_N$, $C_N'$ in Propositions \ref{prop_cs_beta0} and \ref{prop:cs_beta_int} can actually be taken to be $1$ without affecting the value of the defect measure, or even the possible values of $(\gd,\gt)$. They are only included so the density operators match our physical motivation of states with norm exactly $1$.
\end{rem}

The proofs of Propositions \ref{prop_cs_beta0} and \ref{prop:cs_beta_int} are given in Appendix \ref{aa:coherent_torus}.

\subsection{Measurement Procedure} \label{s:measurement procedure}

We provide a brief background on the quantum mechanics used in this paper. For a more comprehensive overview, see \cite{BG} \cite{Holevo}. To model particles in quantum mechanics, we begin by fixing a Hilbert space $H$, known as the \emph{state space}. Particles in our model are described by \emph{density operators}.

\begin{defi}[density operator]\label{def:desnity operator}
A density operator on $H$ is a symmetric, positive semidefinite, trace-class operator on $H$ with trace $1$.
\end{defi}

Quantum particles in this paper are measured to have some value in $\Omega$ (a locally compact metric space) by a type of quantum instrument called \emph{Kraus operators}.

\begin{defi}[Kraus operators] \label{def:Kraus operators}
Suppose $\Omega$ is a locally compact metric space with Borel measure $\nu$. And suppose for $q\in \Omega$, $A_q$ is a bounded operator on $H$, $q\mapsto A_q$ is measurable, and
\begin{align*}
    \int_\Omega A_q^* A_q \ddd \nu (q) = I.
\end{align*}
Then we call $A_q$'s \emph{Kraus operators}.
\end{defi}

Given a quantum particle described by a density operator $\rho$ and Kraus operators $A_q$ on $\Omega$, the probability of measuring $\rho$ in $F\subset \Omega$ (a measurable set) is:
\begin{align*}
    \tr \left( \int_F A_q \rho A_q^* \ddd \nu (q)  \right).
\end{align*}
After measurement within $F$, the state changes to:
\begin{align*}
    \rho(F) : = \frac{\int_F A_q \rho A_q^* \ddd \nu (q)  }{ \tr \left( \int_F A_q \rho A_q^* \ddd \nu (q)  \right)}.
\end{align*}
For each measurable set $F \subset \Omega$, we can define $\mathcal{I} (F) := \rho \mapsto \int_F A_q \rho A_q^* \ddd \nu (q) $. In this case, $\mathcal{I}$ is an example of a quantum instrument. 

This paper considers a specific class of Kraus operators. The Hilbert space we work on is the quantized torus $H_N$ (recall \S \ref{ss:quantizedtorus}). We fix a compact metric space $\Omega$ with finite Borel measure $\nu (q)$, and let $f_q\in \cinf(\T^{2d})$ satisfy 
\begin{equation}\label{eq:f_q satisfy condition}
    \int_{\Omega}\Op_N(f_q)^*\Op_N(f_q) \ddd \nu(q)=Id,
\end{equation}
where $\Op_N$ is defined in \eqref{eq:TN_Op}. Additionally, assume that $f_q$ are in $S$ (recall Definition \ref{def:symbol class}) uniformly in $q$, meaning $\|\dd^{\ga}f_q\|_{L^{\infty}}\le C_{\ga}$ with constant independent of $q$. These $\Op_N(f_q)$ are the Kraus operators we will consider and we will refer to $f_q$ as the symbols of the Kraus operators.

As a natural example of the above construction, the reader may consider the example where $\Omega=\T^d$, $\ddd \nu(q)=\ddd q$, and $f_q(x,\xi)=f(x-q)$, where $\int_{\T^d}|f(x)|^2\ddd x=1$. Then conjugation by $\Op_N(f_q)$ corresponds intuitively to an ``approximate position measurement" (see also \cite{BCFFS}), where the approximation approaches a (nonexistent) ``exact" position measurement as $f$ approaches a delta function.

\subsection{Evolution Procedure}
This paper models quantum particles which are repeatedly measured and evolved. The previous section discussed the measurement procedure, here we discuss the evolution procedure.

We consider two different evolution procedures. Particles will either be evolved by (1) a metaplectic operator or (2) exponentiation of a quantization of a Hamiltonian on the torus. In the first case, we fix $M\in\Sp(2d,\Z)$ and let $\wh{M}_N$ be the corresponding metaplectic operator acting on $H_N$ (recall \S \ref{s:Fourier integral operators}). In the second case, we fix $p\in\cinf(\T^{2d})$, and let $P=\Op_N(p)$ (recall \eqref{eq:TN_Op}).

Define an operator $\Phi_{N,q}^*$ on density operators given by \begin{equation}\label{eq:defPhi}\Phi_{N,q}^*(\rho):=U\Op_N(f_q)\rho\Op_N(f_q)^*U^{-1}
\end{equation} where $U$ is a unitary operator given either by $U=\wh{M}_N$ or $U=e^{-2\pi iNP}$.
We remark that $\Phi^*_{N,q}(\rho)$ is $U\tilde{\rho}(q)U^{-1}$ where $\tilde{\rho}(q)=\Op_N(f_q)\rho\Op_N(f_q)^*$ is an un-normalized a posteriori state of the instrument determined by Kraus operators $\Op_N(f_q)$. We stress again that $\Phi^*_{N,q}(\rho)$ is not a density operator, as it is not normalized; the trace will be taken at the end and interpreted as the probability density of the trajectory $(q_0,\dots, q_n)$.

The evolution of the quantum system by $U$ corresponds to a classical dynamical system $\phi_t : \T^{2d} \to \T^{2d}$. In the case of evolving by the metaplectic operator $\widehat{M_N}$, $\phi_t : = M^t$ (where $t$ takes integer values). In the case of evolving by $e^{-2\pi i N t\Op_N(p)}$, the corresponding classical dynamical system is $\phi_t : = \exp(tH_p)$ where $H_p$ is the Hamiltonian vector field generated by $p$.

The quantitative convergence rates of the measures on the quantum and classical trajectories depend on the Lyapunov exponents (recall Definition \ref{def:Lyapunov exponent}) of these corresponding classical dynamical systems. A system with a larger Lyapunov exponent is more exponentially sensitive to initial conditions, and hence seen as ``more chaotic." In particular, completely integrable systems have Lyapunov exponents equal to zero.

Lyapunov exponents appear crucially in the following bound on symbol seminorms of a time-evolved symbol.

\begin{prop}\label{p:Evolved_Symbol_Bound} 
Suppose $\phi_t$ is a smooth dynamical system on $\T^{2d}$ where $t$ either takes values in $\Z_{\ge 0}$ or $\mathbb R_{\ge 0}$ (in which case we assume $\phi_t$ is smooth in time), with Lyapunov exponent $\G$. Suppose $a\in C^\infty (\T^{2d})$, and define $a_t=a\circ\phi_t$. Then for $\ga\in\N^{2d}$, $\eps>0$, $$\|\dd^{\ga}a_t\|_{L^{\infty}}\le C_{\ga,\eps}e^{(\G+\eps)|\alpha|t}\sup_{|\gb|\le|\ga|}\|\dd^{\gb}a\|_{L^{\infty}}.$$ In particular if $a\in S$ and $t\le\frac{\gd}{\G+\eps}\log\frac{1}{h}$, then $a_t\in S_{\delta}$ with seminorms independent of $t$.\end{prop}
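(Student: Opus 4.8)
The plan is to prove the derivative bound by induction on $|\alpha|$, using the Faà di Bruno formula for derivatives of a composition together with a Grönwall-type estimate on the derivatives of the flow map $\phi_t$ itself. The key geometric input is the Lyapunov exponent: by Definition \ref{def:Lyapunov exponent}, for every $\eps>0$ there is a constant $C_\eps$ with $|\dd_y\phi_t(y)|\le C_\eps e^{(\G+\eps)t}$ uniformly in $y\in\T^{2d}$ and $t\ge0$. I would first record this, and then establish the analogous bound for higher derivatives of the flow: for each $k\ge1$ there is a constant $C_{k,\eps}$ with $\sup_y|\dd_y^k\phi_t(y)|\le C_{k,\eps}e^{k(\G+\eps)t}$. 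In the discrete case ($t\in\Z_{\ge0}$) this follows by writing $\phi_t=\phi_1\circ\dots\circ\phi_1$ and applying the chain rule, bounding each factor of the first derivative by $e^{(\G+\eps)}$ (after absorbing the finitely many low-order terms into the constant). In the smooth-in-time case, one differentiates the ODE $\tfrac{d}{dt}\phi_t=X(\phi_t)$ in $y$ repeatedly; the equation for $\dd_y^k\phi_t$ is linear in $\dd_y^k\phi_t$ with the same linear coefficient $DX(\phi_t)$ that governs $\dd_y\phi_t$, plus an inhomogeneous term built from lower-order derivatives $\dd_y^j\phi_t$, $j<k$, and derivatives of $X$ evaluated along the trajectory (which are bounded since $\phi_t$ stays in the compact torus and $X$ is smooth). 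Feeding in the inductive bound $e^{j(\G+\eps')t}$ for $j<k$ and solving the linear ODE via the variation-of-constants formula gives the claimed $e^{k(\G+\eps)t}$ after slightly shrinking $\eps'$.

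With the flow derivative bounds in hand, I would apply Faà di Bruno to $\dd^\alpha(a\circ\phi_t)$: it is a finite sum, with combinatorial coefficients depending only on $\alpha$, of terms of the form $(\dd^\gamma a)(\phi_t)\cdot\prod_{j} \dd^{\beta_j}(\phi_t)_{i_j}$ where $|\gamma|\le|\alpha|$ and the multi-indices $\beta_j$ satisfy $\sum_j|\beta_j|=|\alpha|$ and each $|\beta_j|\ge1$. Bounding $(\dd^\gamma a)(\phi_t)$ by $\sup_{|\gamma|\le|\alpha|}\|\dd^\gamma a\|_{L^\infty}$ and each flow factor by $C_{|\beta_j|,\eps'}e^{|\beta_j|(\G+\eps')t}$, the product of exponentials is $e^{(\sum_j|\beta_j|)(\G+\eps')t}=e^{|\alpha|(\G+\eps')t}$, which is exactly the desired exponent (again after renaming $\eps'$ to $\eps$). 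The combinatorial constant and the $C_{k,\eps}$'s combine into $C_{\alpha,\eps}$, uniform in $t$.

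For the last sentence of the proposition, suppose $a\in S$, i.e. $\|\dd^\beta a\|_{L^\infty}\le C_\beta$ with constants uniform in $h$ (here on $\T^{2d}$ this means $S_0(\T^{2d})$). Then for $t\le\tfrac{\gd}{\G+\eps}\log\tfrac1h$ we get $e^{(\G+\eps)|\alpha|t}\le h^{-\gd|\alpha|}$, so the bound just proved reads $\|\dd^\alpha a_t\|_{L^\infty}\le C_{\alpha,\eps}\,h^{-\gd|\alpha|}\sup_{|\beta|\le|\alpha|}C_\beta$, which is precisely the $S_\gd(\T^{2d})$ seminorm estimate, with constants independent of $t$ (and of $h$).

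The main obstacle I anticipate is the higher-derivative flow bound in the continuous-time case: one must set up the hierarchy of linearized equations carefully and verify that the inductive exponential rates close up without loss — i.e. that going from $\dd_y^j\phi_t$ bounds for $j<k$ to the $\dd_y^k\phi_t$ bound via Grönwall genuinely produces $e^{k(\G+\eps)t}$ rather than something like $t\cdot e^{k(\G+\eps)t}$ or a worse rate. The standard fix is to work with a strictly larger auxiliary rate $\G+\eps'<\G+\eps$ at each inductive stage so that the polynomial-in-$t$ factors coming from the variation-of-constants integral are absorbed by the gap $e^{(\eps-\eps')|\alpha|t}$; this is routine but needs to be spelled out. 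The discrete-time case, by contrast, is essentially immediate from the chain rule and submultiplicativity.
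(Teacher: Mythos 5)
Your approach is correct and genuinely different from the paper's. You factor the argument through an explicit lemma bounding the higher derivatives of the flow, $\sup_y|\dd_y^k\phi_t(y)|\le C_{k,\eps}e^{k(\G+\eps)t}$, and then apply Fa\`a di Bruno to $a\circ\phi_t$. The paper instead works directly with the vector of derivatives $A_n(x)=(D(a\circ\phi^n),\dots,D^k(a\circ\phi^n))$, sets up a matrix recursion $A_n(x)=A_{n-1}(\phi^1(x))M(x)$ with $M$ upper triangular, and bounds the product $\prod_j M(\phi^{n-j}(x))$ via a ``paths on a grid'' combinatorial argument; it never states an explicit flow-derivative lemma, obtaining the needed bounds as a by-product of the induction by taking $a$ to be coordinate functions. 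Both are inductions on $|\alpha|$ organizing essentially the same Fa\`a di Bruno combinatorics; your route is closer to the references \cite{dyatlov2014,dyatlov2022} that the paper cites, while the paper's is self-contained and reduces the continuous-time case to the discrete one at the very end via $\phi^t=\phi^{t-n}\circ\phi^n$ with $t-n\in[0,1]$.

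One point needs correcting. In your discrete-time sketch for the flow-derivative lemma you write that the chain-rule product $D\phi^n(x)=\prod_{i=0}^{n-1}D\phi^1(\phi^i(x))$ is controlled by ``bounding each factor of the first derivative by $e^{\G+\eps}$.'' The Lyapunov exponent does not give you a per-step bound $\|D\phi^1\|_{L^\infty}\le e^{\G+\eps}$: it controls the asymptotic growth rate of $D\phi^t$ as $t\to\infty$, and the one-step Jacobian may well exceed $e^{\G+\eps}$ in norm. The correct move, which the paper uses, is to bound whole \emph{runs} of first-derivative factors at once, i.e. to recognize $\prod_{i=j_1}^{j_2}(D\phi^1)\circ\phi^{n-i}(x)$ as $D(\phi^{j_2-j_1+1})\circ\phi^{n-j_2+1}(x)$ and apply the Lyapunov bound to that composite, while the higher-derivative factors $D^r\phi^1$ with $r\ge2$ (of which each Fa\`a di Bruno term contains only boundedly many, with total order contributing to $|\alpha|$) are bounded by constants since $\phi^1$ is a fixed smooth map of the compact torus. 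Your continuous-time Gr\"onwall argument does not suffer from this issue, since variation of constants controls the solution over the full interval rather than step by step, but the discrete case should be phrased the same way. With that adjustment, and the $\eps'<\eps$ bookkeeping you already flag for absorbing polynomial factors, the argument is sound.
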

A version of Proposition \ref{p:Evolved_Symbol_Bound} on manifolds is discussed in \cite[\S 5.2]{AN}, \cite[Appendix C]{dyatlov2014}, and \cite[\S 5.2]{dyatlov2022}. We present a proof of Proposition \ref{p:Evolved_Symbol_Bound} in Appendix \ref{s:derivative of flow}, adapting the arguments of the mentioned references to the simpler case of the torus.

\section{Main Result and Proof}\label{s:theorems}

We are now able to provide a quantitative proof of our result.

\begin{thm}\label{t:torusquant}
Suppose $H_N$ is the quantized torus of dimension $2d$, $\rho=\rho_N:H_N\to H_N$ is a set of density operators depending on $N$ with semiclassical defect measure $\mu$ on $\T^{2d}$, $\Omega$ is a compact metric space with finite Borel measure $\nu$, $f_q \in C^\infty(\mathbb T^{2d})$ satisfy \eqref{eq:f_q satisfy condition},  $P_{\mu}^{(n)}$ is defined by (\ref{eq:classicalmeasure}), and $\bP_{N,\rho}^{(n)}$ is defined by (\ref{eq:quantummeasure}), with $\Phi_{N,q}^{*}$ defined by $(\ref{eq:defPhi})$. Then for every $n\in\N$ we have as $N\to\infty$ the convergence in total variation 
\begin{equation*}
 \bP_{N,\rho}^{(n)}\xrightarrow{N\to\infty}P_{\mu}^{(n)}.
\end{equation*}

Furthermore, we have the quantitative estimate that if $\mu$ is a $(\gd,\gt)$-defect measure as defined in (\ref{eq:delta_DM}) and $\eps>0$, then there are $C_0, C_1$ independent of $n$ such that if \begin{equation}\label{eq:ehrenfest_condition}n<\frac{\gd\log(N/C_0)}{\G+\eps},\end{equation} 
then
\begin{align}\label{eq:thm_thing}
 \left\|\bP_{N,\rho}^{(n)}-P_{\mu}^{(n)}\right\|_{TV}\le C_1N^{-\min(1-2\gd,\theta)}
\end{align}
where $\Gamma$ is the Lyapunov exponent of $M$ (metaplectic case) or $\phi_t$ (Hamiltonian evolution case). \end{thm}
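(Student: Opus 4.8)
The plan is to reduce everything to a statement about symbols, track errors through the composition of measurement and evolution, and then interpret the trace as an integral against the defect measure. First I would write out both measures as densities against the product measure $\ddd\nu(q_0)\cdots\ddd\nu(q_n)$, so that the total variation distance becomes $\int_{\Omega^{n+1}}\bigl|\tr_{H_N}(\Phi^*_{N,q_n}\circ\cdots\circ\Phi^*_{N,q_0}(\rho))-\int_{\T^{2d}}\prod_{j=0}^n|f_{q_j}(\phi^j(\zeta))|^2\ddd\mu(\zeta)\bigr|\ddd\nu(q_0)\cdots\ddd\nu(q_n)$; since $\nu$ is a finite measure, it suffices to bound the integrand uniformly in $(q_0,\dots,q_n)$ by $C_1 N^{-\min(1-2\gd,\gt)}$ (with the implicit constant absorbing $\nu(\Omega)^{n+1}$, which is where the $n$-dependence of $C_1$ would come from in the non-quantitative part, but is controlled in the Ehrenfest regime by the restriction \eqref{eq:ehrenfest_condition}). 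Using unitarity of $U$ and the trace cyclicity, $\tr_{H_N}(\Phi^*_{N,q_n}\circ\cdots\circ\Phi^*_{N,q_0}(\rho))=\tr_{H_N}(\rho\, B_N)$ where $B_N=\Op_N(f_{q_0})^*\,U^*\Op_N(f_{q_1})^*\,U^*\cdots\Op_N(f_{q_n})^*\Op_N(f_{q_n})U\cdots\Op_N(f_{q_1})U\Op_N(f_{q_0})$.

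The heart of the argument is to show $B_N=\Op_N(b)+\Op_N(r)$ where $b=\prod_{j=0}^n|f_{q_j}\circ\phi^j|^2$ and $r$ has small $S_\gd(\T^{2d})$ seminorms. Here I would use the exact Egorov relation \eqref{eq:MNEgorov} (metaplectic case) or the quantitative Egorov theorem \eqref{eq:QuantEgorov} (Hamiltonian case) to move each conjugation by $U$ onto the symbols: $U^*\Op_N(f_{q_j})U=\Op_N(f_{q_j}\circ\phi^j)$ up to an $O(h^{2-3\gd})$ error in the Hamiltonian case (no error in the metaplectic case). Crucially, Proposition \ref{p:Evolved_Symbol_Bound} together with the Ehrenfest restriction \eqref{eq:ehrenfest_condition} guarantees that each $f_{q_j}\circ\phi^j$ lies in $S_\gd(\T^{2d})$ with seminorms uniform in $j$ and in $q_j$ — this is exactly why the hypothesis $n<\gd\log(N/C_0)/(\G+\eps)$ appears, and $C_0$ is chosen large enough to absorb the $C_{\ga,\eps}$ constants from that proposition. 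Then I collect all $2(n+1)$ factors and apply the composition estimate \eqref{eq:n_comp_rule} (in its $\T^{2d}$ form) to replace the product of quantizations by the quantization of the product, incurring a $\poly(n)\,O(h^{1-2\gd})$ error; since $\poly(n)$ with $n=O(\log N)$ is absorbed into any fixed negative power of $N$, this contributes $O(N^{-(1-2\gd)+\text{small}})$. The Egorov errors, being lower order, are likewise absorbed.

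Finally, I would invoke the $(\gd,\gt)$-defect measure property \eqref{eq:delta_DM}: since $b=\prod_j |f_{q_j}\circ\phi^j|^2 \in S_\gd(\T^{2d})$ with seminorms bounded uniformly in all the $q_j$ (a product of finitely many — well, $n+1$ many — uniformly-$S_\gd$ symbols, with seminorm growth again only $\poly(n)$), we get $\tr_{H_N}(\rho\,\Op_N(b))=\int_{\T^{2d}}b\,\ddd\mu+O(N^{-\gt})$, where the error constant is uniform in $(q_0,\dots,q_n)$. Combining, the integrand in the $TV$ integral is $O(N^{-\gt})+O(N^{-(1-2\gd)}\poly(\log N))=O(N^{-\min(1-2\gd,\gt)+\text{small}})$; to get the clean exponent stated, one takes $\eps$ slightly smaller in \eqref{eq:ehrenfest_condition} so the $\poly(\log N)$ is swallowed, or equivalently states the bound with $\min(1-2\gd,\gt)$ understood up to an arbitrarily small loss — the paper's $C_1$ can be taken to depend on this loss. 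For the non-quantitative statement (fixed $n$, general defect measure not assumed uniform-$(\gd,\gt)$), one instead uses Proposition \ref{p:TorusDMUniform}: every torus defect measure is uniform, so for fixed $n$ the symbol $b$ has bounded seminorms and $\tr_{H_N}(\rho\,\Op_N(b))\to\int b\,\ddd\mu$, while the composition error is $O(h)\to0$; this gives $TV\to 0$ without rate.

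The step I expect to be the main obstacle is controlling the $S_\gd(\T^{2d})$ seminorms of the $(n+1)$-fold product $b=\prod_j|f_{q_j}\circ\phi^j|^2$ and of the error symbol $r$ uniformly in $n$ in the Ehrenfest window: one must check that the Leibniz-rule blowup in differentiating a product of $2(n+1)$ factors, combined with the per-factor seminorm bounds from Proposition \ref{p:Evolved_Symbol_Bound}, still yields only a $\poly(n)$ (or at worst $C^n$ with $C$ tame) overall growth that the gain $h^{1-2\gd}=N^{-(1-2\gd)}$ can beat once $n=O(\log N)$ — this is the delicate bookkeeping that ties together \eqref{eq:n_comp_rule}, Proposition \ref{p:Evolved_Symbol_Bound}, and the precise form of \eqref{eq:ehrenfest_condition}, and is presumably why the authors isolate Proposition \ref{p:Evolved_Symbol_Bound} as a standalone result.
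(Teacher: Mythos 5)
Your proposal follows the paper's proof essentially step-for-step: cyclicity of the trace plus unitarity to reduce to $\tr(\rho B_N)$, exact Egorov in the metaplectic case (and quantitative Egorov in the Hamiltonian case) to push conjugations onto the symbols, Proposition~\ref{p:Evolved_Symbol_Bound} with the Ehrenfest condition~\eqref{eq:ehrenfest_condition} to keep each $f_{q_j}\circ\phi^j$ in $S_\gd(\T^{2d})$ uniformly, the $n$-fold composition estimate~\eqref{eq:n_comp_rule} via~\eqref{eq:HN_L2_bound} with a $\poly(n)$ error absorbed by a small adjustment of $\gd$ or $\eps$, and finally the $(\gd,\gt)$-defect measure property to pass from the trace to the integral against $\mu$. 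The only cosmetic difference is that the paper handles the non-quantitative statement via dominated convergence rather than by explicitly invoking Proposition~\ref{p:TorusDMUniform}, but these amount to the same uniformity-in-$q$ argument.
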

\begin{rem}\label{r:Ehrenfest}
The condition (\ref{eq:ehrenfest_condition}) can be recognized as the \emph{Ehrenfest time} given in (\ref{eq:Ehrenfest_time}).
\end{rem}

\begin{proof}[Proof of Theorem \ref{t:torusquant}]
This proof goes along the lines of \cite[Theorem 3.1]{BFF}, but gives quantitative bounds. The version for $U=e^{-2\pi iNP}$ requires the quantitative version of Egorov's theorem given in (\ref{eq:QuantEgorov}).

We first consider when $U=\wh{M}_N.$ We have

\begin{align}
\label{eq:proof}
 \begin{split}
 &\ddd \bP^{(n)}_{N,\rho}(q_0,\dots,q_n)= \tr_{H_N} \left ( \prod_{i=n}^0 \left(\wh{M}_N \Op_N (f_{q_i}) \right) \rho \prod_{i=0}^n \left( \Op_N (f_{q_i})^* \wh{M}_N^{-1} \right) \right) \ddd \nu (q)\\
 &= \tr _{H_N} \left( \prod_{i=0}^n \left( \Op_N (f_{q_i})^* \wh{M}_N^{-1} \right) \prod _{i=n}^0 \left ( \wh M _N \Op_N (f_{q_i})\right) \rho \right) \ddd \nu (q)\\
 &= \tr _{H_N} \left( \prod_{i=0}^n \left ( \Op_N (f_{q_i} \circ M^i )^* \right) \prod _{i=n}^0 \left( \Op_N (f_{q_i} \circ M^i ) \right) \rho \right)\ddd \nu (q)\\
 &= \left(\tr_{H_N}\left(\Op_N(|f_{q_0}|^2\cdots|f_{q_n}\circ M^n|^2)\rho\right)+\poly(n)O(N^{-(1-2\gd)})\right)\ddd \nu(q)\\
 &=\left(\int_{\T^{2d}}|f_{q_0}(\zeta)|^2\cdots|f_{q_n}(M^n\zeta)|^2\ddd \mu+\poly\log(N)(O(N^{-(1-2\gd)})+O(N^{-\theta}))\right)\ddd \nu(q).
 \end{split}
\end{align}

The first three equalities follow from the definition of $\Phi_{N,q_j}^*$, commutativity of the trace, the exact Egorov property (\ref{eq:MNEgorov}) of $\wh{M}_N$, and the unitarity of $\wh{M}_N^*$. We then used Proposition \ref{p:Evolved_Symbol_Bound} to see that for each $i$, $f \circ M^i \in S_\delta$, for 
\begin{align*}
 \delta \ge \frac{(\Gamma + \e ) n }{\log (N)}
\end{align*}
so that we could use the composition rule for pseudodifferential operators (\ref{eq:n_comp_rule}) together with (\ref{eq:HN_L2_bound}) to get the fourth equality. We lastly used that $\rho$ has $(\gd,\gt)$-defect measure $\mu$ to get the last equality. Note that we may remove the $\poly\log(N)$ term by adjusting $\eps$ and replacing $\gd$ with a slightly smaller $\gd'$.

On the other hand, 
\begin{align*}
 \ddd P^{(n)}_{\mu}=\int_{\T^{2d}}|f_{q_0}(\zeta)|^2\cdots|f_{q_n}(M^n\zeta)|^2\ddd \mu\ddd \nu(q).
\end{align*}
We may now integrate in $q$ and apply the dominated convergence theorem (using that the $\|f_q \| $ are uniformly bounded in $q$) to get the theorem for the case that $U=\wh{M}_N$.

When $U=e^{-2\pi i NP}$, the proof is the same but uses the quantitative version of Egorov's theorem (\ref{eq:QuantEgorov}) on the second line of (\ref{eq:proof}), to give
\begin{align*}
    &\tr _{H_N} \left( \prod_{i=0}^n \left( \Op_N (f_{q_i})^* e^{2\pi i NP} \right) \prod _{i=n}^0 \left ( e^{-2\pi i NP} \Op_N (f_{q_i})\right) \rho \right) \ddd \nu (q)\\
    & \quad =\left(\tr _{H_N} \left( \prod_{i=0}^n \left ( \Op_N (f_{q_i} \circ \phi_i )^* \right) \prod _{i=n}^0 \left( \Op_N (f_{q_i} \circ \phi_i ) \right) \rho \right)\right)\ddd \nu (q)\\
    & \qquad \qquad \qquad \qquad \qquad \qquad + \left( \poly\log(N)O(N^{-(2-3\gd)})\right)\ddd \nu (q).
\end{align*}
The $O(N^{-(2-3\gd)})$ error is dominated by the $O(N^{-(1-2\gd)})$ error later on in (\ref{eq:proof}), and the remainder of the proof is the same.
\end{proof}

\section{Numerical Illustrations}\label{s:numerics}

Here we present numerical simulations to illustrate the results of this paper. In these numerics, we simulate the evolution of a state on the quantization of the 2-dimensional torus. The measurement procedure will use Kraus operators $f_q (x,\xi) = f(x - q)$, with 
\begin{align}\label{eq:f_def}
 f(x) = c \left ( \sum_{k\in \mathbb Z} \exp\left (\frac{-(x - 2\pi k)^2}{2 \sigma^2}\right)\right)^{1/2}
\end{align}
where $\sigma > 0$ is a parameter to be chosen\footnote{$\sigma$ can be viewed as the measurement precision. The smaller the value of $\sigma$, the more precise.} and $c>0$ is such that $\int_\mathbb T |f|^2 = 1$. At each time step, after observation, we evolve the particle with the metaplectic operator $\wh M_N$ (as defined in \S \ref{ss:quantizedtorus}). For now, we let $M$ be Arnold's cat map,
\begin{align} 
 M = \mat{2 & 1 \\ 1 & 1}. \label{eq:cat_map_matrix}
\end{align}
We also choose the initial state to be $\rho = \rho (N) = | \psi _N \rangle \langle \psi _N |$ with
\begin{align}
 \psi_N = \Pi_N \left( \frac{N^{\frac{\beta}{2}}}{\sigma' \sqrt{2\pi}} \exp \left ( \frac{ -((x - x_0 ) N^\beta )^2 }{2 (\sigma') ^2} + 2\pi i N x p_0 \right)\right) \label{eq:initial state}
\end{align}
with $\Pi_N$ defined in \eqref{eq:PiDefinition} and some fixed $x_0, p_0 \in [0,1]$ and $\sigma' >0$. By Proposition \ref{prop:cs_beta_int}, $\rho$ has semiclassical defect measure $\delta _{x = x_0,\xi = p_0}$ when $\beta \in (0,1)$.

Let $Q_i$ be the observed location of the state at time $i$. These $Q_i$'s are random variables, and their joint distribution is described by \eqref{eq:quantummeasure}. To numerically simulate this distribution, we note that formally $\bP\left (\bigcap _{i=0}^n \set{Q_i = q_i} \right )$ can be written
\begin{align*}
\bP\left (Q_n = q_n \; \middle| \;\bigcap _{i=0}^{n-1} \set{Q_i = q_i}\right )\bP\left (Q_{n-1} = q_{n-1} \;\middle| \; \bigcap _{i=0}^{n-2} \set{Q_i = q_i}\right )\cdots \bP(Q_0 = q_0).
\end{align*} 
Therefore to sample $Q_i$ from this distribution, we first sample $Q_0$ with law $\bP(Q_0 = q_0) \ddd \nu(q_0)$, and get a value, which we call $\bar q_0$. Then we sample $Q_1$ with law $$\bP \left ( Q_1 = q_1 \; \middle| \; Q_0 = \bar q_0 \right ) \ddd \nu(q_1).$$ We continue this process to get values of $\bar q_i$.

For our choice of initial state and Kraus operators, the probability density functions of these distributions greatly simplify. Working in the basis of $H_N$ described in \S \ref{ss:quantizedtorus}, we write $|\psi_N \rangle$ as a vector $v(0) \in \C^N$ (the argument of $v$ will denote the time at which the state is to be observed). It is a straightforward computation that, with respect to this basis, $$\Op_N (f_q) = \diag(f(x_1 - q) , f(x_2 - q), \dots , f(x_N - q) )$$ where $x_i = i/N$. To ease notation, let $\wh M_N := U$ and $\Op_N(f_{q}) := A_q$. Then, by \eqref{eq:quantummeasure},
\begin{align*}
\bP(Q_0 = q_0 )& = \tr(U A_{q_0 } v(0) \bar v(0) ^t A_{q_0} U^*) \\
&= \sum_k (A_{q_0} )_{k,k}^2 v(0)_k^2 = \sum_k f(x_k - q_0)^2 v(0)_k^2.
\end{align*}
Observe that this is the convolution of the probability density functions $f(x)^2$ and $\sum_k 1_{x_k} v(0)_k^2$ (call random variables with these distributions $F$ and $V(0)$ respectively). Here $1_{x_k}$ is $1$ at $x_k$ and zero everywhere else. Therefore to sample $Q_0$, we sample from the random variable $F + V(0)$ to get $\bar q_0$. 

To sample $Q_i$, for $i > 1$, we proceed in a similar manner. Assuming we have already computed that $Q_j = \bar q_j$ for $j <i$, we must sample a random variable with distribution
\begin{equation}\label{eq:177}
\begin{split}
\bP&\left (Q_i = q_i \; \middle| \; \bigcap _{j=1}^{i-1} Q_j = \bar q_j \right ) \ddd \nu(q_i) \\
&= \bP \left ( \bigcap _{j=1}^{i-1} Q_j = \bar q_j \right)^{-1} \tr\left ( A_{q_i} \prod _{j=1}^{i} (U A_{\bar q_{i-j} } ) v(0) v(0)^t \prod _{j=0}^{i-1} (A_{\bar q_j} U^*) A_{q_i}\right )\ddd \nu(q_i). 
\end{split}
\end{equation}
Let:
\begin{align*}
v(i) = \prod _{j=1}^i (U A_{\bar q_{i-j}}) v(0),
\end{align*}
so that \eqref{eq:177} is $c\sum _ k f(x_k - q_i)^2 v(i)_j^2$, with $c^{-1}= \sum_k v(i)_k^2$. Like before, let $V(i)$ be a random variable with probability density function $\sum_k v(i) _k^2 1_{x_k}$. Therefore, to sample $Q_i$, sample from $V(i) + F$. This proves the following proposition.

\begin{prop}
Suppose $\rho = |\psi_N \rangle \langle \psi_N | $  is an initial state (as defined by \eqref{eq:initial state}), $f(x-q)$ are the symbols of Kraus operators (given by \eqref{eq:f_def}), and $n > 0$. Then a random trajectory of a quantum particle with distribution $\mathbb P^{(n)} _{N,\rho}$, described by \eqref{eq:quantummeasure}, can be sampled by applying Algorithm \ref{qmap} by storing the values $q_0, q_1,\dots ,q_n$ (representing the observed locations of the particle at each instance of time).

\begin{algorithm}
\caption{Quantum Particle Evolution}\label{qmap}
\begin{algorithmic}[1]
\State $\textbf{v} \gets |\psi _N \rangle$ 
\While{$i \le n$}
\State $q_i \gets $ sampled random number from distribution $\sum_k 1_{x_k} \textbf{v}_k^2$
\State $g \gets $ sampled random number from Gaussian distribution of mean $0$, variance $\sigma^2$
\State $q_i \gets q_i + g \pmod 1$
\State $\textbf{v} \gets \diag (f(x_1 - q_i), \dots, f(x_N - q_i) ) \textbf{v}$
\State $\textbf{v} \gets \hat M_N \textbf{v}$
\State $\textbf{v} \gets \norm{ \textbf{v}}^{-1} \textbf{v}$
\State $i \gets i + 1$
\EndWhile
\end{algorithmic}
\end{algorithm}
In this algorithm, we use the fact that in the fixed basis of $H_N$, the components of $\widehat M_N$ are given by
\begin{align}\label{eq:quantum_cat}
(\wh M)_{j,k} = \frac{1}{\sqrt N} \exp\left ( \pi i (2k ^2 - 2 j k + j^2 ) /N \right),
\end{align}
as computed in \cite{DJ}.
\end{prop}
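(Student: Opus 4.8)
The plan is to verify directly that the iterative sampling in Algorithm \ref{qmap} reproduces the joint law \eqref{eq:quantummeasure}, using the chain rule of conditional probability together with the fact that, in the fixed orthonormal basis $\{Q_k\}$ of $H_N$, each operator $\Op_N(f_q)$ is diagonal. First I would record that, writing $|\psi_N\rangle$ as a vector $v(0)\in\C^N$ and $A_q := \Op_N(f_q) = \diag(f(x_1-q),\dots,f(x_N-q))$ with $x_k = k/N$, the trace in \eqref{eq:quantummeasure} with $F=\{q_0\}$ collapses: since $U$ is unitary and $A_{q_0}$ is a real diagonal matrix, $\tr(UA_{q_0}|v(0)\rangle\langle v(0)|A_{q_0}^*U^*) = \langle v(0)|A_{q_0}^2|v(0)\rangle = \sum_k f(x_k-q_0)^2|v(0)_k|^2$. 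This is exactly the convolution of the density $f(\cdot)^2$ with the atomic measure $\sum_k |v(0)_k|^2 1_{x_k}$; since $f(\cdot)^2$ is, up to normalization, a wrapped Gaussian of variance $\sigma^2$ by \eqref{eq:f_def}, sampling $Q_0$ from this convolution is precisely what lines 3--5 of the algorithm do (draw from $\sum_k|v(0)_k|^2 1_{x_k}$, add an independent centered Gaussian of variance $\sigma^2$, reduce modulo $1$).

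For the inductive step I would introduce the un-normalized conditional state $v(i):=\prod_{j=1}^i (UA_{\bar q_{i-j}})\,v(0)$, so that $v(i)=UA_{\bar q_{i-1}}v(i-1)$. Expanding the conditional density \eqref{eq:177} and again using cyclicity of the trace, unitarity of $U$, and the diagonality of $A_{q_i}$, the conditional law of $Q_i$ given $\bigcap_{j<i}\{Q_j=\bar q_j\}$ has density proportional to $\sum_k f(x_k-q_i)^2|v(i)_k|^2$ with normalizing constant $(\sum_k |v(i)_k|^2)^{-1}$, as in the derivation preceding the proposition. This is once more a convolution of $f(\cdot)^2$ against $\sum_k|v(i)_k|^2 1_{x_k}$, so $Q_i$ is sampled by drawing from $\sum_k|v(i)_k|^2 1_{x_k}$ and adding an independent centered Gaussian of variance $\sigma^2$ modulo $1$ --- again exactly lines 3--5.

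It then remains to match the state bookkeeping. The algorithm's vector $\mathbf v$ after iteration $i$ equals $v(i)/\|v(i)\|$: line 6 applies $\diag(f(x_1-q_i),\dots,f(x_N-q_i))=A_{q_i}$, line 7 applies $\wh M_N=U$, and line 8 renormalizes. The renormalization is harmless because the sampling distribution $\sum_k|\mathbf v_k|^2 1_{x_k}$ at the next step is scale-invariant (it is the probability measure proportional to the same atomic weights), and nothing beyond the direction of $v(i)$ is used at step $i+1$. Finally I would cite \cite{DJ} for the explicit matrix-element formula \eqref{eq:quantum_cat} for $\wh M_N$ in this basis, which is what makes the update in line 7 computable.

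There is no genuine obstacle: the whole argument is a bookkeeping verification. The only points needing a little care are the collapse of the trace to a diagonal sum, which hinges on $A_q$ being diagonal and $U$ unitary so that $\tr(UA\,vv^t A U^*)=\sum_k A_{kk}^2|v_k|^2$, and the identification of $f(\cdot)^2$ with a wrapped Gaussian density so that the ``sample a Gaussian, then reduce modulo $1$'' step is justified; both are immediate from \eqref{eq:f_def} and the explicit form of $\Op_N(f_q)$ in the chosen basis.
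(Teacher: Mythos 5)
Your proposal is correct and follows essentially the same route as the paper: the chain rule for conditional probability, the collapse of the trace to a diagonal sum using unitarity of $U$ and diagonality of $A_q$ in the $Q_k$ basis, the identification of the resulting density as a convolution of the atomic measure $\sum_k |v(i)_k|^2 1_{x_k}$ with the wrapped Gaussian $f(\cdot)^2$, and the bookkeeping $v(i)=UA_{\bar q_{i-1}}v(i-1)$ matching lines 6--8. The only thing you add beyond the paper is the explicit remark that line 8's renormalization is harmless by scale invariance of the sampling distribution, which the paper leaves implicit.
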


Theorem \ref{t:torusquant} states that this distribution, $\mathbb P^{(n)} _{N,\rho}$, will converge (as $N \to \infty$) to $P_\mu ^{(n)}$. By Proposition \ref{prop:cs_beta_int}, the semi-classical defect measure $\mu$ for our chosen $\rho$ is $\delta _{x = x_0, \xi = p_0}$, so that
\begin{align*}
P_\mu^{(n)} (q_0 ,\dots ,q_n) = \prod _{ k = 0}^n \left | f \left (M^{k } \mat{x_0 \\ p_0} - q_k \right ) \right |^2 \ddd q_k.
\end{align*}
Therefore the distribution of $q_k$ according to $P_\mu ^{(n)}$ is a Gaussian centered at the position component of $M^k (x_0, p_ 0) ^t $, with variance $\sigma^2$.

Now we present several numerical simulations. 

First, Figure \ref{fig:fig2} demonstrates the evolution procedure. An initial state is given by \eqref{eq:initial state} with $N= 100$, $\beta = 0.5$, $\sigma' = 0.1$, $x_0 = 0.5$, and $p_0 = 0$. The symbols of the Kraus operators are given by $f(x -q)$ where $f$ is given by \eqref{eq:f_def}, with $\sigma = 0.1$. The absolute value squared of the components of $\psi$ (with respect to the basis of $H_N$) is plotted in blue. We then apply Algorithm \ref{qmap} to get a value $q_1$ which is plotted as an orange dot. The state is then changed by observation (by step 6 of Algorithm \ref{qmap}), which is plotted in the next plot. Then the particle is evolved by applying $\widehat M$, where $M$ is the cat map \eqref{eq:cat_map_matrix} to get a new state. The absolute value squared of the components of the new state are plotted in the next plot, and the process is repeated for 3 time steps.

\begin{figure}
 \centering
 \includegraphics[width = \textwidth]{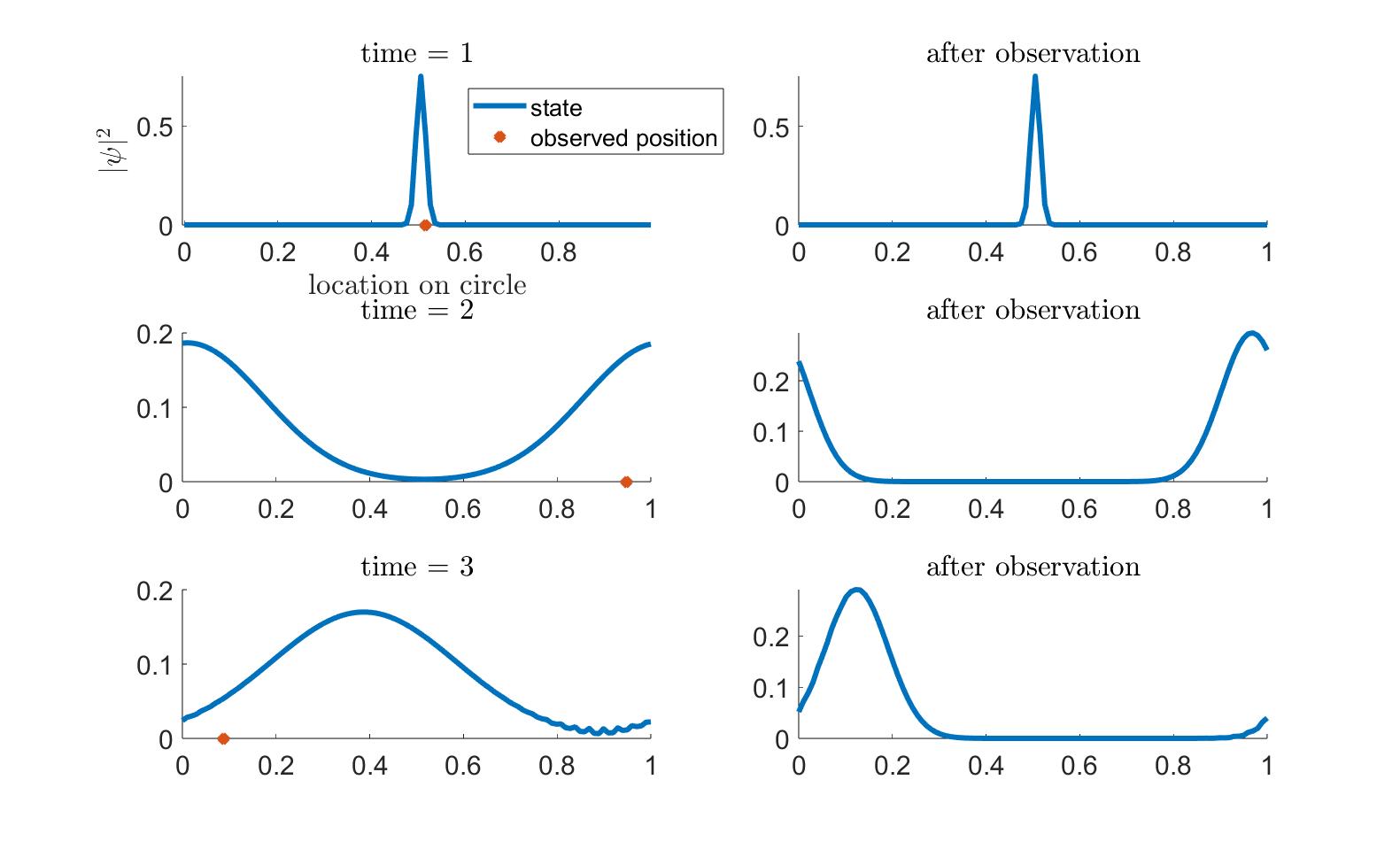}
 \caption{\textbf{Evolution Procedure} Here we simulate the trajectory of a quantum particle for three time steps under the discussed evolution procedure. The specifics of this simulation are discussed in \S \ref{s:numerics}.}
 \label{fig:fig2}
\end{figure}

Figure \ref{fig:fig3} compares an approximation of the marginal distributions of $\bP_{N, \rho}^{(n)} $ against $P_\mu^{(n)}$ for $n= 4$. For this simulation, the initial state is given by \eqref{eq:initial state} with $N= 2000$, $\beta = 0.5$, $\sigma' = 0.1$, $x_0 = 0.1$, and $p_0 = 0.1$. The symbols of the Kraus operators are given by $f(x -q)$ where $f$ is given by \eqref{eq:f_def}, with $\sigma = 0.1$. We then run the evolution procedure 2000 times (we call this the number of trials), saving the observed positions for each time step. The relative frequencies of the recorded positions are plotted as a histogram in blue. As the number of trials goes to infinity, these histograms should converge to the marginal distributions of $\mathbb P^{(n)} _{N, \rho}$ at each time step. For each instance in time, we plot the marginal distribution of $P_\mu^{(n)}$ as an orange curve. This paper's main result states that the two distributions should agree until the Ehrenfest time. For the cat map, the Lyapunov exponent is $\Gamma \approx .96$, so the Ehrenfest time is approximately $\log(N)/(2\Gamma) \approx 3.95$.

\begin{figure}
 \centering
 \includegraphics[width = \textwidth]{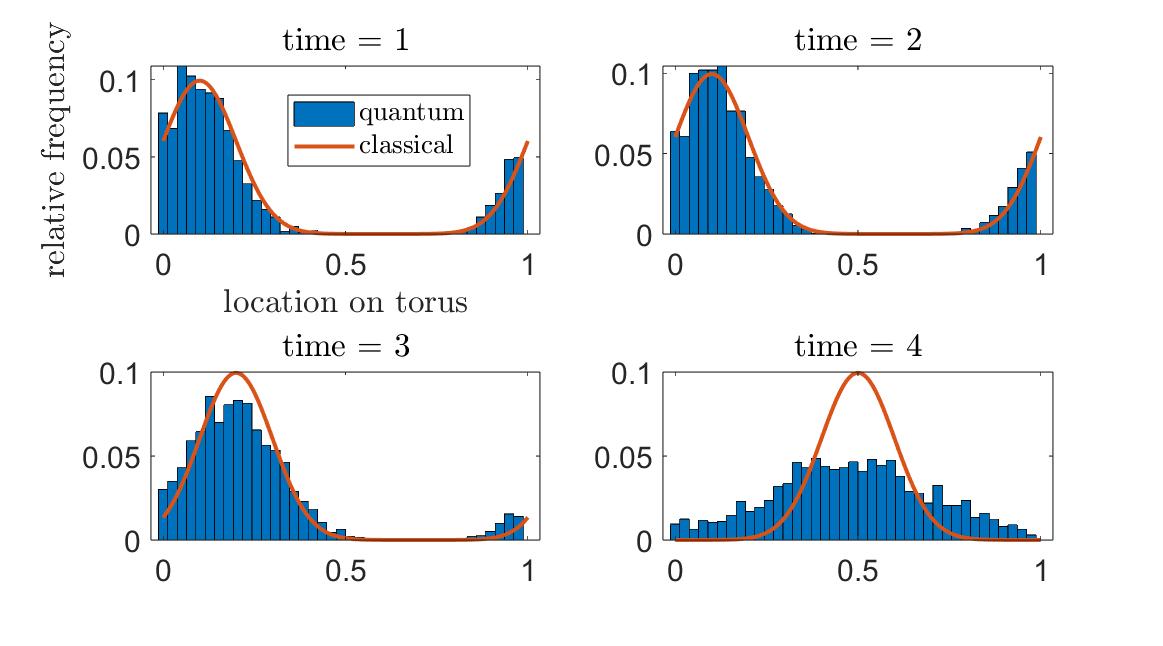}
 \caption{\textbf{Marginal distributions of $\bm{\bP_{N, \rho}^{(n)}} $ vs $\bm{P_\mu^{(n)}}$:} We perform Algorithm \ref{qmap} (observation then evolution by the quantum cat map). We plot the relative frequencies of the observed positions (approximating the marginal distribution of ${\bP_{N, \rho}^{(n)}}$ against the marginal distributions of ${P_\mu^{(n)}}$). See \S \ref{s:numerics} for more details.}
 \label{fig:fig3}
\end{figure}

It is apparent that the convergence of $\bP_{N, \rho}^{(n)}$ to $P_\mu^{(n)}$ as $N\to \infty$ is relatively slow. As seen in Figure \ref{fig:fig3}, after only $4$ time steps, the distribution of observed positions starts to become uniform. This should be expected, as Theorem \ref{t:torusquant} only guarantees \eqref{eq:thm_thing} for $N \gtrsim \exp((\Gamma + \varepsilon ) n / \delta ) $. For the cat map, $\Gamma = \log ((3 + \sqrt 5) /2) $, and in this case $\delta < 1/4$. Then, \eqref{eq:thm_thing} requires (roughly) $N \gtrsim 50^n$. These estimates are far from sharp, but give some sense of the exponentially slow rate of convergence as seen in Figure \ref{fig:fig3}.

In Figure \ref{fig:fig4}, we run the same simulation as in Figure \ref{fig:fig3} but vary $N$ from $100$ to $4000$. For each value of $N$, we compute the total variation of the (approximate) marginal distribution of $\bP_{N, \rho}^{(time)}$ against $P_\mu^{(time)}$ for times between $1$ and $6$. Observe that increasing $N$ leads to smaller total variation at an exponentially slow rate.

\begin{figure}
 \centering
 \includegraphics[width = \textwidth]{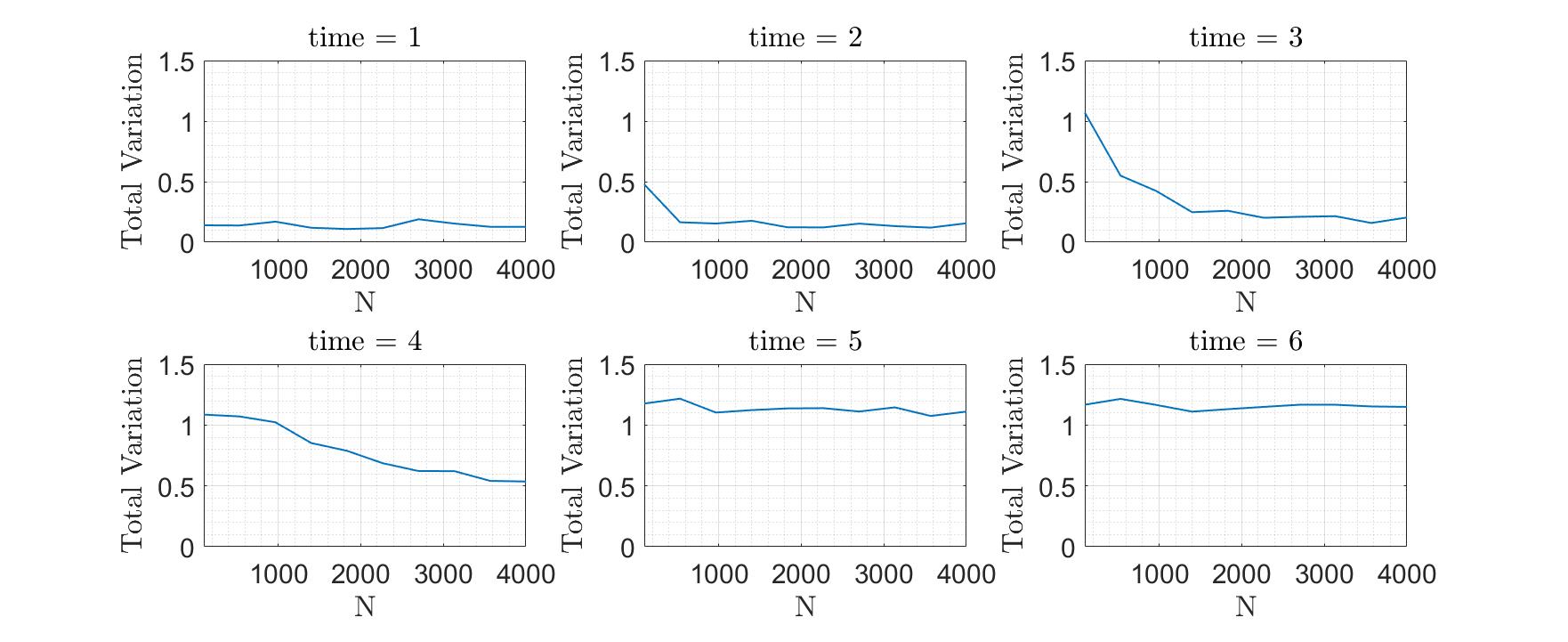}
 \caption{\textbf{Convergence of marginal distributions of $\bm{\bP_{N, \rho}^{(n)} \to P_\mu^{(n)}}$:} We run the same simulation as in Figure \ref{fig:fig3}, but vary $N$ from $100$ to $4000$ and compare $\bP_{N, \rho}^{(n)}$ to $ P_\mu^{(n)}$. See \S \ref{s:numerics} for more details.}
 \label{fig:fig4}
\end{figure}

Next, we show how changing the Lyapunov exponent changes the rate of convergence of $\bP_{N, \rho}^{(n)}$ to $P_\mu^{(n)}$. In Figure \ref{fig:fig5}, the same evolution procedure is performed for $4$ time steps but with $3$ different quantized symplectic matrices. In each case, we plot the relative frequencies of the observed positions (approximating the marginal distributions of $\mathbb P^{(n)}_{N,\rho}$) with a plot of the distribution of $P_\mu^{(n)}$. In all cases, we chose $N = 3000$, $\beta = 0.5$, $\sigma' = \sigma = 0.1$, $x_0 = 0.3$, $p_0 =0.9$, and simulated 500 trajectories. The symplectic matrices chosen were:
\begin{align}
 M_1 = \mat{2 & 1 \\ 1 & 1}, && M_2 = \mat{2 & 1 \\ 3 & 2}, && M_3 = \mat{4 & 1 \\ 15 & 4}, \label{eq:matrices}
\end{align}
which have Lyapunov exponent roughly $0.962, 1.317$, and $2.063$ respectively. Computing the Ehrenfest time as $\log(N) / (2\Gamma) $, then the three evolution procedures have Ehrenfest time approximately $4.16$, $3.04$, and $1.94$ respectively. This roughly agrees with the numerics provided. After only a few time steps, $\bP_{N, \rho}^{(n)}$ and $P_\mu^{(n)}$ disagree. Because the Ehrenfest time grows logarithmically with $N$, and the computations grow like $N^2$, it is not feasible to simulate numerics for which the two measures agree for many more time steps.

\begin{figure}
 \centering
 \includegraphics[width = \textwidth]{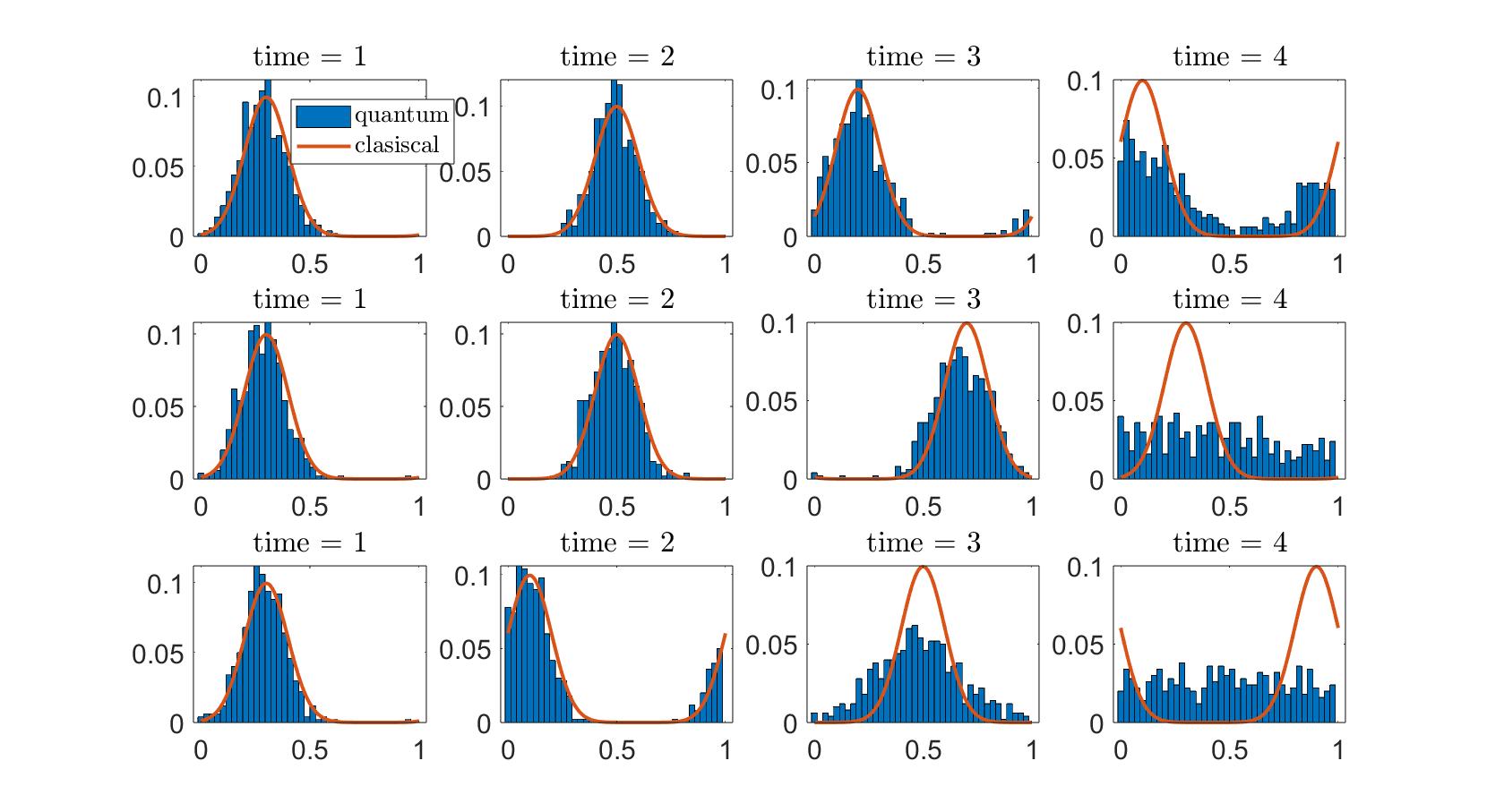}
 \caption{\textbf{Changing the Lyapunov exponent:} We present the same numerics as in Figure \ref{fig:fig3} but evolve by quantizations of three different matrices given by \eqref{eq:matrices} with 3 different Lyapunov exponents. See \S \ref{s:numerics} for more details.}
 \label{fig:fig5}
\end{figure}

In Figure \ref{fig:fig6} we replace the evolution of the particle by $\wh M_N$ at each time step, with evolution by $\exp( -2 \pi i t N \Op_N (a))$, where $a = \cos (2\pi x ) + \cos (2 \pi \xi)$, and $t=0.05$ -- which should be thought of as the interval of time between observations. In this case, $P_\mu ^{(n)}$ is a Gaussian whose mean is computed by evolving $(x_0,p_0)$ along the Hamiltonian flow generated by $a(x,\xi)$ for time $tn$. Note that the convergence of $\bP_{N, \rho}^{(n)}$ to $P_\mu^{(n)}$ is much more rapid in terms of $N$ (as long as $t$ is small). We chose the following parameters: $N= 1000$, $\beta = 0.5$, $\sigma'=\sigma = 0.1$, $x_0 = 0.4$, $p_0 = 0.7$, with $1000$ trials.

\begin{figure}
 \centering
 \includegraphics[width = \textwidth]{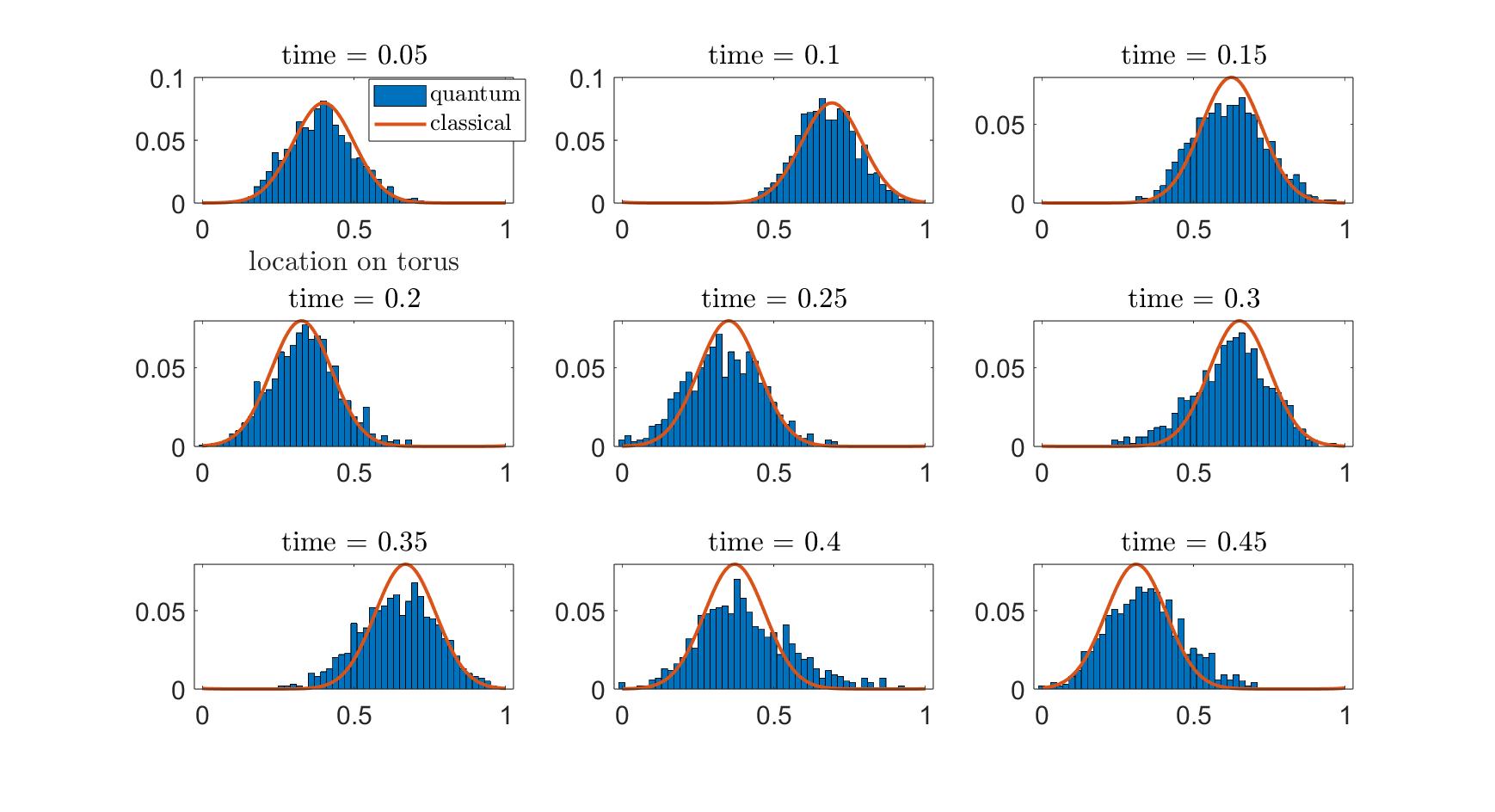}
 \caption{\textbf{Evolution of $\bm{\cos(2\pi x) + \cos (2\pi \xi)}$:} We present the same numerics as in Figure \ref{fig:fig3} but instead of evolving the quantum state by applying the quantum cat map, we multiply our state by $\exp(-2\pi it N \Op_N (\cos (2\pi x) + \cos (2\pi \xi) ))$. See \S \ref{s:numerics} for more details.}
 \label{fig:fig6}
\end{figure}

Lastly, in Figure \ref{fig:fig7}, we present numerics for the evolution of quantum particles, where the cat map is replaced by the matrix:
\begin{align}\label{eq:elliptic}
E:= \mat{-1 & 1 \\ -1 & 0}
\end{align}
which is an example of an elliptic matrix, having Lyapunov exponent equal to zero (in fact, $E^3 = I$). The parameters chosen here are $N= 1000$, $\beta = 0.5$, $\sigma = \sigma' = 0.1$, $x_0 =0.1$, $p_0 = 0.9$, with $1000$ trials. In this case $\bP_{N, \rho}^{(n)}$ and $P_\mu^{(n)}$ agree for more time steps than the evolution of the other symplectic maps. The Ehrenfest time will subtly depend on the $\varepsilon$ in \eqref{eq:ehrenfest_condition}.

\begin{figure}
 \centering
 \includegraphics[width = \textwidth]{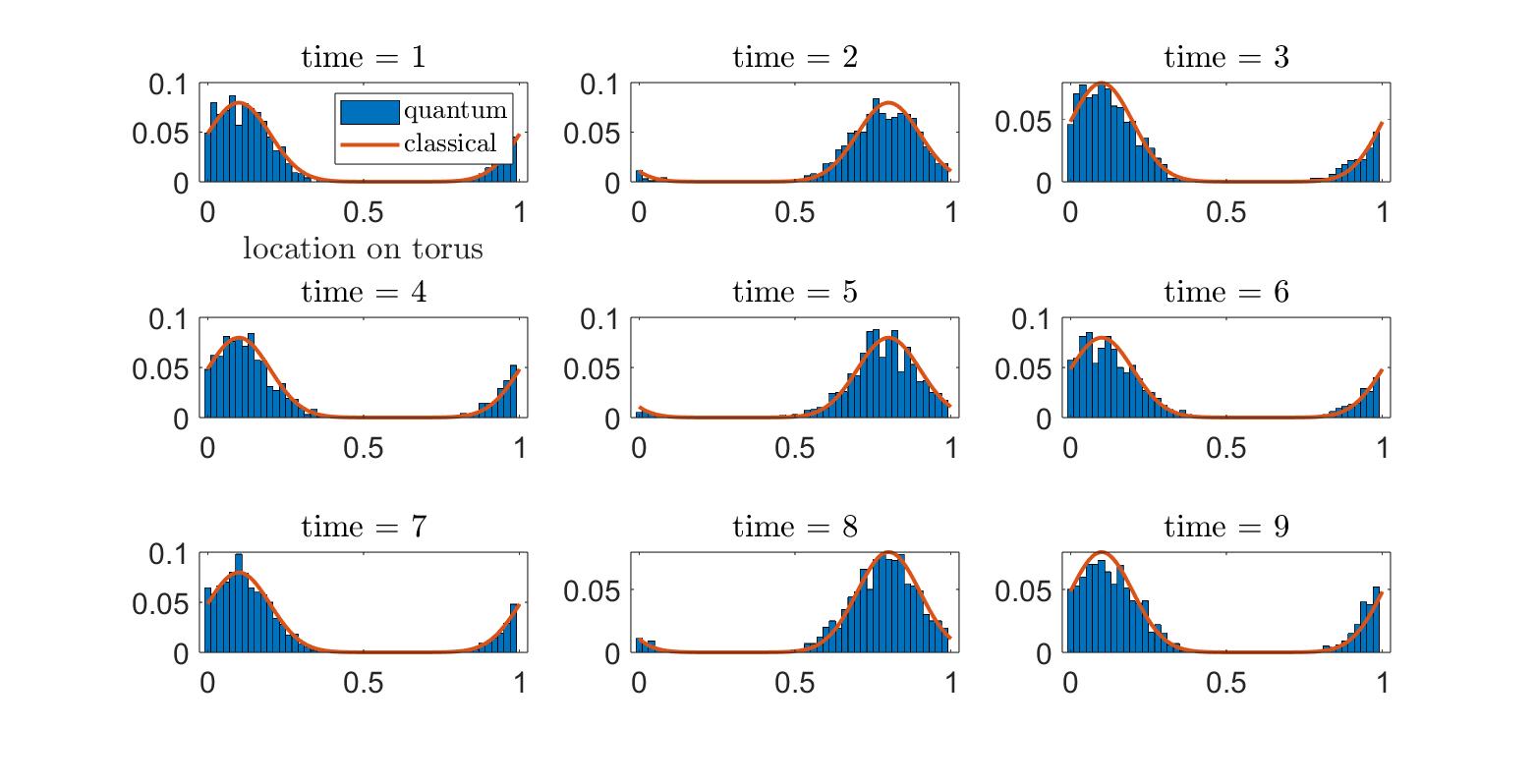}
 \caption{\textbf{Evolution by Elliptic Matrix:} We present the same numerics as in Figure \ref{fig:fig3} but instead of evolving by the quantum cat map, we evolve by the quantization of the elliptic matrix $E$ which has Lyapunov exponent zero. See \S \ref{s:numerics} for more details.}
 \label{fig:fig7}
\end{figure}

The figure on the first page (Figure \ref{fig:fig_first_page}) is simulated in the following way. We chose an initial state $\rho$ given in \eqref{eq:initial state} with $N = 500$, $\sigma' = 0.1$, $\beta = 0.1$, $x_0 = 0.4$, $p_0 = 0.7$. We indirectly measured the particle location using Kraus operators with symbol given by \eqref{eq:f_def} with $\sigma = 0.1$. Between each measurement, the particle is evolved by $\exp(-2\pi i t N P)$ with $P = \Op_N (\cos (2\pi  x) + \cos (2\pi \xi))$ and $t = 0.01$. This procedure is done for $100$ time steps to sample a single trajectory with law $\mathbb P^{(100)}_{N,\rho}$. We then repeat this $60$ times to get $60$ different trajectories, which are plotted on top of each other. We plot a single quantum trajectory in red. The corresponding classical probability measure is a product of Gaussian distributions with variance $\sigma^2$ and at time $t$ is centered at $\phi^t ((x_0,p_0))$. The flow $\phi^t$, the flow generated by the Hamiltonian vector field with Hamiltonian $\cos (2\pi x) + \cos (2\pi \xi)$, is numerically computed by solving 
\begin{align*}
    \begin{cases}
        \partial _t x(t) =  -2\pi \sin(2 \pi \xi(t)) \\
        \partial _t \xi(t) = 2\pi \sin (2\pi x(t))
    \end{cases}
\end{align*}
with initial conditions $x(0) = x_0$ and $\xi(0) = p_0$. We plot $x(t)$ as a blue dotted line.

\section{Acknowledgements}
The authors are grateful to Maciej Zworski and Martin Fraas for helpful discussions as well as two anonymous referees for numerous helpful corrections and suggestions. The authors also thank Semyon Dyatlov for providing a reference to proving Proposition \ref{p:Evolved_Symbol_Bound} and to Oliver Edtmair and Ian Gleason for fruitful discussions in writing up the proof in the appendix. This paper is based upon work supported by National Science Foundation grant DMS-1952939. The second author gratefully acknowledges support from the National Science Foundation Graduate Research Fellowship under grant DGE-1650114.

\appendix

\section{Uniformity of Defect Measures}\label{a:uniformDM}

Proposition \ref{p:DMExistence} motivates the definition of defect measures, but this definition allows for some undesirable consequences, which show up in the following pathological example. Let $f\in\cinf_0(\R^d)$ with $\|f\|_{L^2(\R^d)}=1$, and define $\psi_h(x)=f\left(x+ h^{-1}\right).$ It is easy to show that $\psi_h$ has defect measure equal to the zero measure, corresponding intuitively to the mass having ``escaped to infinity." Indeed, one has for any $a\in\cinf_0(T^*\R^d)$ supported in a ball of radius $R$ in $\R^{2d}\cong T^*\R^d$, $h<(2R)^{-1}$, and $\chi\in\cinf(\R^d)$ such that $\chi(x)=1$ on $|x|\ge2R$, $\chi=0$ on $|x|\le R$ that \begin{equation}\label{eq:badDM}\Op_h(a)\psi_h=\Op_h(a)\Op_h(\chi)\psi_h=\ohi
\end{equation}which gives
\begin{equation}\label{eq:badDM_limit}
 \lim_{h\to0}\la\Op_h(a)\psi_h,\psi_h\ra=0.
\end{equation}
Unfortunately, there is no way to quantify the rate of decay of (\ref{eq:badDM_limit}), as it depends on not just the symbol norms of $a$ but also the location of its support. In addition, the resulting measure is not a probability measure, which defies our intuition of the defect measure representing the classical location of the particle in phase space. 
We are therefore motivated to make a more restrictive definition of defect measures to prohibit examples such as (\ref{eq:badDM}). Call $\mu$ a \emph{uniform semiclassical defect measure} of $\rho_h$ if for any $a\in S$,
\begin{equation}\label{eq:UDF_Def}
 \lim_{h\to0}\tr\left(\rho_h\Op_h(a)\right)=\int_{T^*\R^d}a\ddd \mu.
\end{equation}
The reason for the name is given in the following proposition. We use the standard multi-index notation. For $a\in C^\infty(T^* \R^d)$ and $\alpha \in \N^{2d}$, we let 
\begin{align*}
    a^{(\alpha)} ( x) = \left(\prod _{j = 1}^{2d} \partial_j^{\alpha_j} \right)a(x).
\end{align*}
\begin{prop}\label{p:UDF_Equivalence}
Let $\rho_h$ be a set of density operators with defect measure $\mu$. The following are equivalent.
\begin{enumerate}
 \item $\mu$ is a uniform defect measure.
 \item For all $\epsilon>0$, there is an $h_0(\epsilon) > 0 $ such that if $0< h<h_0$ and $a\in\cinf_0(T^*\R^d)$ with $\|a^{(\alpha)}\|_{L^{\infty}}\le 1$ for all $|\alpha|<K(d)$ then $$\left|\tr\left(\rho_h\Op_h(a)\right)-\int_{T^*\R^d}a\ddd \mu\right|\le\epsilon.$$
 \item For all $\epsilon>0$, there is an $h_0(\epsilon)>0 $ such that if $0< h<h_0$ and $a\in S$ with $\|a^{(\alpha)}\|_{L^{\infty}}\le 1$ for all $|\alpha|<K(d)$ then $$\left|\tr\left(\rho_h\Op_h(a)\right)-\int_{T^*\R^d}a\ddd \mu\right|\le\epsilon.$$
 \item $\mu$ is a probability measure.
\end{enumerate}
\end{prop}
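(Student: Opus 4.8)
The plan is to prove the implications $(3)\Rightarrow(1)\Rightarrow(4)\Rightarrow(3)$ and $(3)\Rightarrow(2)\Rightarrow(4)$, which together make all four statements equivalent. Three of the links are essentially free. For $(3)\Rightarrow(2)$ one just notes $C_0^\infty(T^*\R^d)\subseteq S$ and that the two statements use the same constant $K(d)$. For $(3)\Rightarrow(1)$, given $a\in S$ I would apply $(3)$ to $a/\Lambda$ with $\Lambda=\max_{|\alpha|<K(d)}\|a^{(\alpha)}\|_{L^\infty}$ and then let $\epsilon\to0$. For $(1)\Rightarrow(4)$ I would simply test $(1)$ against the constant symbol $a\equiv1$: since $\Op_h(1)=\mathrm{Id}$, one has $\tr(\rho_h\Op_h(1))=1$ for every $h$, forcing $\int d\mu=1$. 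The real content lies in $(4)\Rightarrow(3)$ and $(2)\Rightarrow(4)$.

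For $(4)\Rightarrow(3)$ I would fix a radially nonincreasing cutoff $\chi$ equal to $1$ near $0$, set $\chi_R(\zeta)=\chi(\zeta/R)$ and $\eta_R=1-\chi_R$, so that $\chi_R$ is supported in a ball $B_{2R}$, equals $1$ on $B_R$, and all derivatives of $\chi_R$ and $\eta_R$ are bounded uniformly in $R$. Given admissible $a$, split $a=a\chi_R+a\eta_R$ and estimate the three quantities $\tr(\rho_h\Op_h(a\chi_R))-\int a\chi_R\,d\mu$, $\ \tr(\rho_h\Op_h(a\eta_R))$, and $\int a\eta_R\,d\mu$. The third is $\le\mu(B_R^c)$, which is small once $R$ is large --- this is the one place where $\mu(T^*\R^d)=1$ enters directly, and it is essential. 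For the escaping-mass term I would use the Hilbert--Schmidt Cauchy--Schwarz bound $|\tr(\rho_h\Op_h(a\eta_R))|\le\|\Op_h(a\eta_R)\rho_h^{1/2}\|_{HS}$ (with $\|\rho_h^{1/2}\|_{HS}=1$), whose square is $\tr(\rho_h\,\Op_h(a\eta_R)^*\Op_h(a\eta_R))$; then factor $\Op_h(a\eta_R)=\Op_h(a)\Op_h(\eta_R)+O_{L^2}(h)$ (error uniform over $a$, provided $K(d)$ is chosen large enough), bound $\Op_h(a)^*\Op_h(a)\le\|\Op_h(a)\|^2\mathrm{Id}\le C\,\mathrm{Id}$ by Calder\'on--Vaillancourt, and reduce to $\tr(\rho_h\Op_h(\eta_R^2))+O(h)$ via $\Op_h(\eta_R)^*\Op_h(\eta_R)=\Op_h(\eta_R^2)+O_{L^2}(h)$. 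Since $1-\eta_R^2\in C_0^\infty$, the defect-measure property gives $\tr(\rho_h\Op_h(\eta_R^2))=1-\tr(\rho_h\Op_h(1-\eta_R^2))\to1-\int(1-\eta_R^2)\,d\mu=\int\eta_R^2\,d\mu\le\mu(B_R^c)$, again using $\mu(T^*\R^d)=1$; hence $\limsup_{h\to0}\sup_a|\tr(\rho_h\Op_h(a\eta_R))|\le C\sqrt{\mu(B_R^c)}$, small for $R$ large. Finally, with $R$ fixed, the symbols $a\chi_R$ form a bounded set in $C_0^{K(d)-1}(\overline{B_{2R}})$, hence a precompact set in $C_0^{K(d)-2}(\overline{B_{2R}})$ by Arzel\`a--Ascoli; the linear functionals $b\mapsto\tr(\rho_h\Op_h(b))-\int b\,d\mu$ are equibounded in the $C^{K(d)-2}$ norm (Calder\'on--Vaillancourt plus finiteness of $\mu$) and converge pointwise to $0$ (the defect-measure property for $C_0^\infty$, extended to $C_0^{K(d)-2}$ by mollification), so they converge to $0$ uniformly on that precompact set. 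Summing the three bounds yields $(3)$.

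For $(2)\Rightarrow(4)$ one must manufacture escaping mass from compactly supported test symbols. First, $\mu(T^*\R^d)\le1$ holds for any $\rho_h$ with a defect measure: choosing $\chi_R$ so that $1-\chi_R=g_R^2$ with $g_R$ smooth, the composition calculus gives $\tr(\rho_h\Op_h(1-\chi_R))=\|\Op_h(g_R)\rho_h^{1/2}\|_{HS}^2+O(h)\ge-O(h)$, hence $\tr(\rho_h\Op_h(\chi_R))\le1+O(h)$; letting $h\to0$ and then $R\to\infty$ gives $\mu(T^*\R^d)=\lim_R\int\chi_R\,d\mu\le1$. For the reverse inequality I would suppose $\mu(T^*\R^d)=1-\eta$ with $\eta>0$ and contradict $(2)$: using $\tr(\rho_h)=1$, one computes for fixed $R'>R$ that $\tr(\rho_h\Op_h(\chi_{R'}-\chi_R))\to\eta+\int(\chi_{R'}-\chi_R)\,d\mu$ as $h\to0$; choosing $R$ so that $\int(1-\chi_R)\,d\mu<\eta/100$ and taking $R'=R+1$, the fixed compactly supported symbol $b:=\chi_{R+1}-\chi_R$ satisfies $\tr(\rho_h\Op_h(b))>\eta/2$ for all small $h$ while $\int b\,d\mu<\eta/100$, which after rescaling $b$ by a fixed constant contradicts $(2)$. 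Hence $\mu(T^*\R^d)=1$, completing the cycle.

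I expect the main obstacle to be the bookkeeping in $(4)\Rightarrow(3)$: every error in the symbolic calculus must be made uniform over the family of admissible $a$, of which only the derivatives of order $<K(d)$ are controlled, so the composition remainders (as in $\Op_h(a\eta_R)=\Op_h(a)\Op_h(\eta_R)+O_{L^2}(h)$) and the Arzel\`a--Ascoli step must be arranged to involve only those derivatives, forcing $K(d)$ to be taken comfortably larger than the Calder\'on--Vaillancourt threshold. By contrast, once $\mu$ is known to be a probability measure the argument specializes painlessly to the quantized torus --- phase space is compact, the escaping-mass terms disappear, and $\mu(\T^{2d})=1$ is automatic from testing against $a\equiv1$ --- which is how Proposition \ref{p:TorusDMUniform} is obtained.
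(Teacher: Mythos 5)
The proposal follows a genuinely different implication chain than the paper (the paper proves $(1)\Rightarrow(3)$, $(2)\Rightarrow(3)$, and $(4)\Rightarrow(3)$ directly, each by a separate argument), and most of your links are sound. The $(4)\Rightarrow(3)$ step is a legitimate alternative to the paper's: you bound the escaping-mass term via $|\tr(\rho_h\Op_h(a\eta_R))|^2\le\tr(\rho_h\Op_h(a\eta_R)^*\Op_h(a\eta_R))$ (Cauchy--Schwarz in Hilbert--Schmidt norm) and reduce to $\tr(\rho_h\Op_h(\eta_R^2))$, whereas the paper bounds $\tr(\Op_h(a)\Op_h(1-\chi)\rho_h)\le\|\Op_h(a)\|\tr(\Op_h(1-\chi)\rho_h)$ directly; your version is arguably more carefully justified. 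Both then conclude with essentially the same Arzel\`a--Ascoli compactness step.

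The genuine gap is in $(2)\Rightarrow(4)$. You assert that for \emph{fixed} $R'>R$ one has $\tr(\rho_h\Op_h(\chi_{R'}-\chi_R))\to\eta+\int(\chi_{R'}-\chi_R)\,\ddd\mu$ as $h\to0$. This is false. The function $b:=\chi_{R'}-\chi_R$ is a fixed element of $\cinf_0(T^*\R^d)$, so the hypothesis that $\rho_h$ has defect measure $\mu$ (which is given, independent of (2)) forces $\tr(\rho_h\Op_h(b))\to\int b\,\ddd\mu$ with no extra $\eta$. Mass that escapes to infinity is precisely the mass that a fixed compactly supported test symbol cannot see, so the defect of $\mu$ from being a probability measure can never appear as the limit of $\tr(\rho_h\Op_h(b))$ for fixed $b\in\cinf_0$; you have conflated the limit $h\to0$ at fixed $R'$ (which gives $\int\chi_{R'}\,\ddd\mu$) with the limit $R'\to\infty$ at fixed $h$ (which gives $\tr(\rho_h)=1$). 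Consequently the contradiction you derive evaporates.

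The step can be repaired using the uniformity built into $(2)$: suppose $\mu(T^*\R^d)=1-\eta$ with $\eta>0$, pick $\epsilon<\eta$, and let $h_0(\epsilon)$ be as in $(2)$. Since $(2)$ applies to $\chi_R$ for \emph{every} $R$ simultaneously (after a harmless rescaling to meet the $C^{K(d)}$ normalization), for each $h<h_0$ and each $R$ we get $\tr(\rho_h\Op_h(\chi_R))\le\int\chi_R\,\ddd\mu+\epsilon\le 1-\eta+\epsilon<1$. Now fix such an $h$ and send $R\to\infty$: $\Op_h(\chi_R)\to I$ in the strong operator topology, so by the trace-limit lemma (Lemma~\ref{l:tracelimit}) $\tr(\rho_h\Op_h(\chi_R))\to\tr(\rho_h)=1$, a contradiction. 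Alternatively, note that the paper's own $(2)\Rightarrow(3)$ bypasses this entirely by extending directly from $\cinf_0$ symbols to $S$ symbols via a partition of unity, Cotlar--Stein, and the trace-limit lemma, which is a cleaner route if you prefer not to route $(2)$ through $(4)$.
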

\begin{proof}
We trivially see that $(3)\implies(2)$ and $(3)\implies(1)$.

To show $(2)\implies(3)$, let $a\in S$, and let $\chi_k\in\cinf_0(T^*\R^d)$ be smooth partition of unity of $T^*\R^d$, chosen as in \cite[Theorem 4.23]{Zworski}, with the first $K(d)$ derivatives chosen small enough. By the Cotlar-Stein-Knapp lemma, we have the limit in the strong operator topology: $$\lim_{K\to\infty}\left(\sum_{k=0}^{K}\Op_h(\chi_ka)\right)\psi=\Op_h(a)\psi$$ for every $\psi\in L^2(\R^d)$. We need the following brief lemma.
\begin{lem}\label{l:tracelimit}
Let $A_K$ be a sequence of uniformly bounded operators converging to $A$ in the strong topology, and let $\rho$ be self-adjoint and trace class. Then $$\lim_{K\to\infty}\tr(\rho A_K)=\tr(\rho A).$$
\end{lem}
\begin{proof}[Proof of Lemma \ref{l:tracelimit}]
By subtracting we may assume $A=0$, and without loss of generality let $\sup_K\|A_K\|=1$. Let $\psi_j$ be an orthonormal basis of eigenvectors of the compact operator $\rho$, so $$\rho=\sum_j\lambda_j\ket{\psi_j}\bra{\psi_j}$$ with $\lambda_j\to0$. as $j\to\infty$. Let $\epsilon>0$, and choose $J$ such that $\lambda_j<\eps$ for $j>J$. Then
\begin{equation*}
 |\tr(\rho A_K)|=\left|\sum_{j\le J}\gl_j\la A_K\psi_j,\psi_j\ra+\sum_{j\le J}\gl_j\la A_K\psi_j,\psi_j\ra\right|
 \le \sum_{j\le J}\left|\gl_j\la A_K\psi_j,\psi_j\ra\right|+\eps
 \xrightarrow{K\to\infty}\eps
\end{equation*}
which shows $\tr(\rho A_K)\xrightarrow{K\to\infty}0$ as desired.
\end{proof}

By Lemma \ref{l:tracelimit}, we see that for fixed $h$, $$\lim_{K\to\infty}\tr\left(\rho_h\sum_{k=0}^{K}\Op_h(\chi_ka)\right)=\tr(\rho_h\Op_h(a))$$ and similarly as $\mu$ is a finite measure $$\lim_{K\to\infty}\int_{T^*\R^d}\sum_{k=0}^{K}\chi_ka\ddd \mu=\int_{T^*\R^d}a\ddd \mu.$$ Then letting $\eps>0$ and picking the $h_0(\eps)$ from part $(2)$ we get
\begin{equation*}
 \left|\tr\left(\rho_h\Op_h(a)\right)-\int_{T^*\R^d}a\ddd \mu\right|=\lim_{K\to\infty}\left|\tr\left(\rho_h\sum_{k=0}^{K}\Op_h(\chi_ka)\right)-\int_{T^*\R^d}\sum_{k=0}^{K}\chi_ka\ddd \mu\right|\le\epsilon
\end{equation*}

To show $(1)\implies(3)$, let $a\in S$, and let $\chi\in\cinf_0(T^*\R^d)$ be supported in a ball of radius $R$, such that $\mu(T^*\R^d\setminus B_R(0))\le \epsilon / 8$. We also note that $\chi$ may be chosen so its first $K(d)$ derivatives are small (possibly making $R$ larger). Then we have that
\begin{equation*}\label{eq:DM_farfrom0_bound}
 \tr(\Op_h(a)\Op_h(1-\chi)\rho_h)\le \| \Op_h (a) \| \tr(\Op_h(1-\chi)\rho_h)\le C \tr(\Op_h(1-\chi)\rho_h)
\end{equation*}
for $C$ depending on the first $K(d)$ derivatives of $a$. Next we apply (\ref{eq:UDF_Def}) with the symbol $1-\chi$ (which is independent of $a$) to get that
\begin{align*}
C \tr(\Op_h(1-\chi)\rho_h) = C \varepsilon / 8 + o (1)
\end{align*}
as $h \to 0$. So that there exists $h_0 >0$ such that for $0 < h < h_0$ we have $C \tr(\Op_h(1-\chi)\rho_h) < \varepsilon / 4$. With this, the triangle inequality, and symbol calculus, we have for $h<h_0(\eps)$
\begin{align}\label{eq:splitting_with_chi}
\begin{split}
\left|\tr\left(\rho_h\Op_h(a)\right)-\int_{T^*\R^d}a\ddd \mu\right|&\le\left|\tr\left(\rho_h\Op_h(a)\Op_h(\chi)\right)-\int_{T^*\R^d}a\chi\ddd \mu\right|+\frac{\eps}{2}\\
&\le\left|\tr\left(\rho_h\Op_h(a\chi)\right)-\int_{T^*\R^d}a\chi\ddd \mu\right|+\frac{3\eps}{4}
\end{split}
\end{align}
where for the last inequality we have chosen possibly smaller $h_0(\eps)$ and used the fact that $a$ has bounded derivatives.

We now prove $(3)$ using a compactness argument. Observe that by Calder\'on-Vaillancourt, if (\ref{eq:UDF_Def}) holds for $a\in S$, it must also hold for any $a\in C^{K_0(d)}(T^*\R^d)$ for some sufficiently large $K_0$. Given $h_j$, let $X_j$ be the set of $b\in C^{K_0(d)}(B_R(0))$ such that
\begin{equation}\label{eq:X1_bound}
 \left|\tr(\rho_h\Op_h(b\chi))-\int_{B_R(0)}b\chi\ddd \mu\right|<\frac{\eps}{4}
\end{equation}
for all $h<h_j$. Then for $K_0(d)$ large enough, condition (\ref{eq:X1_bound}) makes $X_j$ to be an open set in $C^{K_0(d)}(B_R(0))$, so letting $h_j\to0$, gives $(X_j)_{j\in\N}$ to be an open cover of $C^{K_0(d)}(B_R(0))$. Let $$Y=\{b\in\cinf(B_R(0)):\|b^{(\alpha)}\|_{L^{\infty}(B_R(0))}\le 1\quad\forall\,\,|\ga|\le K_0(d)+1\},$$ and let $\overline{Y}$ denote the closure of $Y$ in $C^{K_0(d)}(B_R(0))$. By the Arzela-Ascoli theorem, $\overline{Y}$ is compact in the topology of $C^{K_0(d)}(B_R(0))$, so there is a finite subset of the $X_j's$ that covers $\overline{Y}$, and hence an $h_0$ such that for $h<h_0$ $$\left|\tr(\rho_h\Op_h(b))-\int_{B_R(0)}b\ddd \mu\right|<\frac{\eps}{4}$$ holds for all $b$ with $\|b^{(\ga)}\|_{L^{\infty}}\le 1$ for $|\ga|\le K_0(d) + 1$. Letting $K(d)=K_0(d)+1$, and $b=a|_{B_R(0)}$ in (\ref{eq:splitting_with_chi}) completes the proof of $(3)$.

To show $(3)\implies(4)$, we simply see that $$\int_{T^*\R^d}\ddd \mu=\lim_{h\to0}\tr(\rho_h\Op_h(1))=\lim_{h\to0}\tr\rho_h=1.$$
Here we use that the function $f(x) = 1$ is in the symbol class $S$.

Finally, we show $(4)\implies (3)$. Let $\chi\in\cinf_0(T^*\R^d)$, so $$\tr\left(\rho_h\Op_h(1-\chi)\right)-\int_{T^*\R^d}1-\chi\ddd \mu=-\left(\tr\left(\rho_h\Op_h(\chi)\right)-\int_{T^*\R^d}\chi\ddd \mu\right)\to0$$ as $h\to0$, so in particular (\ref{eq:UDF_Def}) holds in the special case of the noncompact symbol $1-\chi$. But that was the only symbol used in the proof of $(1)\implies(3)$, so in particular shows $(4)\implies (3)$ as well. This shows all equivalences and hence completes the proof.
\end{proof}
\section{Coherent States}\label{a:coherent}
In this appendix, we prove Propositions \ref{p:Rdcoherent_strong}, \ref{prop_cs_beta0}, and \ref{prop:cs_beta_int}, which give examples of $(\gd,\gt)$-defect measures.
\subsection{Coherent States on $\R^d$}\label{aa:coherent_Rd}
We prove Proposition \ref{p:Rdcoherent_strong}. The proof is essentially the same as \cite[\S 5.1 Examples 1 and 2]{Zworski}, but is quantitative and keeps track of errors.

\begin{proof}[Proof of Proposition \ref{p:Rdcoherent_strong}]
We first consider the case when $\beta=0$. In this case, we can without loss of generality assume $x_0=0$, by absorbing it into the definition of $g$. Then $$\psi_h(x)=g(x)e^{\frac{i}{h}\la x,\xi_0\ra},$$
so for $a\in S_{\gd} \cap C_0^\infty (T^* \mathbb R^d)$:
\begin{align*}
 \tr(\rho_h\Op_h(a))&=\la\Op_h(a)\psi_h,\psi_h\ra\\
 &=\int\frac{1}{(2\pi h)^d}\left(\iint e^{\frac{i}{h}\la x-y,\xi-\xi_0\ra} g(y)\,a\left(\frac{x+y}{2},\xi\right)\ddd y\ddd \xi\right)\overline{g(x)}\ddd x\\
 &=\int\frac{1}{(2\pi h)^d}\left(\iint e^{\frac{i}{h}\la y,\xi\ra} g(x-y)\,a\left(x-\frac{y}{2},\xi+\xi_0\right)\ddd y\ddd \xi\right)\overline{g(x)}\ddd x.
\end{align*}
We evaluate the inner double integral using the explicit form of stationary phase given in \cite[Theorem 3.17]{Zworski}.
\begin{equation*}\begin{gathered}\iint e^{\frac{i}{h}\la y,\xi\ra} g(x-y)\,a\left(x-\frac{y}{2},\xi+\xi_0\right)\ddd y\ddd \xi\\\sim(2\pi h)^d\left(\sum_{k=0}^{\infty}\frac{h^k}{i^kk!}\la D_y,D_{\xi}\ra^k\left.g(x-\cdot_y)a\left(x-\frac{\cdot_y}{2},\cdot_{\xi}+\xi_0\right)\right|_{y=0,\xi=0}\right)\\=(2\pi h)^d\left(g(x)a(x,\xi_0)+O(h^{1-2\gd})\right)
\end{gathered}\end{equation*}
with $\sim$ referring to asymptotic summation and the last equality coming from $a$ being an element of $S_{\delta}$. Then
\begin{equation*}
 \left|\tr(\rho_h\Op_h(a))-\int|g(x)|^2a(x,\xi_0)\ddd x\right|\le O(h^{1-2\gd})\int|g(x)|\ddd x=O(h^{1-2\gd})
\end{equation*}
which gives the result for $\gb=0$. The case of $\gb=1$ follows from taking the semiclassical Fourier transform.

We now treat the case when $0<\gb<1$. Here we have $$\psi_h(x)=h^{-\frac{d\gb}{2}}g\left(\frac{x-x_0}{h^{\gb}}\right)e^{\frac{i}{h}\la x,\xi_0\ra}.$$ We again have that $\tr(\rho_h \Op_h (a) ) = \langle \Op_h(a) \psi _h , \psi _h \rangle $. We now approximate $\Op_h(a)$ by its left quantization defined as:
$$\Op_h^L(a)u(x):=\frac{1}{(2\pi h)^d}\iint e^{\frac{i}{h}\la x-y,\xi\ra}a(x,\xi)u(y)\ddd y\ddd \xi.$$ By \cite[Theorem 4.13]{Zworski}, $\Op_h(a)=\Op_h^L(a)+O(h^{1-2\gd})_{L^2(\R^d)\to L^2(\R^d)}$. Therefore
\begin{align*}
 &\tr(\rho_h\Op_h(a))=\la\Op_h^L(a)\psi_h,\psi_h\ra+O(h^{1-2\gd})\\
 &=\frac{1}{(2\pi h)^d}\int\overline{g\left(\frac{x-x_0}{h^{\gb}}\right)}\left(\iint h^{-d\gb}g\left(\frac{y-x_0}{h^{\gb}}\right)e^{\frac{i}{h}\la x-y,\xi-\xi_0\ra}a(x,\xi)\ddd y\ddd \xi\right)\ddd x\\
 &\quad+O(h^{1-2\gd})\\
 &=\frac{1}{(2\pi h)^d}\int\overline{g\left(\frac{x-x_0}{h^{\gb}}\right)}\int e^{\frac{i}{h}\la x-x_0,\xi\ra}a(x,\xi+\xi_0)\left(\int h^{-d\gb}g\left(\frac{y}{h^{\gb}}\right)e^{-\frac{i}{h}\la y,\xi\ra}\ddd y\right)\ddd \xi\ddd x\\
 &\quad+O(h^{1-2\gd})\\
 &=\frac{1}{(2\pi h)^d}\iint\overline{g\left(\frac{x}{h^{\gb}}\right)}e^{\frac{i}{h}\la x,\xi\ra}a(x+x_0,\xi+\xi_0)\hat{g}\left(\frac{\xi}{h^{1-\gb}}\right)\ddd \xi\ddd x+O(h^{1-2\gd})\\
 &= \frac{1}{(2\pi)^d} \iint\overline{g(x)}\hat{g}(\xi)e^{i\la x,\xi\ra}a\left(h^{\gb}x+x_0,h^{1-\gb}\xi+\xi_0\right)\ddd \xi\ddd x+O(h^{1-2\gd}).
\end{align*}

Because $\| g \| _{L^2 (\mathbb R^d )} = 1$, we have by the Fourier inversion formula that:
\begin{align*}
    \frac{1}{(2\pi)^d} \iint \overline{g(x)}\hat{g}(\xi) e^{i \langle x , \xi \rangle} \ddd \xi \ddd x = 1
\end{align*}
so that
\begin{align*}
 &|\tr(\rho_h\Op_h(a))-a(x_0,\xi_0)| \\
 &\le \left| \frac{1}{(2\pi)^d} \iint\overline{g(x)}\hat{g}(\xi)e^{i\la x,\xi\ra}(a\left(h^{\gb}x+x_0,h^{1-\gb}\xi+\xi_0\right) - a(x_0,\xi_0) )\ddd \xi\ddd x  \right| + O(h^{1-2\delta}).
\end{align*}
By the mean value theorem and the fact that $a\in S_\delta \cap C_0^\infty (T^* \mathbb R^d)$, we have for $x,\xi \in \mathbb R ^d$,
\begin{align*}
    |a\left(h^{\gb}x+x_0,h^{1-\gb}\xi+\xi_0\right) - a(x_0,\xi_0) | &\le  (h^{\gb} | x | + h^{1-\gb} | \xi |) \| \nabla a \| _{L^\infty}  \\
&\le C  (h^{\gb} | x | + h^{1-\gb} | \xi | ) h^{-\delta}.
\end{align*}
So, using that $g$ is Schwartz,
\begin{align*}
|\tr(\rho_h\Op_h(a))-a(x_0,\xi_0)| &\le C \iint |g(x)||\hat g(\xi) |x||\xi| ((h^{\gb} | x | + h^{1-\gb} | \xi | ) h^{-\delta})\ddd \xi \ddd x\\
&= O(h^{\gb-\gd})+O(h^{1-\gb-\gd})+O(h^{1-2\gd}).
\end{align*}
Noting that $1-2\gd=\gb-\gd+1-\gb-\gd$ shows the $O(h^{1-2\gd})$ term to be superfluous and gives the desired result.
\end{proof}

\subsection{Coherent States on the Quantized Torus}\label{aa:coherent_torus}

We provide full proofs of Propositions \ref{prop_cs_beta0} and \ref{prop:cs_beta_int}, though we do not claim our parameters to be optimal. 

\begin{proof}[Proof of Proposition \ref{prop_cs_beta0}]
For this proof, we work in the orthonormal basis $(Q_n)$. Then for $a\in S_{\gd}(\T^{2d})$, $\Op_N(a)$ has a representation as a matrix $A$ with 
\begin{equation}\label{e:AmjExpression}
A_{mj}=\sum_{l,r\in\Z^d}\hat{a}(l,j-m-rN)(-1)^{\la r,l\ra}e^{\pi i\frac{\la j+m,l\ra}{N}}.
\end{equation}
Let $M=\lfloor N^{\g}\rfloor$ for $\g\in(\gd,1)$. We see that if $|j-m|>M$ (in distance $\mod N$), then for any $k\in\N _{>d
}$
\begin{align}\label{eq:mj_far}
\begin{split}
 |A_{mj}|&\le\sum_{l,r\in\Z^d}|\hat{a}(l,j-m-rN)|\\
 &\le C\sum_{l,r\in\Z^d}\la l \ra^{-(d+1)}\|(\dd_x^{d+1}+1)\dd_{\xi}^ka\|_{L^{\infty}}| j-m-rN|^{-k} \\
 &\le C N^{(d+1+k)\delta} \sum_{r \in \mathbb Z^d} |j - m - rN|^{-k}.
 \end{split}
 \end{align}
Note that there exists a large constant $M_0 > 0 $ (independent of $N$) such that
\begin{align*}
    \sum_{\substack{r \in \mathbb Z^d\\ |r| > M_0}} |j - m - rN|^{-k}  \le 1.
\end{align*}
Therefore:
\begin{align*}
     \sum_{r \in \mathbb Z^d} |j - m - rN|^{-k} \le M_0 M^{-k} + 1 \le C  N^{-\gamma k }
\end{align*}
for sufficiently large $N$. Therefore  
 \begin{align*}
 \begin{split}
 |A_{mj}|&\le C N^{(d+1)\gd}N^{k\gd}N^{-k\g}=CN^{-k(\g-\gd)+(d+1)\gd}
 \end{split}
\end{align*}
where $C=C(a,\gd,k)$ is independent of $N$. Taking $k$ arbitrarily large gives that $|A_{mj}|=O(N^{-\infty})$, so (as with pseudodifferential operators on $\R^d$), the mass of the kernel is concentrated on the diagonal and falls off super-polynomially away from it. We also note that even when $j$ and $m$ are close, we may remove one of the sums in (\ref{e:AmjExpression}) up to an $O(N^{-\infty})$ error. Specifically we see that when $|j-m|\le M$, $k\in\N$
\begin{equation*}
 \left|A_{mj}-\sum_{l\in\Z^d}\hat{a}(l,j-m)e^{\pi i\frac{\la j+m,l\ra}{N}}\right|=O(N^{-\infty}).
\end{equation*}
Indeed
\begin{align*}
   \left|  A_{mj}-\sum_{l\in\Z^d}\hat{a}(l,j-m)e^{\pi i\frac{\la j+m,l\ra}{N}}\right | &=\left | \sum_{\substack{l,r\in\Z^d \\ r \neq 0}}\hat{a}(l,j-m-rN)(-1)^{\la r,l\ra}e^{\pi i\frac{\la j+m,l\ra}{N}} \right| \\
    &\le\sum_{\substack{l,r\in\Z^d \\ r \neq 0}} |\hat{a}(l,j-m-rN)|.
\end{align*}
Note that there exists $c_d > 0 $ such that for all $r \neq 0$,
\begin{align*}
    |j - m -rN| \ge c_d N^\gamma |r|
\end{align*}
therefore  for each $k\in \Z_{> d}$, using that $a \in S_\delta(\mathbb T^{2d})$,
\begin{align*}
    \sum_{\substack{l,r\in\Z^d \\ r \neq 0}} |\hat{a}(l,j-m-rN)| &\le C \sum_{\substack{l,r\in\Z^d \\ r \neq 0}} \la l \ra ^{-(d+1)}|j - m -rN|^{-k} \|(\dd_x^{d+1}+1)\dd_{\xi}^ka\|_{L^{\infty}} \\
    &\le C N^{\delta(d+1 +k)} \sum_{r\in \Z^d_{\neq 0}}c_d N^{-k\gamma} \la r \ra ^{-k}
\end{align*}

by the same reasoning as in (\ref{eq:mj_far}). By Fourier inversion in the first variable, we have in that case
\begin{equation}\label{eq:mj_close}
A_{mj}=\sF_2a\left(\frac{j+m}{2N},j-m\right)+O(N^{-\infty})
\end{equation} where $\sF_2$ is the Fourier coefficient in the second component (so $\sF_2a$ acts on $\T^d\times\Z^d$).

Then we compute 

\begin{align}\label{eq:trace_sum}
\begin{split}
 &\tr_{H_N}(\rho_N\Op_N(a))=C_N\bra{\psi(N)}\Op_N(a)\ket{\psi(N)}\\
 &=C_N\sum_{m,j\in(\Z/N\Z)^d}\frac{1}{N^d}\overline{g(m/N)}A_{mj}g(j/N)e^{2\pi i(j-m)\xi_0}\\
 &=C_N\sum_{|m-j|\le M}\frac{1}{N^d}\overline{g(m/N)}g(j/N)\int_{\T^d}a\left(\frac{j+m}{2N},\xi\right)e^{2\pi i(m-j)(\xi-\xi_0)}\ddd \xi +O(N^{-\infty})\\
 &=C_N\sum_{|m-j|\le M}\frac{1}{N^d}\overline{g(m/N)}g(j/N)\int_{\T^d}a\left(\frac{j+m}{2N},\xi+\xi_0\right)e^{2\pi i(m-j)\xi}\ddd \xi +O(N^{-\infty})
 \end{split}
\end{align}
But when $|j-m|\le M$, we have
$$\left|a\left(\frac{j+m}{2N},\xi+\xi_0\right)-a\left(\frac{m}{N},\xi+\xi_0\right)\right|\le MN^{-1}\|\dd a\|_{L^{\infty}}\le CMN^{-(1-\gd)}$$ and $$\left|g\left(\frac{m}{N}\right)-g\left(\frac{j}{N}\right)\right|\le CMN^{-1},$$ so \begin{equation}\begin{gathered}\label{eq:term_error}\overline{g(m/N)}g(j/N)\int_{\T^d}a\left(\frac{j+m}{2N},\xi+\xi_0\right)e^{2\pi i(m-j)\xi}\ddd \xi\\=|g(m/N)|^2\int_{\T^d}a\left(\frac{m}{N},\xi+\xi_0\right)e^{2\pi i(m-j)\xi}\ddd \xi+O(MN^{-(1-\gd)})\end{gathered}\end{equation}
As there are $O(N^dM^d)$ pairs $(j,m)$ with $|m-j|\le M$, the errors (\ref{eq:term_error}) sum in (\ref{eq:trace_sum}) to give
\begin{align}\label{eq:trace_split_sum}
\begin{split}
 &\tr_{H_N}(\rho_N\Op_N(a))\\&=\sum_{m\in(\Z/N\Z)^d}\frac{1}{N^d}|g(m/N)|^2\sum_{j:|m-j|\le M}\int_{\T^d}a\left(\frac{m}{N},\xi+\xi_0\right)e^{2\pi i(m-j)\xi}\ddd \xi +O(M^{d+1}N^{-(1-\gd)})
 \end{split}
\end{align}
where we have used Remark \ref{r:normalization} to remove the $C_N$.
The inner sum in (\ref{eq:trace_split_sum}) is just a partial Fourier series, and as $\g>\gd$, it differs from the full Fourier series by order $O(N^{-\infty})$. This gives
\begin{equation}\label{eq:trace_simple_sum}
 \tr_{H_N}(\rho_N\Op_N(a))=\sum_{m\in(\Z/N\Z)^d}\frac{1}{N^d}|g(m/N)|^2a\left(\frac{m}{N},\xi_0\right)+O(M^{d+1}N^{-(1-\gd)}).
\end{equation}
But (\ref{eq:trace_simple_sum}) is just a Riemann sum, and we have the estimate $$\left|\sum_{m\in(\Z/N\Z)^d}\frac{1}{N^d}|g(m/N)|^2a\left(\frac{m}{N},\xi_0\right)-\int_{\T^d}a(x,\xi_0)|g(x)|^2\ddd x\right|=O(N^{-(1-\gd)}).$$
Therefore
\begin{equation*}
 \left|\tr_{H_N}(\rho_N\Op_N(a))-\int_{\T^d}a(x,\xi_0)|g(x)|^2\ddd x\right|\le O(M^{d+1}N^{-(1-\gd)})=(N^{(d+1)\g-(1-\gd)})
\end{equation*}
for all $\g>\gd$.
\end{proof}

\begin{proof}[Proof of Proposition \ref{prop:cs_beta_int}]
Let $a\in S_{\gd}(\T^{2d})$ and let $A=\Op_N(a)$ given in the basis $(Q_n)$. Once again choose $M=\lfloor N^{\g}\rfloor$ for $\g\in(\gd,1)$. We have by (\ref{eq:mj_far}) and (\ref{eq:mj_close}) that (writing $\psi:=\psi_N$ for convenience and letting subscripts refer to the indices)
\begin{align}\label{eq:trace_beta_first_sum}
\begin{split}
 \tr(\rho_N\Op_N(a))&=C_N'\sum_{m,j\in(\Z/N\Z)^d}\overline{\psi_m}A_{mj}\psi_j\\
 &=C_N'\sum_{|m-j|\le M}\overline{\psi_m}\int_{\T^d}a\left(\frac{j+m}{2N},\xi\right)e^{2\pi i(m-j)\xi}\ddd \xi\psi_j+O(N^{-\infty}).
\end{split}
\end{align}
But as $g\in\cS(\R^d)$, it has super-polynomial decay, and we get
\begin{align}\label{eq:vector_app}
\begin{split}
\psi_m&=N^{-\frac{d}{2}}\sum_{k\in\Z^d}g_N\left(\frac{m}{N}+k\right)\\
&=N^{\frac{d(\gb-1)}{2}}\sum_{k\in\Z^d}g\left(N^{\gb}\left(\frac{m}{N}+k-x_0\right)\right)e^{2\pi i(m+kN)\cdot\xi_0}\\
&=N^{\frac{d(\gb-1)}{2}}g\left(N^{\gb}\left(\frac{m}{N}-x_0\right)\right)e^{2\pi im\cdot\xi_0}+O(N^{-\infty})
\end{split}
\end{align} by throwing away all but the first term in the sum.
In fact, for many $m$ we may even throw out the first term. Given any $\epsilon>0$, let $R_{\eps}=\lfloor N^{1-\gb+\eps}\rfloor$. Then if $|m-Nx_0|\le R_{\eps}$, we have by (\ref{eq:vector_app}) that
\begin{equation}\label{eq:vector_tail}
 |\psi_m|=O(N^{-\infty}).
\end{equation}
We then approximate terms in the summand as in Proposition \ref{prop_cs_beta0}, and get when $|j-m|\le M$ that $$\left|a\left(\frac{j+m}{2N},\xi+\xi_0\right)-a\left(\frac{m}{N},\xi+\xi_0\right)\right|\le MN^{-1}\|\dd a\|_{L^{\infty}}\le CMN^{-(1-\gd)}$$ and $$\left|g\left(N^{\gb}\left(\frac{m}{N}-x_0\right)\right)-g\left(N^{\gb}\left(\frac{j}{N}-x_0\right)\right)\right|\le CMN^{-(1-\gb)}$$
so in that case (using $\gd\le \gb$)
\begin{equation}\label{eq:beta_summand_error}
 \begin{gathered}
 \overline{g\left(N^{\gb}\left(\frac{m}{N}-x_0\right)\right)}g\left(N^{\gb}\left(\frac{j}{N}-x_0\right)\right)\int_{\T^d}a\left(\frac{j+m}{2N},\xi+\xi_0\right)e^{2\pi i(m-j)\xi}\ddd \xi\\=\left|g\left(N^{\gb}\left(\frac{m}{N}-x_0\right)\right)\right|^2\int_{\T^d}a\left(\frac{m}{N},\xi-\xi_0\right)e^{2\pi i(j-m)\xi}\ddd \xi+O(MN^{1-\gb})
 \end{gathered}
\end{equation}
But by (\ref{eq:trace_beta_first_sum}), (\ref{eq:vector_app}), and (\ref{eq:vector_tail}) we have $\tr(\rho_N\Op_N(a))$ to be equal to
\begin{equation}\label{eq:trace_beta_full_sum}
\begin{gathered}
 C_N'N^{d(\gb-1)}\sum_{m:\left|m-Nx_0\right|\le R_{\eps}}\sum_{j:|m-j|\le M}\overline{g\left(N^{\gb}\left(\frac{m}{N}-x_0\right)\right)}g\left(N^{\gb}\left(\frac{j}{N}-x_0\right)\right)\cdots\\\cdots\int_{\T^d}a\left(\frac{j+m}{2N},\xi\right)e^{2\pi i(m-j)(\xi-\xi_0)}\ddd \xi+O(N^{-\infty})
 \end{gathered}
 \end{equation}
The sum in (\ref{eq:trace_beta_full_sum}) has $O(M^dR_{\eps}^d)$ terms, so adding up all the errors given in (\ref{eq:beta_summand_error}) gives
\begin{align*}
 &\tr(\rho_N\Op_N(a))\\
 &=N^{d(\gb-1)}\sum_{m:\left|m-Nx_0\right|\le R_{\eps}}\left|g\left(N^{\gb}\left(\frac{m}{N}-x_0\right)\right)\right|^2\sum_{j:|m-j|\le M}\int_{\T^d}a\left(\frac{m}{N},\xi+\xi_0\right)e^{2\pi i(m-j)\xi}\ddd \xi\\&\quad+O(M^{d+1}N^{(d+1)(\gb-1)}R_{\eps}^d)\\
 &=N^{d(\gb-1)}\sum_{m:\left|m-Nx_0\right|\le R_{\eps}}\left|g\left(N^{\gb}\left(\frac{m}{N}-x_0\right)\right)\right|^2a\left(\frac{m}{N},\xi_0\right)+O(M^{d+1}N^{\gb-1}N^{d\eps})
\end{align*}
where in the last line we approximated the inner sum with a Fourier series as in (\ref{eq:trace_simple_sum}), obtaining an $O(N^{-\infty})$ error, and applied Remark \ref{r:normalization} to ignore the $C_N'$. At this point, we (up to $O(N^{-\infty})$ error) add back in the remaining $m$ terms and compute the Riemann sum to get
\begin{align*}
 &\tr(\rho_N\Op_N(a))\\
 &=N^{d(\gb-1)}\sum_{m\in(\Z/N\Z)^d}\left|g\left(N^{\gb}\left(\frac{m}{N}-x_0\right)\right)\right|^2a\left(\frac{m}{N},\xi_0\right)+O(M^{d+1}N^{\gb-1}N^{d\eps})\\
 &=\int N^{d\gb}|g(N^{\gb}(x-x_0))|^2a(x,\xi_0)\ddd x+O(N^{\gb-1})+O(N^{\gd-1})+O(M^{d+1}N^{\gb-1}N^{d\eps})\\
 &=\int N^{d\gb}|g(N^{\gb}(x-x_0))|^2a(x,\xi_0)\ddd x+O(M^{d+1}N^{\gb-1}N^{d\eps})
\end{align*}
with the last line using that $\gd\le\gb$. We substitute and use $\int|g|^2=1$ to obtain
\begin{equation*}
\begin{gathered}
 \left|\int_{\T^d} N^{d\gb}|g(N^{\gb}(x-x_0))|^2a(x,\xi_0)\ddd x-a(x_0,\xi_0)\right|\\\le\left|\int_{\R^d}|g(y)|^2\left(a(N^{-\gb}y+x_0,\xi_0)-a(x_0,\xi_0)\right)dy\right|+\left|\int_{\R^d\setminus N^{\gb}\T^d}|g(y)|^2\ddd y\right|\\
 \le C\left|\int_{\R^d}|g(y)|^2N^{-\gb}N^{\gd}|y|\ddd y\right|+O(N^{-\infty})\le O(N^{\gd-\gb}).
\end{gathered}
\end{equation*}
This gives $$|\tr(\rho_N\Op_N(a))-a(x_0,\xi_0)|\le O(N^{\gd-\gb})+O(M^{d+1}N^{\gb-1}N^{d\eps})$$ which completes the proof by taking $\eps$ and $\g-\gd$ small enough.
\end{proof}

\section{Estimating Derivatives of a Flow}
\label{s:derivative of flow}

Here we present a proof of Proposition \ref{p:Evolved_Symbol_Bound} which we restate here for convenience. 

\begin{prop}\label{prop:931}Suppose $\phi^t : \mathbb T^{2d} \to \mathbb T^{2d}$ is a smooth dynamical system on $\mathbb T^{2d}$ with Lyapunov exponent $\Lambda$ where $t$ either takes integer values, or values in $\mathbb R_{\ge 0}$. Let $a\in C^\infty (\mathbb{T}^{2d})$. Then for all $\varepsilon >0$ and $\alpha \in \mathbb N^{2d}$, there exists $C = C(\varepsilon,\alpha)$ such that:
\begin{align*}
    |\dd^\alpha (a\circ \phi^t ) (x) | \le C e^{(\Lambda+\varepsilon) t |\alpha|} \| a \| _{C^{|\alpha|}} .
\end{align*}
\end{prop}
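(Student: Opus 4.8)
The plan is to reduce the estimate to a bound on the higher derivatives of the map $\phi^t$ itself, namely to
\begin{equation*}
\|\partial^\gamma\phi^t\|_{L^\infty}\le C_{|\gamma|,\varepsilon}\,e^{|\gamma|(\Lambda+\varepsilon)t}\qquad\text{for all }\gamma,\ \varepsilon>0,
\tag{$\star$}
\end{equation*}
and then to apply the Fa\`a di Bruno formula. Indeed, $\partial^\alpha(a\circ\phi^t)(x)$ is a finite sum, with coefficients depending only on $\alpha$, of terms of the form $(\partial^\beta a)(\phi^t(x))\prod_{i=1}^{p}\partial^{\gamma_i}\phi^t_{j_i}(x)$ with $p=|\beta|\le|\alpha|$, each $|\gamma_i|\ge 1$, and $\sum_{i=1}^{p}|\gamma_i|=|\alpha|$. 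Bounding each factor by $(\star)$ and using $\|\partial^\beta a\|_{L^\infty}\le\|a\|_{C^{|\alpha|}}$, every such term is at most $C_{\alpha,\varepsilon}\|a\|_{C^{|\alpha|}}\prod_i e^{|\gamma_i|(\Lambda+\varepsilon)t}=C_{\alpha,\varepsilon}\|a\|_{C^{|\alpha|}}e^{|\alpha|(\Lambda+\varepsilon)t}$, which is the claim. Throughout we use compactness of $\mathbb T^{2d}$ to pass between the operator norm of the full derivative $D^k\phi^t$ and the sup norms of its components, and we note $\Lambda\ge0$ for the volume-preserving systems $\phi^t=M^t$ and $\phi^t=\exp(tH_p)$, so $\Lambda+\varepsilon>0$.

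It then remains to establish $(\star)$, which I would prove by induction on $k=|\gamma|$. The base case $k=1$ is essentially the definition of the Lyapunov exponent together with compactness: given $\varepsilon>0$ there is a $T$ with $\sup_y|\partial_y\phi^t(y)|\le e^{(\Lambda+\varepsilon)t}$ for $t\ge T$, while on $[0,T]$ joint smoothness of $(t,x)\mapsto\phi^t(x)$ gives a uniform bound; combining these yields $\sup_y|\partial_y\phi^t(y)|\le C_\varepsilon e^{(\Lambda+\varepsilon)t}$ for all $t\ge 0$, and hence also $\|D_y\phi^{\tau}(\phi^s(x))\|\le C_\varepsilon e^{(\Lambda+\varepsilon)\tau}$ uniformly in $x$ and $s$. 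For the inductive step I would treat the two cases of the proposition in parallel; in each the key structural fact is that $D^k\phi^t$ satisfies a \emph{linear} recursion (an ODE in the flow case, a discrete recursion in the iteration case) whose homogeneous part is the equation of first variation and whose inhomogeneous term involves only derivatives of $\phi^t$ of order at most $k-1$.

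In the flow case, differentiating $\tfrac{d}{dt}\phi^t=V(\phi^t)$ in $x$ and applying Fa\`a di Bruno, $\Phi_k^t:=D_x^k\phi^t$ solves $\dot\Phi_k^t=DV(\phi^t)\,\Phi_k^t+R_k^t$, where $R_k^t$ is a finite sum of terms $(D^pV)(\phi^t)\prod_i D_x^{a_i}\phi^t$ with $p\ge 2$, $a_i\ge 1$, $\sum_i a_i=k$ (so each $a_i\le k-1$); by the inductive hypothesis $\|R_k^t\|\le C e^{k(\Lambda+\varepsilon)t}$. The propagator of the homogeneous equation from time $s$ to time $t$ along the orbit of $x$ is exactly $D_y\phi^{t-s}(\phi^s(x))$, of norm $\le C_\varepsilon e^{(\Lambda+\varepsilon)(t-s)}$ by the base case, so Duhamel's formula gives
\begin{equation*}
\|\Phi_k^t\|\le\int_0^t C_\varepsilon e^{(\Lambda+\varepsilon)(t-s)}\,C e^{k(\Lambda+\varepsilon)s}\,ds\le C e^{(\Lambda+\varepsilon)t}\int_0^t e^{(k-1)(\Lambda+\varepsilon)s}\,ds\le C_{k,\varepsilon}\,e^{k(\Lambda+\varepsilon)t},
\end{equation*}
using $(k-1)(\Lambda+\varepsilon)>0$ for $k\ge 2$. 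In the iteration case, writing $\psi=\phi^1$ so $\phi^n=\psi^{\circ n}$, the identity $\psi^{\circ n}=\psi\circ\psi^{\circ(n-1)}$ and Fa\`a di Bruno give $D^k\psi^{\circ n}=D\psi(\psi^{\circ(n-1)})\,D^k\psi^{\circ(n-1)}+\widetilde L_k^{(n)}$, where $\widetilde L_k^{(n)}$ collects the partitions with at least two blocks and hence satisfies $\|\widetilde L_k^{(n)}\|\le C e^{k(\Lambda+\varepsilon)(n-1)}$ by the inductive hypothesis. Unrolling this recursion down to $n=1$ and using the chain-rule identity $D\psi(\psi^{\circ(n-1)}(x))\cdots D\psi(\psi^{\circ j}(x))=D\psi^{\circ(n-j)}(\psi^{\circ j}(x))$ together with the base-case bound on the latter, one finds
\begin{equation*}
\|D^k\psi^{\circ n}\|\le C e^{k(\Lambda+\varepsilon)n}+\sum_{j=1}^{n}C_\varepsilon e^{(\Lambda+\varepsilon)(n-j)}\,C e^{k(\Lambda+\varepsilon)(j-1)}\le C_{k,\varepsilon}\,e^{k(\Lambda+\varepsilon)n},
\end{equation*}
the last step being a geometric sum dominated by its $j=n$ term. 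In both cases this is $(\star)$ for $|\gamma|=k$, completing the induction.

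The only real obstacle is in this inductive step, and it is twofold. First, one must organize the Fa\`a di Bruno expansion carefully enough to see that the source terms $R_k^t$ (resp.\ $\widetilde L_k^{(n)}$) involve only derivatives of order $\le k-1$ whose orders sum to exactly $k$; it is this exact-order accounting that makes the product of inductive bounds collapse to $e^{k(\Lambda+\varepsilon)t}$ rather than to something larger. Second, and more substantively, one cannot run the induction through the naive single-step recursion $\|D^k\phi^{n}\|\lesssim\|D\phi\|_{L^\infty}\|D^k\phi^{n-1}\|+\cdots$, since $\|D\phi\|_{L^\infty}$ may exceed $e^{\Lambda+\varepsilon}$ (the Lyapunov exponent controls only the asymptotic growth of the \emph{iterated} differential, not each individual step). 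The remedy is precisely to unroll the recursion, or in the flow case to use Duhamel's formula, so that the only expansion factors appearing are $\|D\phi^{\tau}\|_{L^\infty}$ over long time intervals, for which the uniform base-case bound applies. Everything else — finiteness of the Fa\`a di Bruno expansion, the geometric summation, and uniformity of the propagator bound over base points — is routine.
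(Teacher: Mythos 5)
Your proof is correct and takes a related but noticeably cleaner route than the paper's. The paper works with the full vector $\bigl(D(a\circ\phi^n),\dots,D^k(a\circ\phi^n)\bigr)$, writes a single-step upper-triangular transfer-matrix recursion, and bounds the unrolled product by interpreting its tensor contractions combinatorially as paths on a $k\times(n+1)$ grid, dominating the resulting sum by powers of an explicit scalar matrix $V$; the continuous-time statement is recovered only at the very end by reducing $t$ to its integer part. You instead factor the argument through a standalone bound $(\star)$ on $\|D^k\phi^t\|_{L^\infty}$ (the paper obtains this implicitly by taking $a$ to be coordinate functions), invoke Fa\`a di Bruno once at the end, and in the inductive step isolate only the top derivative $D^k\phi^t$, treating all lower-order contributions as source terms in a linear cocycle recursion. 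Your Duhamel treatment of the flow case --- integrating the source against the variational propagator $D\phi^{t-s}(\phi^s(x))$ --- is a genuinely different device from the paper's discrete-then-convert strategy, and arguably more transparent. Both proofs ultimately rest on the same pivot, which you articulate well: unroll the recursion so the base-case Lyapunov bound is applied to full-length orbit segments $D\phi^{n-j}(\phi^j(x))$ rather than to each individual step, and then observe that the resulting geometric sum collapses because the exact-order bookkeeping forces the exponent to be precisely $k\lambda n$. Both arguments also implicitly require $\Lambda+\varepsilon>0$ for that collapse (the paper's $(1-e^{-\lambda})^{-1}$ and the diagonalization of $V$ need it too); you flag this correctly via volume preservation.
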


Here $C^k$ is the norm defined as
\begin{align*}
    \| a \|_{C^K } = \sum_{|\alpha |\le k} \| \dd^\alpha a\|_{L^\infty(\T^{2d})}.
\end{align*}

This proof of Proposition \ref{prop:931} essentially follows \cite[\S 5.2]{dyatlov2022} but in the simpler setting of the torus. To our knowledge, such a proof in this setting is not present in the current literature. In an effort to improve readability, we include examples throughout the proof, at the cost of being rather voluminous.

\begin{proof}

We prove Proposition \ref{prop:931} by strong induction on the order of $\alpha$. We trivially have the claim when $|\alpha | = 0$. If $|\alpha | = 1$, let $i$ be such that $\partial ^\alpha = \partial_i$. Then we have:
    \begin{align*}
        \partial^\alpha _x  ( a \circ \phi^t)(x)  = \sum_{j = 1}^{2d} (\partial_j a)\circ(\phi^t (x))\partial _i (\phi_t(x))_j
    \end{align*}
    where $(\phi^t(x))_j$ is the $j$th component of $\phi^t(x)$. By the definition of the Lyapunov exponent, for any $\e > 0$, there exists $C = C(\e)> 0$ such that:
    \begin{align*}
        |\partial_i \phi^t (x) | \le C e^{(\Lambda + \e)t},
    \end{align*}
    so that
    \begin{align*}
        |\partial^\alpha _x  ( a \circ \phi^t)(x)| \le C e^{(\Lambda + \e)t}  \| a \|_{C^1}.
    \end{align*}
To ease notation in the remained of the proof, define $\lambda := \Lambda + \varepsilon$.

Before proving the inductive step, we show how to go from $\alpha$ of order $1$ to order $2$. We will use the following notation for Jacobians and tensors. If $f  : \R \to \R^{2d}$, then $Df$ will be the vector with $i$th component $\dd_i f$. Note that $Df$ will be a $(0,1)$ tensor. We define $D^2f$ as the $(0,2)$ tensor with $i,j$ entry $(D^2 f )_{i,j} = \dd_i \dd_j f$. If $\phi : \R^{2d} \to \R^{2d}$, then we let $D\phi$ be the $(1,1)$ tensor with $i,j$ entry $(D\phi)_{j}^i = \dd_j \phi_i $. We let $D^2 \phi$ denote the $(1,2)$ tensor with $i,j,k$ entry $(D^2\phi )_{j,k}^i = (\dd_j \dd_k \phi_i)$. 

Observe that:
\begin{align*}
(D(a \circ \phi^n) (x))_i=  \dd_i (a\circ \phi^n)(x) = \sum_j \dd_j ((a\circ \phi^{n-1}) \circ \phi^1(x)  ) (\dd_i \phi_j^1)(x)
\end{align*}
so that
\begin{align*}
D(a \circ \phi^n)(x) =( D(a\circ \phi^{n-1}) \circ \phi^1(x) ) D\phi^1(x) 
\end{align*}
where the product of two tensors is the contraction of the upper and lower components. That is, $( D(a\circ \phi^{n-1}) \circ \phi^1 )$ is a $(0,1)$ tensor which can be written $A_{\mu}$, and $D\phi^1 $ is a $(1,1)$ tensor which can be written $B_{\gamma}^\nu$. In this case $AB$ is the $(0,1)$ tensor contracting the lower component of $A$ with the upper component of $B$: $(AB)_\gamma = \sum_\nu A_\nu B^\nu_\gamma$.

Similarly
\begin{align*}
(D^2 (a\circ \phi^n) (x))_{i,j}&= \dd_i \dd_j (a\circ \phi^n) (x)\\
&= \sum _{k,\ell} ((\dd_k \dd_\ell (a\circ \phi^{n-1}))\circ \phi^1(x) )  \dd_j\phi^1_\ell(x) \dd_i \phi^1_k(x) \\
& \quad + \sum_k (\dd_k (a \circ \phi^{n-1}) \circ \phi^1(x) )\dd_j \dd_i \phi^1_k(x)
\end{align*}
so that 
\begin{align*}
    &D^2 (a\circ \phi^n )(x) \\
    &\quad =( D^2 (a\circ \phi^{n-1}) \circ \phi^1(x) )  (D\phi^1(x) \otimes D\phi^1(x)) + (D(a \circ \phi^{n-1}) \circ \phi^1(x)) D^2 \phi^1(x)
\end{align*}

Therefore
\begin{align*}
        \underbrace{\mat{  D( a \circ \phi ^{n+1})(x)  \\ D^2 ( a \circ \phi ^{n+1})(x)}^t}_{:=  A_{n+1}(x)^t} =  \mat{D(a \circ \phi^n)\circ \phi^1 (x) \\ D^2  (a \circ \phi^n)\circ \phi^1 (x)} ^t\underbrace{\mat{ D \phi^1 (x) & D^2 \phi^1 (x) \\ 0 & D\phi^1 (x) \otimes D\phi^1 (x)}}_{:= M( x)}  ,
\end{align*}
so that
\begin{align}\begin{split} \label{eq:1024}
    A_n(x) &= A_{n-1}(\phi^1 (x)) M(x)= A_{n-2} (\phi^2(x)) M(\phi^1 (x)) M(x))\\
    &=  A_0 (\phi^n(x)) M(\phi^{n-1}(x)) M(\phi^{n-2}(x)) \cdots M(x). \end{split}
\end{align}
By the upper triangular structure of $M$, it is easy to check that
\begin{align}
     M(\phi^{n-1}(x)) M(\phi^{n-2}(x)) \cdots M(x)= \mat{M_1(x) & M_2(x) \\ 0 & M_3(x)  } \label{eq:1030}
\end{align}
where:
\begin{align*}
    M_1(x) &:=\prod_{j=1}^{n}( (D\phi^1 )\circ ( \phi^{n-j}( x)) ), &&    M_2(x) :=  \sum_{j=1}^n (M_2)_j(x),   \\
    M_3(x) &= \prod_{j=1}^{n}( ((D\phi^1 )\circ \phi^{n-j}(x) )\otimes ((D\phi^1 )\circ \phi^{n-j}(x) )  )
\end{align*}
with $(M_2)_j (x)$ defined as
\begin{align}
     \left(\prod_{i=1}^{j-1}( (D\phi^1 )\circ ( \phi^{n-i}( x)) )\right) ((D^2 \phi^1 )\circ \phi^{n-j}(x))\left(\prod_{i = j+1}^{n}( (D\phi^1 )\circ ( \phi^{n-i}( x)) )^{\otimes 2}\right) \label{eq:m2j}
\end{align}
where $a^{\otimes 2} : = a \otimes a$. 

We then have, by \eqref{eq:1024} and \eqref{eq:1030},
\begin{align}\label{eq:1035}
\begin{split}
    |D^2 (a \circ \phi^n)(x) |&= |(Da\circ \phi^n (x))M_2(x) + (D^2 a \circ \phi^n(x) )M_3(x)|\\
    &\le C  \|a \|_{C^1} |M_2(x)| +  C\|a \|_{C^2} | M_3(x) |\end{split}
\end{align}
where the absolute value of a tensor denotes the maximum of the absolute value of all entries.

But now observe that 
\begin{align*}
    \prod_{i=1}^{j-1}( (D\phi^1 )\circ ( \phi^{n-i}( x)) &= D(\phi^{j-1}) \circ \phi^{n-j+1}(x),\\
    \prod_{i = j+1}^{n}( (D\phi^1 )\circ ( \phi^{n-i}( x)) )^{\otimes 2} &=     (D(\phi^{n-j})(x))^{\otimes 2}
\end{align*}
so that
\begin{align}
| M_3(x) | &= | D(\phi^n )(x) |^2 \le C e^{2 \lambda n} \label{eq:1044}
\end{align}
by the inductive hypothesis and setting $a$ to be coordinate functions. And we also have, by \eqref{eq:m2j}, that
\begin{align*}
| (M_2 )_j (x)| &\le C e^{\lambda(j-1)} e^{2\lambda(n-j)}\| D^2 \phi^1 \|_{L^\infty} =C' e^{2\lambda n -\lambda j-\lambda}
\end{align*}
where we use that $\phi^1$ is a smooth function on a compact set so that $\|D^2 \phi^1 \|_{L^\infty}<\infty$. But we stress that this norm only comes up \textit{once} in each $(M_2)_j$ term so that the constant at the end is \textit{independent} of $n$. We therefore have that:
\begin{align}\label{eq:1048}
|M_2(x) | \le \sum_{j=1}^n |(M_2)_j(x)| \le  C  e ^{2\lambda n - \lambda} \sum_{j=1}^n e^{-\lambda j }\le C e^{2\lambda n }(1- e^{-\lambda})^{-1}
\end{align}
We can absorb the $(1- e^{-\lambda})^{-1}$ term into the constant $C$ to get, using \eqref{eq:1035}, \eqref{eq:1044}, and \eqref{eq:1048}, that
\begin{align*}
     |D^2 (a \circ \phi^n)(x) | \le C \| a \|_{C^2} e^{2 \lambda n}.
\end{align*}
Now if the flow is continuous in time and $t\in \R _{\ge 0}$, we let $n\in \N$ be such that $n \le t \le n+1 $ and observe that:
\begin{align*}
    |D^2 (a\circ \phi^t) (x)| = |D^2 (a\circ \phi^{t-n} \circ \phi^n )(x) |\le C \| a \circ \phi^{t-n} \|_{C^2} e^{2\lambda n}.
\end{align*}
Note that $s:= t-n \in [0,1]$ which is a compact set, so that:
\begin{align}
    \| a \circ \phi^{t-n} \|_{C^2} \le \|a \|_{C^2} \max_{s\in [0,1]}\| \phi^s \|_{C^2}.
\end{align}
We can then absorb $\max_{s\in [0,1]}\| \phi^s \|_{C^2}$ into the constant $C$ to get the result for continuous time.

Now we present the general inductive step, but the reader should keep in mind that morally the idea is the same as above, but with more notation. Suppose we have for all $k'\le k-1$ and $|\alpha | = k'$:
\begin{align*}
|    \dd^\alpha (a \circ \phi^t ) (x)| \le C \| a \|_{C^{k'}} e^{k' \lambda t}.
\end{align*}
We then fix $n \in \mathbb N$, and aim to bound $D^{k} (a\circ \phi^n)(x)$. Define $A_n(x)$ as the $1\times k$ vector of tensors whose $i$th entry is $D^i (a \circ \phi^n)(x)$. By writing
\begin{align}
    D^i(a\circ \phi^n) = D^{i-1}( (D  (a \circ \phi^{n-1} ) \circ \phi^1) D\phi^1 )(x), \label{eq:1088}
\end{align}
and repeatedly applying the product rule, we can find a $k \times k$ matrix of tensors, $M(x)$ such that $A_n(x) = A_{n-1}(\phi^1(x)) M(x)$. Note that $M(x)$ will be upper triangular. Indeed, for any $i\le  k$, $D^i(a \circ \phi^n)$ is a linear combination of $D^j (a\circ \phi^{n-1}) \circ \phi^1$, for $j \le i$. Also, for each $i,j = 1, \dots, k$, the $i,j$ entry of $M(x)$ must be a $(i,j)$ tensor because it will be paired with a $(0,i)$ tensor to get a $(0,j)$ tensor. Each component of $(M(x))_{i,j}$ will contain $j$ derivatives of the components of $\phi^1(x)$ with a coefficient depending on $k$. Lastly note that for each $i = 1,\dots, k$, $(M(x))_{i,i} = (D\phi^1 (x))^{\otimes i}$ as this is the term in \eqref{eq:1088} when all derivatives fall on $D(a\circ \phi^{n-1}\circ \phi^1 )$.

We then recursively use the relation of $A_n$ to $A_{n-1}$ to get that:
\begin{align*}
    A_n(x) = A_0 (\phi^n(x))\left( \prod_{j=1}^n M(\phi^{n-j}(x))\right).
\end{align*}
For $i = 1,\dots ,k$, let $M_i$ be the $(i,k)$ entry of $\prod_{j=1}^n M(\phi^{n-j})$, so that:
\begin{align}
    D^k (a\circ \phi^n)(x) = \sum _{j=1}^k (D^j(a)\circ \phi^n(x) ) M_j(x). \label{eq:1077}
\end{align}
It now suffices to provide appropriate estimates of $M_j(x)$. For $i = 1,\dots ,n$ let $B^i : = M(\phi^{n-i})(x)$. Then for each fixed $j = 1, \dots, k$:
\begin{align}
    M_j =  \sum_{\ell_1,\ell_2,\dots, \ell_{n-1}} B^1_{j,\ell_1} B^2_{\ell_1,\ell_2} \cdots B^n_{\ell_{n-1},k}. \label{eq:Mj1045}
\end{align}
Because $B^i$ are upper triangular, this sum is nonzero only if $\ell_m \le \ell_{m+1}$ for each $m = 1,\dots, n-1$, and $j \le \ell_1$. Each term in the sum can be expressed as a path on a grid of $k\times (n+1)$ nodes. Labeling the bottom left node $(1,1)$, every path must begin at $(j,1)$ and end at $(k,n+1)$. The path can only go to the right or up. That is $(m,n)$ can only go to $(m',n+1)$ for $m' \ge  m$ (provided that the path terminates at $(k,n+1)$). The edges of the path give the index of $B^i$ in the sum. That is, if the nodes $(m,i),(n,i+1)$ are on the path, then we include the term $B^i_{m,n}$ in the sum.

For example, if $k = 4$ and $n= 5$, then the following path represents the term $B^1_{1,2}B^2_{2,2}B_{2,2}^3B^4_{2,4}B^5_{4,4}$.
\begin{center}
    \includegraphics[scale = .4]{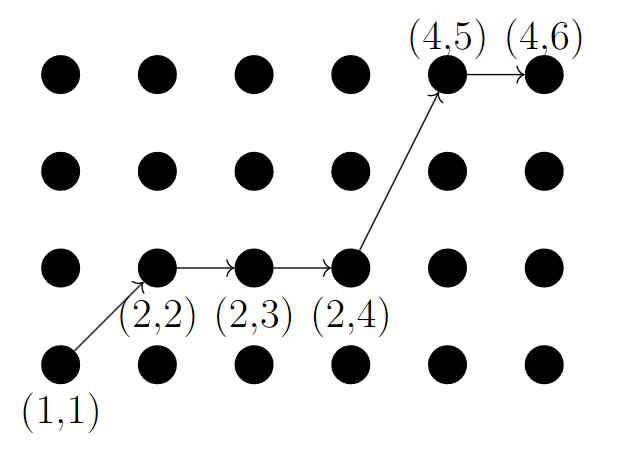}
\end{center}
Each path can be represented as a vector whose $i$th component is $m$ where $(m,i)$ is on the path, e.g. the example path is written as $(1,2,2,2,4,4)$. 

Each path will consist of flat sections (where the $i$ and $i-1$ components of the vector representation are the same) and jumps (where the component of the vector representation is not flat). In the above example, the jumps are at indices $2$ and $5$. The corresponding entry of $B^i$ for the jumps will be individually norm-bounded. Recall that $B^i_{j,\ell}$ will be a $(j,\ell)$ tensor where all terms are derivatives of $\phi^1$. As long as $(j,\ell) \neq (1,k)$, at most $k-1$ derivatives are taken of $\phi^1$. Note that $B_{1,k}$ can only appear once in any term in the sum \eqref{eq:Mj1045}, and this is bounded by the $C^k$ norm of $\phi^1$. We therefore can apply the inductive hypothesis to get that $|B^i_{j,\ell}(x)| \le C e^{\lambda \ell}$. Next, we bound the flat sequences. Suppose such a sequence is $\prod_{i=j_1}^{j_2}B_{\ell,\ell}^i$. Recall that $B_{\ell,\ell}^i= (D\phi^1)^{\otimes \ell}\circ (\phi^{n-i}(x))$, so that
\begin{align}
   \left| \prod_{i=j_1}^{j_2}B_{\ell,\ell}^i (x)\right|= | (D(\phi^{j_2-j_1+1})\circ \phi^{n-j_2+1} (x))^{\otimes \ell}|\le C e^{\ell(j_2-j_1+1)}. \label{eq:1092}
\end{align}
This follows by the inductive hypothesis by setting $a$ to be coordinate functions. Now suppose we have a term of the sum \eqref{eq:Mj1045} whose path has a vector representation $(a_1,a_2,\dots ,a_{n+1}) $ with jumps at indices $\ell_1,\ell_2,\dots, \ell_s$. We assume that there is a jump at index $2$ and the path ends in a flat section (as in the example), but all cases can be treated similarly. We can then bound this term of the sum (using \eqref{eq:1092}) by:
\begin{align*}
   & | B_{a_{\ell_1 - 1},a_{\ell_1}}^{\ell_1}| \cdots | B_{a_{\ell_s - 1},a_{\ell_s}}^{\ell_s}| \left | \prod_{i =\ell_1+1}^{\ell_2-1}B^i_{a_{\ell_1},a_{\ell_1}} \right | \cdots \left| \prod_{i = \ell_{s}+1}^{n+1}B^i_{a_{\ell_{s}},a_{\ell_{s}}} \right |\\
    &\le Ce^{\lambda(a_{\ell_1} + \cdots + a_{\ell_s})} e^{\lambda a_{\ell_1}  (\ell_2 - \ell_1 -1)} \cdots e^{\lambda a_{\ell_{s}}(n+1 - \ell_{s}-1)}\\
    &= C e^{\lambda (a_{\ell_1}  (\ell_2 - \ell_1 ) + \cdots + a_{\ell_{s-1}}(\ell_s - \ell_{s-1})  + a_{\ell_s} (n+1-\ell_s))}\\
    &= C e^{\lambda (a_2 + \cdots + a_{n+1})}.
\end{align*}
The last equality follows by noting that $\ell_{i+1} - \ell_i$ is the length of each flat section and is where the sequence has values $a_{\ell_i}$. Therefore:
\begin{align}
    |M_j(x) | \le C \left| \sum_{(a_n) \in \mathcal{A}_{j,k}}  e^{\lambda (a_2 + \dots +a_{n+1})}  \right| \label{eq:1103}
\end{align}
where $\mathcal{A}_{j,k}$ is the set of all sequences $(a_n)$ of length $n+1$ such that $a_i \in \mathbb Z_{ >0}$, $a_1= j$, $a_{n+1}= k$, and $a_i \le a_{i+1}$. Let $V$ be the $k\times k$ upper triangular matrix such that $V_{\ell,m} =  e^{m \lambda} $ for $\ell\le m$ and zero otherwise. Then by the exact same argument for computing products of upper triangular matrices, we see that
\begin{align}
   (V^n)_{j,k} =  \sum_{(a_n) \in \mathcal{A}_{j,k}}  e^{\lambda (a_2 + \dots +a_{n+1})} . \label{eq:1107}
\end{align}
Because the eigenvalues of $V$ are distinct, $V$ is diagonalizable. We can then write the entries of $V^n$ as linear combinations (with coefficients independent of $n$) of the eigenvalues of $V$ raised to the $n$th power. That is there exist $c_{j,\ell,k} \in \mathbb R$ independent of $n$ such that
\begin{align*}
    (V^n)_{j,k} = \sum _{\ell=1}^{k} e^{\lambda \ell n} c_{j,\ell,k}.
\end{align*}
But then we immediately see that there exists a $C = C(k)>0$ such that $|(V^n)_{j,k}| \le C e^{\lambda k n}$ for $j = 1, \dots ,k$. We therefore have that $|M_j(x) | \le C e^{ n \lambda k}$, so that \eqref{eq:1077} becomes:
\begin{align*}
    |D^k(a \circ \phi^n ) (x) | \le C \| a \|_{C^k} e^{n\lambda k}.
\end{align*}
Applying the exact same argument at the end of the case $k=2$ to go from discrete time to continuous time completes the proof.

\end{proof}

\printbibliography

\end{document}